\newcommand{\supp}{\operatorname{supp}}
\newcommand{\dist}{\operatorname{dist}}
\newcommand{\N}{{\mathbb{N}}}
\newcommand{\R}{{\mathbb{R}}}
\newcommand{\C}{{\mathbb{C}}}
\renewcommand{\S}{{\mathbb{S}}}
\DeclareFontFamily{U}{mathx}{\hyphenchar\font45}
\DeclareFontShape{U}{mathx}{m}{n}{
      <5> <6> <7> <8> <9> <10>
      <10.95> <12> <14.4> <17.28> <20.74> <24.88>
      mathx10
      }{}
\DeclareSymbolFont{mathx}{U}{mathx}{m}{n}
\DeclareMathAccent{\widecheck}{0}{mathx}{"71}
\renewcommand\i{\mathrm{i}}
\newcommand{\p}{{\mathrm p}}
\renewcommand{\c}{{\mathrm c}}
\newcommand{\e}{{\mathrm e}}
\newcommand{\ess}{{\mathrm {ess}}}
\renewcommand{\d}{{\mathrm d}}
\newcommand{\pupo}{{\mathrm {pp}}}
\renewcommand{\Re}{\operatorname{Re}}
\renewcommand{\Im}{\operatorname{Im}}
\DeclarePairedDelimiter\inp\langle\rangle
\newcommand\paro[2][]{#1  ( #2#1 )}
\newcommand\parb[2][]{#1 \big ( #2#1\big )}
\newcommand\parbb[2][]{#1 \Big ( #2#1\Big )}
 \newcommand{\pp}{{\mathrm {pp}}}
\newcommand{\mand}{\text{ \,and\, }}
\DeclarePairedDelimiter\ket{\lvert}{\rangle}
\DeclarePairedDelimiter\bra{\langle}{\rvert}
\DeclareMathOperator*{\slim}{s-lim}
\DeclareMathOperator*{\wlim}{w-lim}
\DeclareMathOperator*{\wv2lim}{{w-\widetilde{\mathcal H}}-lim}
\DeclareMathOperator*{\vLlim}{{\mathcal L(\mathcal H)}-lim}
\DeclareMathOperator*{\swslim}{s- w^\star-lim}
\DeclarePairedDelimiter\abs\lvert\rvert
\DeclarePairedDelimiter\norm\lVert\rVert
\DeclarePairedDelimiter\set{\{}{\}}
\newcommand{\brT}{{\breve T}}
\newcommand{\bre}{{\breve \epsilon}}
\newcommand{\brp}{{\breve \psi}}
\newcommand{\bD}{{\mathbf D}}
\newcommand{\bY}{{\mathbf Y}}
\newcommand{\bX}{{\mathbf X}}
\newcommand{\vA}{{\mathcal A}}
\newcommand{\vB}{{\mathcal B}}
\newcommand{\vE}{{\mathcal E}}
\newcommand{\vG}{{\mathcal G}}
\newcommand{\vH}{{\mathcal H}}
\newcommand{\vL}{{\mathcal L}}
\newcommand{\vN}{{\mathcal N}}
\newcommand{\vO}{{\mathcal O}}
\newcommand{\vT}{{\mathcal T}}
\theoremstyle{plain}
\newtheorem{thm}{Theorem}[section]
\newtheorem{defn}[thm]{Definition} 
\newtheorem{proposition}[thm]{Proposition}
\newtheorem{lemma}[thm]{Lemma} \newtheorem{corollary}[thm]{Corollary}
\newtheorem{cond}[thm]{Condition}
\theoremstyle{definition}
\newtheorem{remarks}[thm]{Remarks}
\newtheorem*{remarks*}{Remarks}
\newtheorem*{remark*}{Remark}
\numberwithin{equation}{section}
\title {Stationary completeness:  the $N$-body
   short-range case}
\thanks{
Supported by DFF grant nr.\ 8021-00084B}
\author{E. Skibsted} \address[E. Skibsted]{Institut for Matematik\\
Aarhus Universitet\\ Ny Munkegade 8000 Aarhus C, Denmark}
\email{skibsted@math.au.dk}
\begin{document}

\begin{abstract} 
For a general class of $N$-body  Schr\"odinger operators with
short-range  pair-potentials the wave and
   scattering matrices as well as the restricted wave operators are
   all defined at any    non-threshold energy. This holds 
   without imposing  any a priori decay condition on  channel
   eigenstates and even for models including  long-range
   potentials of Derezi\'nski-Enss type. In
   this paper we improve for   short-range  models on the known \emph{weak continuity}
   properties    in that we show that \emph{all}  non-threshold
   energies are \emph{stationary complete}, resolving  in this  case a conjecture
   from \cite {Sk3}. A consequence is that the above
   scattering quantities depend \emph{strongly continuously} on the energy
   parameter  at \emph{all} non-threshold energies  (improving on 
   previously almost
   everywhere proven  properties). Another consequence is that the
   scattering matrix is unitary at any such  energy. As a side result
   we obtain  a new  and    purely stationary  proof of asymptotic completeness for $N$-body
   short-range systems. 
  \end{abstract}

\allowdisplaybreaks

\maketitle

\medskip
\noindent
Keywords: $N$-body Schr\"odinger operators; short-range stationary  scattering
theory; scattering and wave matrices; minimum 
generalized eigenfunctions.

\medskip
\noindent
Mathematics Subject Classification 2010: 81Q10,  35P05.
\tableofcontents

\section{Introduction}\label{sec:Introduction}

We address a  conjecture of \cite{Sk3} for the stationary
scattering theory of $N$-body systems of quantum particles interacting
with long-range
   potentials of Derezi\'nski-Enss type
long-range pair-potentials. More precisely we focus on   the conjecture
in the short-range case, i.e. for models  defined by 
short-range pair-potentials only.  In this setting    we
will  show that indeed 
 \emph{all}
non-threshold energies are \emph{stationary complete}, resolving  the conjecture
  to the affirmative.

\subsection{The standard   $N$-body short-range model, results}\label{subsec: Atomic 3-body model}
Consider a system of  $N$  (possibly charged) particles, $i=1,\dots,N$,  of dimension $n\ge 1$
interacting by pair-potentials 
\begin{equation*}
	V_{ij} (x_i -x_j)=\vO(\abs{x_i -x_j}^{-1-\epsilon}), \quad \epsilon>0.
\end{equation*} Here $x_j\in\R^n$ is the position of particle $j$, and denoting   $m_j$  the corresponding mass, the 
Hamiltonian  reads 
\begin{equation*}
H=-\sum_{j = 1}^N \frac{1}{2m_j}\Delta_{x_j} + \sum_{1 \le i<j \le N} V_{ij} (x_i -x_j).
\end{equation*}

 This Hamiltonian $H$ is regarded as a self-adjoint operator
on $L^2(\bX)$, where $\bX$ is the $(N-1)n$ dimensional real vector
space  $ \{ x=(x_1,\dots ,x_N)\mid\sum_{j=1}^{N} m_j x_j = 0\}$.
Let $\vA$  denote the set of all cluster
decompositions of the system. The notation $a_{\max}$ and
$a_{\min}$ refers to the $1$-cluster and $N$-cluster decompositions,
respectively.
  Let for $a\in\vA$ the notation  $\# a$ denote the number of
clusters in $a$.
For $i,j \in\{1, \dots, N\}$, $i< j$, we denote by $(ij) $ the
$(N-1)$-cluster decomposition given by letting $C=\{i,j\}$ form a
cluster and the other  particles form  singletons. We write $(ij) \leq a$ if $i$ and $j$ belong to the same cluster
in $a$.   More general, we write $b\leq  a$ if each cluster of $b$
is a subset of a cluster of $a$. If $a$ is a $k$-cluster decomposition, $a= (C_1, \dots, C_k)$,
we let
\begin{equation*}
\bX^a = \set[\big]{ x\in\bX\mid \sum_{l\in C_j } m_l x_l = 0,  j = 1, \dots,
k}=\bX^{C_1}\oplus\cdots \oplus\bX^{C_k},
\end{equation*}
and
\[
\bX_a  =  \set[\big]{ x\in\bX\mid  x_i = x_j \mbox{ if } i,j \in C_m  \mbox{ for some }
m \in \{ 1, \dots, k\}  }.
\]
 Note that $b\leq a\Leftrightarrow \bX^b\subseteq\bX^a$, and that the
 subspaces  $\bX^a$ and $\bX_a$  define  an orthogonal decomposition
 of  $\bX$
equipped  with
the quadratic form
$q(x)=\Sigma_j \,2m_j|x_j|^2,  \, x\in {\bX}$.
 Consequently any  $x\in \bX$ decomposes orthogonally as 
 $x =x^{a} + x_{a}$ with $x^a =\pi^a x\in\bX^a$ and $x_a =\pi_a x\in
 \bX_a$. The subspaces $\bX^a$ and $\bX_a$ are usually referred to as 
 `intra- and inter-cluster configuration spaces', respectively.

With these notations the  $N$-body Schr\"odinger operator
  takes the form $ H = H_0 + V$, 
where  $H_0=p^2$ is (minus)  the Laplace-Beltrami operator on   the
Euclidean space  $(\bX, q)$ and
$V=V(x) =  \sum_{b=(ij)\in\vA} V_{b}(x^{b}) $ with $ V_b (x^b) =
V_{ij} (x_i - x_j)$ for the  $(N-1)$-cluster decomposition
$b=(ij)$. Note for example  that 
\begin{equation*}
  x^{(12)}=\parb{\tfrac{m_2}{m_1+m_2}(x_1-x_2),-\tfrac{m_1}{m_1+m_2}(x_1-x_2),0,\dots,0}.
\end{equation*}

More generally for any cluster 
decomposition $a\in\vA$ we introduce a   Hamiltonian $H^a$ as follows. 
For $a=a_{\min}$  we define
$H^{a_{\min}}=0$ on $\mathcal H^{a_{\min}}:=\mathbb C. $
For $a\neq a_{\min}$ we introduce the potential 
\begin{equation*}
V^a(x^a)=\sum_{b=(ij)\leq a} V_{b}(x^b);\quad
x^a\in \bX^a.
\end{equation*} 
Then 
\begin{equation*}
 H^a=-
\Delta_{x^a} +V^a(x^a)=
(p^a)^2 +V^a\ \ 
\text{on }\mathcal H^a=L^2(\bX^a).
\end{equation*}

A channel $\alpha$ is by
definition given as $\alpha =(a,\lambda^\alpha, u^\alpha)$, where
$a\in\vA'=\vA\setminus \{a_{\max}\}$ and  $u^\alpha\in \mathcal H^a$ obeys
$\norm{u^\alpha}=1$ and 
$(H^a-\lambda^\alpha)u^\alpha=0$ for a real number
$\lambda^\alpha$, named a threshold. The set of thresholds is denoted
$\vT(H)$,  and including the eigenvalues of $H$ we introduce
 $\vT_{\p}(H)=\sigma_{\pp}(H)\cup
  \vT(H)$.  For any $a\in \vA'$ the intercluster potential is by definition
\begin{align*}
  I_a(x)=\sum_{b=(ij)\not\leq a}V_b(x^b).
\end{align*} Next we recall  the  short-range channel wave operators
\begin{equation}\label{eq:Atomwave_op}
  W_{\alpha}^{\pm}=\slim_{t\to \pm\infty}\e^{\i
  tH}\parb{{u^\alpha}\otimes \e^{-\i tk_\alpha}
  (\cdot)};\quad k_\alpha=p_a^2+\lambda^\alpha.
\end{equation} The $N$-body scattering theory originates from physics
and has a long history, also considered as a mathematical subject of
its own right.   By
now there exist several   proofs of the existence of the
channel wave operators as well as of  their completeness
\cite{GM,SS,Gr,En, De, Ta, HS, Ya3, Zi} (the list is not complete), see
also the monographs \cite{DG, Is5}. Several of the known proofs  are
manifestly time-dependent. The work by Yafaev  appears in this regard
as an  exception (at least  partially), although \cite{Ya3}  is not entirely
stationary neither.

  By the intertwining property $H W_{\alpha}^{\pm}\supseteq
W_{\alpha}^{\pm}k_\alpha $  and the fact that $k_\alpha$ is diagonalized by
the unitary map  $F_\alpha:L^2(\mathbf X_a)\to L^2(I^\alpha
;\vG_a)$, $I^\alpha=(\lambda^\alpha,\infty)$, $\vG_a=L^2(\mathbf{S}_a)$,  $\mathbf{S}_a=\mathbf X_a\cap\S^{d_a-1}$ with $d_a=\dim
\mathbf X_a$,  given by
\begin{align*}
  (F_\alpha \varphi)(\lambda,\omega)=(2\pi)^{-d_a/2}2^{-1/2}
  \lambda_\alpha^{(d_a-2)/4}\int \e^{-\i  \lambda^{1/2}_\alpha \omega\cdot
  x_a}\varphi(x_a)\,\d x_a,\quad \lambda_\alpha=\lambda-\lambda^\alpha,
\end{align*} we can for {any} two given  channels $\alpha$ and
$\beta$  write 
\begin{equation*}
  \hat
  S_{\beta\alpha}:=F_\beta(W_{\beta}^+)^*W_{\alpha}^-F_\alpha^{-1}=\int^\oplus_{
  I_{\beta\alpha} }
  S_{\beta\alpha}(\lambda)\,\d \lambda,\quad
  I_{\beta\alpha}=I^\beta\cap I^\alpha.
\end{equation*} The fiber operator $ S_{\beta\alpha}(\lambda)\in
\vL(\vG_a,\vG_b)$ is from an abstract point of view 
a priori defined only for
a.e. $\lambda \in I_{\beta\alpha}$, however it is known to exist  away from the set of thresholds (see Theorem \ref{thm:chann-wave-matrD} stated below). It is the
{$\beta\alpha$-entry of the   scattering matrix}
$S(\lambda)=\parb{S_{\beta\alpha}(\lambda)}_{\beta\alpha}$ (here the  `dimension' of the
`matrix' $S(\lambda)$ is $\lambda$-independent on
any interval not containing thresholds).

Introducing the standard notation for weighted spaces 
\begin{align*}
  L_s^2(\mathbf X)=\inp{x}^{-s}L^2(\mathbf X),
\quad s\in\R, \quad\inp{x}=\parb{1+\abs{x}^2}^{1/2},
\end{align*} we recall the following result. (See the paragraph after
the theorem for some physics interpretation.)

\begin{thm}[\cite {Sk3}]\label{thm:chann-wave-matrD}
  \begin{enumerate}[1)]
  \item 
Let $\alpha$ be a  given channel  $\alpha
    =(a,\lambda^\alpha, u^\alpha)$,     $f:I^\alpha \to \C$ be
  continuous and compactly supported away from  $\vT_{\p}(H)$, and let $s>1/2$. For any $\varphi\in
  L^2(\mathbf  X_a)$   
  \begin{align}\label{eq:wavD1}
  W^\pm_{\alpha}
  f\paro{ k_\alpha}\varphi=\int_{I^\alpha \setminus \vT_\p(H)} \,\,f(\lambda)
    W^\pm_{\alpha}(\lambda) \paro{F_\alpha \varphi)(\lambda,
  \cdot} 
    \,\d \lambda\in  L_{-s}^2(\bX), 
              \end{align} where the `channel wave matrices'
              $W^\pm_{\alpha}(\lambda)\in\vL\parb{\vG_a,L_{-s}^2(\bX)}$
              with a strongly continuous dependence on $\lambda$.
 In particular for  $\varphi\in
  L_s^2(\mathbf  X_b)$   
the integrand is a continuous compactly supported 
$L_{-s}^2(\bX)$-valued function. In general the integral
has  the weak interpretation of an integral of a measurable 
$L_{-s}^2(\bX)$-valued function.
\item The operator-valued function
  $I_{\beta\alpha}\setminus \vT_{\p}(H)\ni \lambda \to S_{\beta\alpha}(\lambda)\in
\vL(\vG_a,\vG_b)$ is weakly continuous.  
\end{enumerate}
\end{thm}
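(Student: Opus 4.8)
\medskip
\noindent\textbf{Proof proposal.} The plan is to deduce both assertions from the limiting absorption principle (LAP) for $H$ and its subsystems, from the stationary representation of the channel wave operators, and from the weighted resolvent bounds on channel eigenstates of \cite{Sk3}. In \emph{Step 1} I would record that for $\lambda\notin\vT_\p(H)$ the boundary values $R(\lambda\pm\i0)=\lim_{\varepsilon\downarrow0}(H-\lambda\mp\i\varepsilon)^{-1}$ exist in $\vL\parb{L^2_s(\bX),L^2_{-s}(\bX)}$ for every $s>1/2$ and that $\lambda\mapsto R(\lambda\pm\i0)$ is locally H\"older norm continuous on $I^\alpha\setminus\vT_\p(H)$ --- the $N$-body Mourre estimate and its differentiated versions, proved by induction in $\# a$, the same induction producing the analogue for each $H^a$ needed to handle $u^\alpha$. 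In \emph{Step 2}, differentiating the integrand in \eqref{eq:Atomwave_op} and using $H(u^\alpha\otimes\psi)-u^\alpha\otimes(k_\alpha\psi)=\parb{(H^a-\lambda^\alpha)u^\alpha}\otimes\psi+I_a(u^\alpha\otimes\psi)=I_a(u^\alpha\otimes\psi)$ gives a Duhamel identity which, after smearing with $f(k_\alpha)$, letting $t\to\pm\infty$, and replacing the time integral by the resolvent (a Laplace transform, legitimate by Step 1 and local decay), becomes
\begin{equation*}
  W^\pm_\alpha f(k_\alpha)\varphi=\int_{I^\alpha}f(\lambda)\,\parb{J-R(\lambda\mp\i0)I_aJ}\,F_{\alpha,0}(\lambda)^\ast F_{\alpha,0}(\lambda)\varphi\,\d\lambda,\qquad J\psi:=u^\alpha\otimes\psi,
\end{equation*}
where $F_{\alpha,0}(\lambda)\colon\varphi\mapsto(F_\alpha\varphi)(\lambda,\cdot)\in\vG_a$ is the explicit free fibre map and $F_{\alpha,0}(\lambda)^\ast F_{\alpha,0}(\lambda)=\tfrac{\d E_{k_\alpha}}{\d\lambda}(\lambda)$. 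This identifies the candidate wave matrix $W^\pm_\alpha(\lambda):=\parb{J-R(\lambda\mp\i0)I_aJ}F_{\alpha,0}(\lambda)^\ast$ and, once it is shown to be a bounded operator $\vG_a\to L^2_{-s}(\bX)$, establishes \eqref{eq:wavD1}.

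\emph{Step 3} is the crux: one must prove
\begin{equation*}
  I_aJF_{\alpha,0}(\lambda)^\ast\in\vL\parb{\vG_a,L^2_s(\bX)}\ \text{for some }s>\tfrac12,\ \text{locally bounded and norm continuous in }\lambda\in I^\alpha\setminus\vT_\p(H).
\end{equation*}
Here lies the genuine difficulty: $I_a$ is \emph{not} globally short-range --- it decays only transversally to $\bX^a$ --- and $u^\alpha$ carries no a priori decay, so the bound must come from combining the oscillation and $\bX_a$-decay of $F_{\alpha,0}(\lambda)^\ast$ with the subsystem resolvent bounds on $u^\alpha$ from Step 1, in a non-stationary-phase/commutator argument organized around the orthogonal splitting $\bX=\bX^a\oplus\bX_a$; this is essentially the technical core of \cite{Sk3}. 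Granting it, strong continuity of $\lambda\mapsto W^\pm_\alpha(\lambda)$ follows by combining it with the norm continuity of $R(\lambda\mp\i0)$ and the manifest continuity of $F_{\alpha,0}(\lambda)^\ast$, the improvement from weak to strong being exactly what survives composition with $\japx^{-s}$.

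For \emph{part 2)}, inserting the two stationary formulas into $\hat S_{\beta\alpha}=F_\beta(W^+_\beta)^\ast W^-_\alpha F_\alpha^{-1}$ and passing to fibres gives the standard stationary representation of $S_{\beta\alpha}(\lambda)$, expressing its matrix elements through $F_{\alpha,0}(\lambda)^\ast$, $F_{\beta,0}(\lambda)^\ast$ and $R(\lambda+\i0)$ sandwiched between intercluster potentials, valid on $I_{\beta\alpha}\setminus\vT_\p(H)$; weak continuity of $\lambda\mapsto S_{\beta\alpha}(\lambda)$ then follows from the norm continuity of $R(\lambda+\i0)$ in $\vL(L^2_s,L^2_{-s})$ together with Step 3. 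One obtains only \emph{weak} continuity because the intermediate vector lies in $L^2_{-s}(\bX)$ while $I_b$, being short-range only transversally to $\bX^b$, fails to return it to a space on which $F_{\beta,0}(\lambda)$ acts norm continuously, so that only the matrix elements $\inp{h,S_{\beta\alpha}(\lambda)g}$ are controlled and not the operator norm. Upgrading this to \emph{strong}, indeed norm, continuity --- i.e.\ proving every non-threshold energy \emph{stationary complete} --- is precisely the objective of the remainder of the paper; accordingly the main obstacle one should expect here is Step 3, after which part 1) is soft and part 2) drops out of the stationary formula.
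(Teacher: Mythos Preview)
The paper does not itself prove this theorem---it is quoted from \cite{Sk3}---but the surrounding material (Proposition~\ref{prop:radi-limits-chann22}, Theorem~\ref{thm:wave_matrices}, Remark~\ref{remark:R}\,\ref{item:2Sk}) makes clear that the approach of \cite{Sk3} is \emph{not} the Lippmann--Schwinger route you sketch, and your Step~3 contains a genuine gap.

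Your candidate wave matrix is $W^\pm_\alpha(\lambda)=(J-R(\lambda\mp\i0)I_aJ)F_{\alpha,0}(\lambda)^*$ (incidentally the sign should be $\pm$, not $\mp$, if you follow the Cook integral to its Laplace transform). The crux is then your assertion that $I_aJF_{\alpha,0}(\lambda)^*\in\vL(\vG_a,L^2_s(\bX))$ for some $s>1/2$. This fails in general when $u^\alpha$ has no decay beyond $L^2(\bX^a)$, which is exactly the regime the theorem is meant to cover (the abstract stresses ``without imposing any a priori decay condition on channel eigenstates''). Concretely: for $b\not\leq a$ the potential $V_b(x^b)$ does not decay along the collision plane $\bX_b$, and $\bX_b\cap\bX_a=\bX_{a\vee b}$ may be nontrivial; near such directions $I_a$ gives no weight, $F_{\alpha,0}(\lambda)^*g$ is only in $\vB^*(\bX_a)$ (not $L^2_{1/2+\epsilon}$), and $u^\alpha$ contributes nothing. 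Your appeal to ``oscillation and $\bX_a$-decay of $F_{\alpha,0}(\lambda)^*$'' is misplaced---there is no such decay for generic $g\in\vG_a$---and ``subsystem resolvent bounds on $u^\alpha$'' do not help when $\lambda^\alpha\in\vT(H^a)$. You attribute this step to \cite{Sk3}, but \cite{Sk3} does not prove it.

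What \cite{Sk3} does instead is bypass $I_a$ entirely by inserting Yafaev's channel localization operators $M_a$ of Section~\ref{sec: Yafaev type constructions}. One shows existence of the auxiliary strong limits $\slim_{t\to\pm\infty}f_2(H)\e^{\i tH}M_a\e^{-\i tH_a}f_2(H_a)$ via Kato smoothness (the `$Q$-bounds' \eqref{eq:2boundobtain33} coming from the positive-commutator \eqref{eq:PropBasic}--\eqref{eq:HessionPos}); the point is that $m_a$ is supported away from the collision planes $\bX_b$ with $b\not\leq a$, so $M_aI_a$ \emph{does} have genuine $\inp{x}^{-1-2\mu}$ decay---precisely the decay your Step~3 cannot manufacture. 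The wave matrices are then realized as adjoints of the restricted wave operators $\Gamma^\pm_\alpha(\lambda)$, defined directly by the radial-limit formula \eqref{eq:restrH} and shown bounded $\vB\to\vG_a$ with weak continuity in $\lambda$; strong continuity of the adjoint $W^\pm_\alpha(\lambda)=\Gamma^\pm_\alpha(\lambda)^*$ then follows. Weak continuity of $S_{\beta\alpha}(\lambda)$ is obtained from this construction together with the weak-star limiting absorption principle \eqref{eq:BB^*a}, not from a stationary representation built on your Step~3.
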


The integral in \eqref{eq:wavD1} has the interpretation  of being a
`wave-packet' of generalized eigenfunctions, which in turn may be seen
as a manifestation of the mentioned property $H
W^\pm_{\alpha}\supseteq  W^\pm_{\alpha}k_\alpha$. Similarly the
angular dependence of the factor $\paro{F_\alpha \varphi)(\lambda,
  \omega} $  in the integrals  determines the large-time 
asymptotics  (more precisely  the asymptotics  for $t\to \pm \infty$) of the corresponding time-depending wave-packets in conical
regions of the inter-cluster configuration spaces. Hence
the angle $\omega$ has 
the interpretation of being `incoming or outgoing directions', and the
operator $S_{\beta\alpha}(\lambda)$ (possibly thought of as having kernel
$S_{\beta\alpha}(\lambda;\omega_b,\omega_a)$) links conical $\alpha$-incoming
data with 
conical $\beta$-outgoing  scattering data.

Using  the notation $ L^2_\infty(\mathbf X)=\cap_s L^2_s(\mathbf X)$ the
delta-function of $H$ at $\lambda$ is given by 
\begin{equation*}
   \delta(H-\lambda) =\pi^{-1}\Im
{(H-\lambda-\i
0)^{-1}}\text{ as a quadratic form on } L^2_\infty(\mathbf X).
\end{equation*} The adjoint operators $\Gamma^\pm_{\alpha}(\lambda)=W^\pm_{\alpha}(\lambda)^*$ are
referred to as `restricted channel wave operators'.
\begin{defn}\label{defn:scatEnergy0}  
An  energy $\lambda
  \in \vE:=(\min \vT(H),\infty)\setminus\vT_\p(H)$  is  {stationary
    complete} for $H$ if  
\begin{equation}\label{eq:ScatEnergy22233}
  \forall \psi\in  L^2_\infty(\mathbf X):\,\, \sum_{\lambda^\beta<
  \lambda}\,\norm{\Gamma^\pm_{\beta}(\lambda) \psi}^2= 
  \inp{\psi,\delta(H-\lambda){\psi}}.
\end{equation} 
  \end{defn}

In this paper  we take the existence of the channel wave operators for
granted, although this property is an easy consequence of our method
(see Remark \ref{remark:R} \ref{item:2Sk}). With this in place  the
assertion  of their
completeness is equivalent to the assertion  that Lebesgue almost all energies are  stationary
complete (see \cite{Sk3}). In particular we obtain asymptototic completeness as  a consequence of  the following main result of the  paper (stated here for the standard   $N$-body short-range model only).
\begin{thm}\label{thm:Parsconcl-gener0} All $\lambda
  \in \vE$ are stationary complete for the 
Hamiltonian $H$.
  \end{thm}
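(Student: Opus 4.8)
The heart of the matter is to upgrade the weak-continuity statement of Theorem \ref{thm:chann-wave-matrD} to an exact identity of the form \eqref{eq:ScatEnergy22233}. The plan is to work at a fixed non-threshold energy $\lambda\in\vE$ and exploit the limiting absorption principle together with a Mourre-type commutator estimate at $\lambda$. First I would fix $\psi\in L^2_\infty(\bX)$ and consider the family of approximate spectral projections $\tfrac1\pi\Im R(\lambda+\i\varepsilon)=\tfrac{\varepsilon}{\pi}\,R(\lambda-\i\varepsilon)R(\lambda+\i\varepsilon)$, where $R(z)=(H-z)^{-1}$, which converges in form on $L^2_\infty(\bX)$ to $\delta(H-\lambda)$ by the limiting absorption principle recalled from \cite{Sk3}. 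The goal is to show that $\inp{\psi,\delta(H-\lambda)\psi}$ decomposes exactly into the sum over open channels $\beta$ (those with $\lambda^\beta<\lambda$) of $\norm{W^\pm_\beta(\lambda)^*\psi}^2$, with no ``missing mass.''

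\textbf{Key steps.} The natural route is a partition-of-unity / induction-on-the-number-of-clusters argument in the spirit of the Deift--Simon / Enss decomposition, but carried out entirely stationarily. Step one: localize $\e^{-\i tH}\psi$ (or rather, in the stationary picture, the boundary values $R(\lambda\pm\i0)\psi$) into regions of configuration space $\bX$ associated to each cluster decomposition $a$, using a Graf-type or Derezi\'nski-type partition of unity $\{j_a\}$ homogeneous of degree $0$ in $x$. Step two: on the region attached to a cluster decomposition $a$, the intercluster potential $I_a$ is integrable along the relevant directions (this is where short-range, $\vO(|x|^{-1-\epsilon})$, is used crucially), so the propagation is governed by $H^a\otimes 1 + 1\otimes p_a^2$; one then applies the inductive hypothesis to the subsystem Hamiltonian $H^a$ on $\bX^a$ at the reduced energy, together with the known spectral decomposition of $p_a^2$ via $F_\alpha$. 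Step three: assemble the channel contributions and check that the ``free'' piece (the $a_{\max}$ region, where all particles are mutually far apart) contributes nothing to $\delta(H-\lambda)$ at a non-threshold energy $\lambda$ — this is again a consequence of the short-range decay, which forbids bound states embedded at non-threshold energies in that region. Throughout, the Mourre estimate at $\lambda$ and the consequent microlocal resolvent bounds (propagation estimates in the generator of dilations $A$) from \cite{Sk3} provide the quantitative control needed to justify that the cross terms between distinct clusters vanish in the $\varepsilon\downarrow0$ limit.

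\textbf{Main obstacle.} The hardest part will be proving that the sum over open channels is \emph{exhaustive} — i.e. the lower bound $\inp{\psi,\delta(H-\lambda)\psi}\le \sum_{\lambda^\beta<\lambda}\norm{W^\pm_\beta(\lambda)^*\psi}^2$, which is exactly the content of (asymptotic, now stationary) completeness; the reverse inequality is comparatively soft and essentially follows from the existence and intertwining properties of the wave matrices already recorded in Theorem \ref{thm:chann-wave-matrD}. For the exhaustive bound one must show there is no mass that ``escapes to infinity'' in a configuration that is neither a bound state of a subsystem times a free motion nor controlled by induction; in the time-dependent world this is the classical asymptotic-completeness difficulty, and here it must be redone with stationary estimates uniform as $\varepsilon\downarrow 0$. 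I expect the resolution to hinge on a sharp propagation estimate at the fixed energy $\lambda$ — of the type $\norm{\inp{x}^{-s}\,\mathrm{Op}(\chi)\,R(\lambda+\i\varepsilon)\psi}$ bounded uniformly in $\varepsilon$ for symbols $\chi$ supported away from the outgoing/incoming region — combined with an inductive ``density of channel states'' argument, i.e. that the algebraic span of the ranges of all $W^\pm_\beta(\lambda)$ fills out $\ran\delta(H-\lambda)$. Getting the uniformity of these estimates at the \emph{single} threshold-free energy $\lambda$, rather than after integration in $\lambda$, is the crux and is precisely what the deep input from \cite{Sk3} (minimal generalized eigenfunctions, the $B^*$--$B$ functional framework) is there to supply.
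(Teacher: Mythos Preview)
Your high-level instinct is right --- one must show the exhaustive direction by decomposing $R(\lambda+\i 0)\psi$ according to cluster decompositions and iterating --- but the specific mechanism you propose has a genuine gap, and the paper's route differs in two essential respects.

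\textbf{The partition-of-unity problem.} You propose a Graf/Derezi\'nski configuration-space partition $\{j_a\}$ homogeneous of degree~$0$. The commutator $[H,j_a]$ is then only $\vO(\abs{x}^{-1})$, which is \emph{not} Kato-smooth and gives no uniform control in the $\varepsilon\downarrow 0$ limit; this is precisely the borderline that the short-range exponent $1+2\mu$ does not help with. The paper (following Yafaev) does not use configuration-space cutoffs at all. Instead it uses first-order differential operators $M_a=2\Re(w_a\cdot p)$ built from degree-$1$ homogeneous functions $m_a$, together with a convex companion $m_{a_0}$. The point is that $\i[H,M_a]$ has a leading term $p\cdot(\nabla^2 m_a)p$ which, while still of order $\abs{x}^{-1}$, can be dominated by the \emph{positive} quantity $p\cdot(\nabla^2 m_{a_0})p$ arising from the convexity of $m_{a_0}$; this is what yields the crucial ``$Q$-bounds'' \eqref{eq:2boundobtain33}. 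Your plan, as written, has no substitute for this positivity structure and would stall at the commutator step.

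\textbf{The reduction and the threshold issue.} The paper does not induct on stationary completeness of $H^a$. It reformulates the problem via the criterion \eqref{eq:asres29}: it suffices to show that $R(\lambda+\i 0)\psi$, as an element of $\vB^*/\vB^*_0$, is a sum of outgoing quasi-modes. The iteration then goes $R(\lambda+\i 0)\psi\simeq\sum_a R_a(\lambda+\i 0)\psi_a$ (with $H_a=H^a\otimes I+I\otimes p_a^2$, not $H^a$), and at each stage one must split $R_a(\lambda+\i 0)\psi_a$ according to the spectrum of $H^a$. The threshold/eigenvalue part of $H^a$ produces the channel contributions directly (this is the ``squeezing'' Lemma~\ref{lemma:squeezingLemmaS}, a Borel-calculus argument requiring no regularity at thresholds). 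The non-threshold part is pushed to the next level of the iteration, and the key estimate enabling this is the stationary minimal-velocity bound of Lemma~\ref{lemma:non-thresh-analys}: if $H^a$ is localized away from $\vT_\p(H^a)$, then $\abs{x^a}/\abs{x}$ cannot be too small on $R_a(\lambda+\i 0)\psi_a$. This is the precise ``sharp propagation estimate at fixed energy'' you allude to but do not identify. (Also, your Step~3 is mis-stated: the free region $a=a_{\min}$ does contribute a channel whenever $\lambda>0$; it is not discarded.)
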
 

Note that while
\eqref{eq:wavD1} may be taken as a definition   of the channel wave
matrices (although  of course incomplete
and implicit),
\eqref{eq:ScatEnergy22233} is a non-trivial derived property.  The notion of stationary complete energies was  studied in  \cite{Sk3}  for a model including long-range  potentials of the type studied previously in \cite{De, En}, in particular for  the  atomic $N$-body
Hamiltonian.  Even  in the long-range  setting  Lebesgue almost all non-threshold energies  are  stationary
complete, however it was left as
an  open problem to show stationary completeness at fixed energy in that case. For  
  the atomic $N$-body  model the problem  remains open   for $N\geq 4$,
  however it was recently solved for Derezi\'nski-Enss type
 long-range potentials for  $N=3$ \cite{Sk2}.

One can regard \eqref{eq:ScatEnergy22233} as an `on-shell  Parseval
formula'. By integration it yields asymptotic completeness, hence
providing an alternative to other proofs.  Moreover 
there are  immediate consequences  for the discussed scattering
quantities considered as operator-valued functions on $\vE$ (recalled for the  generalized  $N$-body short-range model 
in Section \ref{sec:-body effective potential and a $1$-body radial limit}):
\begin{enumerate}[I)]
\item \label{item:stroS} The scattering matrix $S(\cdot)$ is a  strongly continuous unitary operator
    determined uniquely  by asymptotics of minimum
  generalized eigenfunctions (at any given  energy). The   latter are
  taken from  the ranges
  of the channel wave matrices (at this energy). 
\item \label{item:stroS2} The  restricted channel wave operators $\Gamma^\pm_{\alpha}(\cdot)$ are strongly continuous.
\item \label{item:stroS3}  The scattering matrix links the incoming
  and outgoing channel wave 
  matrices by the following (strongly convergent) summation formula,
\begin{equation*}
     W^-_{\alpha}(\lambda)g=\sum_{\lambda^\beta<\lambda}
     W^+_{\beta}(\lambda)S_{\beta\alpha}(\lambda)g;\quad \lambda \in
     I^\alpha\setminus\vT_\p(H)\mand g\in \vG_a.
  \end{equation*}
\end{enumerate}

The many-body stationary scattering theory appears so far incompletely
developed. In particular, to our knowledge, the results
\eqref{eq:ScatEnergy22233} and \ref{item:stroS}--\ref{item:stroS3} are
all new for $N\geq 4$, while they are substantial improvements  of known
results for $N= 3$ (see the discussion in Subsection \ref{subsec:
  Extensions and comparison with the literature}). They may in fact be
viewed as a `completion' of the $N$-body short-range stationary
scattering theory in that they settle various  fundamental issues not
covered by Theorem \ref{thm:chann-wave-matrD}.

The 
  proof of \eqref{eq:ScatEnergy22233}  relies on a characterization
of this property from  \cite {Sk3} (recalled in
\eqref{eq:asres29}).  In particular we derive the top-order asymptotics of any vector
on the form $(H-\lambda-\i
0)^{-1}\psi$,  $\psi\in  L^2_\infty(\mathbf X)$, yielding
\eqref{eq:ScatEnergy22233}. 
 The asymptotics  is also
valid in the $N$-body setting of \cite{Sk3}, however there proven only 
away from a Lebesgue null-set. The  procedure
of \cite{Sk3}  is very different, and  we stress that it  excludes any stronger
conclusion.

\subsection{Comparison with the literature, content of the paper}\label{subsec: Extensions and comparison with the literature}

For $3$-body systems there is a fairly big literature on  stationary scattering theory both on the mathematical side  and the physics side. This
is to a large extent  based on the Faddeev method or some modification of
that, see for example \cite{GM,Ne}. The Faddeev method, as for
example used in the mathematically rigorous paper \cite{GM}, requires
fall-off like $V_a=\vO(\abs{x^a}^{-2-\epsilon})$. Moreover  there are
additional complications in that the threshold zero needs be be
regular for the two-body systems (i.e.  zero-energy eigenvalues
and  resonances are excluded) and  `spurious poles' cannot be ruled 
out  (these poles
would arise from lack of solvability of a certain Lippmann--Schwinger
type equation).

The work on the $3$-body stationary scattering theory  
\cite{Is3} (see also its partial precursor
\cite{Is2} and  \cite[Chapter 5]{Is5})  is different. In fact Isozaki does not assume any
regularity at zero energy for the two-body
systems  and his theory does not have spurious
poles. He overcomes these deficencies by avoiding the otherwise
prevailing Faddeev method. On the other hand \cite{Is3} still
needs, in some comparison argument, a very detailed  information on the
spectral theory of the (two-body) sub-Hamiltonians at zero energy, and
this requires the  fall-off condition 
$V_a=\vO(\abs{x^a}^{-5-\epsilon})$ (as well as a restriction on the
particle dimension). The  present paper does not
involve such detailed  information. In fact in  comparison with
Isozaki's works  a  Borel
calculus argument  suffices. Otherwise the overall spirit of the present paper
and \cite{Is3} is  the same, in particular the use of resolvent
equations is   kept at a minimum (solvability of Lippmann--Schwinger
type equations is not an issue) and both works employ intensively
 Besov spaces not only in proofs but also in the  formulation of
 various results.

In conclusion we  recover all of Isozaki's results 
by a new  method that works down  to the critical exponent $1$,
i.e. for $V_a=\vO(\abs{x^a}^{-1-\epsilon})$, hence for general short-range potentials. Although this is not widely
recognized in the literature on stationary scattering theory, various sharp
Besov space assertions appear desirable and (beyond their own right of being
sharp)   are clearly  useful. For the approach of the present   paper in
fact  their sharpness 
plays a  crucial role.

The literature on $N$-body stationary scattering theory for $N\ge 4$ is very limited, see though \cite{Ya1,Ya3,Ya4} which overlaps with \cite {Sk3} regarding Theorem \ref{thm:chann-wave-matrD}.

We remark that the strong continuity assertion for the scattering
matrix, cf. 
\ref{item:stroS}, cannot in general be replaced by norm continuity, see
\cite[Subsection 7.6]{Ya1} for a counterexample  for  a short-range
potential. In this sense the stated regularity  of the scattering
matrix in \ref{item:stroS} is optimal. 

We also remark that from a physics point of view, well-definedness and continuity of basic
scattering quantities are fundamemntal issues. However once settled there are
several   interesting  problems left, like the structure of the singularities of the kernel of the scattering
matrix (possibly considered as function of the  energy parameter), see
for example \cite{ Is1, Is5, SW} for  results  on 
this topic.

The paper is organized as follows. In Section \ref{sec:preliminaries} we define a generalized $N$-body short-range model and state various basic stationary theory. In particular we introduce in Remarks  \ref{remark:RQ} a notion of 
`$Q$-bounds' that will be vital for our procedure. Such bounds arise
in our application of vector field constructions, essentially due to
Yafaev, as well as for other issues. The Yafaev type   constructions
are  studied in   detail in the
somewhat technical  Section \ref{sec:
  Yafaev type constructions} (in particular presented in a self-contained
and slightly  extended form). The reader may in a first reading   skip
this section (and come back to it  when necessary for further reading)  and go directly to  Section \ref{sec:-body effective
  potential and a $1$-body radial limit}, where we give an account of various relevant results on scattering theory from \cite{Sk3}. In particular our strategy of proof of the pointwise stationary completeness is explained in Section \ref{sec:-body effective potential and a $1$-body radial limit}. In Section \ref{sec::A partition of unity} we introduce a convenient phase-space partition of unity, which then is used in Section \ref{sec:Squeezing argument}  in combination with the Borel calculus (trivializing  threshold problems). We are left with  certain energy-localized states for which a  `$Q$-bound' of Section \ref{sec:Non-threshold analysis}  will apply in an iteration procedure, eventually to be explained and completed  in Section
\ref{sec:Proof of stationary completeness}. Finally we devote Appendix
\ref{sec:Strong bounds} to an elaboration of some technical details
left out in Section \ref{sec::A partition of unity}.

Although we do give an account of the
most relevant parts of  \cite{Sk3}, the present paper contains proofs
for which the reader presumably would  benefit from independent parallel
consultance of \cite{Sk3} (see the discussion after \eqref{82a0}).

\section{Preliminaries}\label{sec:preliminaries}

We explain our abstract setting  and give an account of some basic stationary  $N$-body theory. 
\subsection{$N$-body Hamiltonians, assumptions  and
  notation}\label{subsec:body Hamiltonians, limiting absorption
  principle  and notation}

First we explain our model and then we introduce a number of
useful general notation.

\subsubsection{Generalized $N$-body short-range  Hamiltonians}\label{subsubsec::Short-range N-body Hamiltonians}

Let $\bX$ be                    
a (nonzero) finite dimensional real inner product space,
equipped with a
finite family $\{\bX_a\}_{a\in \vA}$ of subspaces closed under intersection:
For any $a,b\in\mathcal A$ there exists $c\in\mathcal A$, denoted
$c=a\vee b$,  such that $\bX_a\cap\bX_b=\bX_c$.
 We
 order $\vA$ by  writing $a\leq b$ (or equivalently as $b\geq a$) if
$\bX_a\supseteq \bX_b$. 
It is assumed that there exist
$a_{\min},a_{\max}\in \vA$ such that 
$\bX_{a_{\min}}=\bX$ and 
$\bX_{a_{\max}}=\{0\}$. The subspaces $\bX_a$, $a\neq a_{\min} $,  are called  \emph{collision
 planes}. We will  use the notation $d_a=\dim
\mathbf X_a$ and in particular  the abbreviated notation
$d=d_{a_{\min}}=\dim \mathbf X$.  Let $\vA'=\vA\setminus
  \{a_{\max}\} $. 
 
The $2$-body model (or more correctly named `the one-body model')  is
based on the structure 
 $\vA=\set{a_{\min},a_{\max}}$. 
The scattering  theory for such models is well-understood,
in fact (here including long-range potentials) there are several doable approaches, see for example the monographs \cite{DG,Is5} for  accounts  on time-dependent  and stationary
   scattering  theories. In this paper we  consider short-range potentials only, see the below Condition
   \ref{cond:smooth2wea3n12}, and the scattering theory for the $2$-body as well as for the $N$-body model has a  canonical meaning (to be discussed in 
   Section \ref{sec:-body effective potential and a $1$-body radial limit}).

Let $\bX^a\subseteq\bX$ be the orthogonal complement of $\bX_a\subseteq \bX$,
and denote the associated orthogonal decomposition of $x\in\bX$ by 
$$x=x^a\oplus x_a=\pi^ax\oplus \pi_ax\in\bX^a\oplus \bX_a.$$ 
The vectors $x^a$ and $x_a$ may be  called the \emph{internal
  component}  and the 
\emph{inter-cluster component}  of $x$, respectively. The momentum operator
$p=-\i \nabla$ decomposes similarly, $p=p^a\oplus p_a$. For $a\neq a_{\min} $ the  unit sphere  in $\mathbf X^{a}$ is denoted
by $\mathbf{S}^{a}$. For $a\neq a_{\max} $ the  unit sphere  in $\mathbf X_{a}$ is denoted
by $\mathbf{S}_{a}$.

A real-valued measurable function $V\colon\bX\to\mathbb R$ is 
a \textit{potential of many-body type} 
if there exist real-valued measurable functions
$V_a\colon\bX^a\to\mathbb R$ such that 
\begin{equation*}
V(x)=\sum_{a\in\mathcal A}V_a(x^a)\ \ \text{for }x\in\mathbf X.
\end{equation*} We take $V_{a_{\min}}=0$ and  impose throughout the paper the
following condition for    $a\neq a_{\min}$.  
\begin{cond}\label{cond:smooth2wea3n12}
    There exists $\mu\in (0,1/2)$  such that for all $a\in \vA\setminus\set{a_{\min}}$ the
    potential $V_a=V_a(x^a)$ fulfills:
    \begin{enumerate}
    \item \label{item:shortr}$V_a(-\Delta_{x^a}+1)^{-1}$ is compact. 
    \item \label{item:shortl}$\abs{x^a}^{1+2\mu}
      V_a(-\Delta_{x^a}+1)^{-1}$ is bounded.
\end{enumerate}
\end{cond}

 For any $a\in\vA$  we  introduce   associated  {Hamiltonians} $H^a$
 and $H_a$   as follows. 
For $a=a_{\min}$  we define
$H^{a_{\min}}=0$ on $\mathcal H^{a_{\min}}=L^2(\set{0})=\mathbb C $
and $H_{a_{\min}} =
p^2$ on $L^2(\bX)$, respectively. 
For $a\neq a_{\min}$ 
we let 
\begin{equation*}
V^a(x^a)=\sum_{b\leq a} V_{b}(x^b)
,\quad
x^a\in \bX^a,
\end{equation*} 
and  define  then 
\begin{equation*}
 H^a=-
\Delta_{x^a} +V^a\ \ 
\text{on }\mathcal H^a=L^2(\bX^a)\mand H_a=H^a \otimes I +I\otimes
p^2_a\text{ on }L^2(\bX).
\end{equation*} 
We  abbreviate 
\begin{align*}
V^{a_{\max}}=V,\quad
 H^{a_{\max}}=H,\quad 
 \mathcal H^{a_{\max}}=\mathcal H=L^2(\bX).
\end{align*}
 The operator $H$ (with domain $
\mathcal D(H)=H^2(\mathbf X)$) is  the full
Hamiltonian of the $N$-body model,  and the \textit{thresholds} of $H$ are by
definition the
eigenvalues of the  sub-Hamiltonians $H^a$; 
$a\in  \vA'$.
 Equivalently stated  the set of thresholds is 
\begin{equation*}
 \vT (H):= \bigcup_{a\in\vA'} \sigma_{\pupo}( H^a).
\end{equation*}
 This set 
is closed and countable. Moreover the set of non-threshold eigenvalues
  is discrete in $\R\setminus \vT (H)$,  and it 
 can only  accumulate  at  points in
$\vT (H)$  from below.  
The essential spectrum is given by the formula
$\sigma_{\ess}(H)= \bigl[\min \vT(H),\infty\bigr)$. 
We introduce the notation $\vT_{\p}(H)=\sigma_{\pp}(H)\cup
  \vT(H)$, and more generally   $\vT_{\p}(H^a)=\sigma_{\pp}(H^a)\cup
  \vT(H^a)$. Denote $R_a(z)=(H_a-z)^{-1}$ for
$z\notin \sigma(H_a)$ and $R(z)=R_{a_{\max}}(z)$.

\subsubsection{General 
  notation}\label{subsubsec:General 
  notation}

Any  function $f\in C^\infty_\c(\R)$ taking values in $[0,1]$ is
referred to as a  \emph{standard support function} (or just a `support function'). For  any such
functions  $f_1$ and $f_2$ we write
$f_2\succ f_1$, if $f_2=1$ in a neighbourhood of $\supp f_1$.

If $T$ is a self-adjoint  operator on a
Hilbert space and $f$ is a support function we can represent the
operator $f(T)$ by the well known Helffer--Sj\"ostrand formula
\begin{align}\label{82a0}
  \begin{split}
 f(T) 
&=
\int _{\C}(T -z)^{-1}\,\mathrm d\mu_f(z)\, \text{ with}\\
&\mathrm d\mu_f(z)=\pi^{-1}(\bar\partial\tilde f)(z)\,\mathrm du\mathrm dv;\quad 
z=u+\i v.   
  \end{split}
\end{align} Here $\tilde{f}$ is an `almost analytic' extension of
$f$, which in this case may be taken compactly supported. The
formula \eqref{82a0} extends to more general  classes of  functions
and  serves as a standard tool for commuting  operators. Since  it
will be used only tacitly in this paper  the interested  reader might benefit from
consulting 
 \cite[Section 6]{Sk3} which  is devoted to
applications of \eqref{82a0} to  $N$-body Schr\"odinger operators, hence being equally relevant for the present paper.

Consider   and fix $\chi\in C^\infty(\mathbb{R})$ such that 
\begin{align*}
\chi(t)
=\left\{\begin{array}{ll}
0 &\mbox{ for } t \le 4/3, \\
1 &\mbox{ for } t \ge 5/3,
\end{array}
\right.
\quad
\chi'\geq  0,
\end{align*} and such that the following properties are fulfilled:
\begin{align*}
  \sqrt{\chi}, \sqrt{\chi'}, (1-\chi^2)^{1/4} ,
  \sqrt{-\parb{(1-\chi^2)^{1/2} }'}\in C^\infty.
\end{align*} We define correspondingly $\chi_+=\chi$ and
$\chi_-=(1-\chi^2)^{1/2} $ and record that
\begin{align*}
  \chi_+^2+\chi_-^2=1\mand\sqrt{\chi_+}, \sqrt{\chi_+'}, \sqrt{\chi_-}, \sqrt{-\chi_-'}\in C^\infty.
\end{align*}

We shall use the notation 
$\inp{x}=\parb{1+\abs{x}^2}^{1/2}$ for  $x\in \mathbf X$ (or more generally for
any $x$ in a normed space).  For $x\in \mathbf X\setminus\set{0}$ we
write $\hat x=x/\abs{x}$. If $T$ is a self-adjoint  operator on a
Hilbert space $\vG$ and $\varphi\in \vG$ then
$\inp{T}_\varphi:=\inp{\varphi,T\varphi}$. We denote the space of  bounded operators 
from one  (abstract) Banach space $X$ to another one $Y$ by $\vL(X,Y)$ 
and abbreviate $\mathcal L(X,X)=\mathcal L(X)$. The dual space of $X$
is denoted by $X^*$.

To define \emph{Besov spaces 
associated with the multiplication operator
$|x|$ on $\vH$}  
let
\begin{align*}
F_0&=F\bigl(\bigl\{ x\in \mathbf X\,\big|\,\abs{x}<1\bigr\}\bigr),\\
F_m&=F\bigl(\bigl\{ x\in \mathbf X\,\big|\,2^{m-1}\le \abs{x}<2^m\bigr\} \bigr)
\quad \text{for }m=1,2,\dots,
\end{align*}
where $F(U)=F_U$ is the sharp characteristic function of any given  subset
$U\subseteq {\mathbf X}$. 
The Besov spaces $\mathcal B =\mathcal B(\mathbf X)$, $\mathcal
B^*=\mathcal B(\mathbf X)^*$ and $\mathcal B^*_0=\mathcal
B^*_0(\mathbf X)$ are then given  as 
\begin{align*}
\mathcal B&=
\bigl\{\psi\in L^2_{\mathrm{loc}}(\mathbf X)\,\big|\,\|\psi\|_{\mathcal B}<\infty\bigr\},\quad 
\|\psi\|_{\mathcal B}=\sum_{m=0}^\infty 2^{m/2}
\|F_m\psi\|_{{\mathcal H}},\\
\mathcal B^*&=
\bigl\{\psi\in L^2_{\mathrm{loc}}(\mathbf X)\,\big|\, \|\psi\|_{\mathcal B^*}<\infty\bigr\},\quad 
\|\psi\|_{\mathcal B^*}=\sup_{m\ge 0}2^{-m/2}\|F_m\psi\|_{{\mathcal H}},
\\
\mathcal B^*_0
&=
\Bigl\{\psi\in \mathcal B^*\,\Big|\, \lim_{m\to\infty}2^{-m/2}\|F_m\psi\|_{{\mathcal H}}=0\Bigr\},
\end{align*}
respectively.
Denote the standard \emph{weighted $L^2$ spaces} by 
$$
L_s^2=L_s^2(\mathbf X)=\inp{x}^{-s}L^2(\mathbf X)\ \ \text{for }s\in\mathbb R ,\quad
L_{-\infty}^2=\bigcup_{s\in\R}L^2_s,\quad
L^2_\infty=\bigcap_{s\in\mathbb R}L_s^2.
$$ 
Then for any $s>1/2$
\begin{equation*}
 L^2_s\subsetneq \mathcal B\subsetneq L^2_{1/2}
\subsetneq \mathcal H
\subsetneq L^2_{-1/2}\subsetneq \mathcal B^*_0\subsetneq \mathcal B^*\subsetneq L^2_{-s}.
\end{equation*} The abstract quotient-norm on the Banach space
$\vB^*/\vB_0^*$ and 
  \begin{equation*}
    \norm{\psi}_{\rm quo}:=\limsup_{n\to \infty}\,2^{-n/2}\norm[\Big]{\sum_{m=0}^n F_m\psi}_{{\mathcal H}},\quad \psi\in \vB^*,
  \end{equation*} are 
equivalent   norms.

Let $T$ be an  operator on $\mathcal H=L^2(\mathbf X)$ such that
$T,T^*:L^2_\infty\to L^2_\infty$, and let $t\in\mathbb R$.  Then we
say that   $T$ is an {\emph{operator of order $t$}}, if 
 for each $s\in\mathbb R$  the restriction  $T_{|L^2_\infty}$ extends to
 an operator $T_s\in\vL(L^2_{s}, L^2_{s-t})$. Alternatively stated,
 \begin{subequations}
  \begin{equation}\label{eq:defOrder}
\|\inp{x}^{s-t}T\inp{x}^{-s}\psi\|\le C_s\|\psi\| \text{ for all }\psi\in
L^2_\infty.
\end{equation} 
 If   $T$ is of {order $t$}, we write 
\begin{equation}
  T=\vO(\inp{x}^t).
\label{eq:1712022}
\end{equation}  
\end{subequations}
\begin{subequations}

For any given  $ a\in \vA\setminus\{a_{\min},a_{\max}\}$,
$\kappa>0$ (considered small)  and any given operator  $T$  of {order $t$}
we write
\begin{equation}
T=\vO^ {a}_\kappa(\inp{x}^t),
\label{eq:1712022kap}
\end{equation} 
if for some $\bre>0$ and some operator $\brT$ also of order $t$
\begin{equation}
  \label{eq:kapO}
  T-\chi_+(\abs{x})\chi_-(|x^{a}|/2\kappa|x|)\brT\chi_+(\abs{x})\chi_-(|x^{a}|/2\kappa|x|)=\vO(\inp{x}^{t-\bre}).
\end{equation}

 \end{subequations}

\subsection{Basic  $N$-body stationary  theory}\label{subsec:body preliminaries}
\begin{subequations}
Under a rather weak condition (in particular weaker  than Condition
\ref{cond:smooth2wea3n12}) it is demonstrated in  \cite{AIIS} that the
following limits exist   locally
                                                              uniformly
                                                              in $\lambda\not\in \vT_{\p}(H)$:
\begin{equation}\label{eq:LAPbnda}
  R(\lambda\pm \i
  0)=\lim _{\epsilon\to 0_+} \,R(\lambda\pm\i
    \epsilon)\in \vL\parb{L^2_s,L^2_{-s}}\text{ for any }s>1/2.
\end{equation} Furthermore  
    (the strong weak$^*$-topology is  explained in Theorem \ref{microLoc}) 
  \begin{align}\label{eq:BB^*a}
    \begin{split}
    	\sup_{0<\epsilon\le 1}\norm{R(\lambda\pm\i
    		\epsilon)}_{\vL\parb{\vB,\vB^*}}&<\infty \text{ with 
    		locally
    		uniform
    		bounds in 
    	}\lambda\not
    \in \vT_{\p}(H),\\
      &\text{and there exist the limits}\\
                                                              R(\lambda\pm \i
                                                              0)&=\swslim _{\epsilon\to 0_+} \,R(\lambda\pm\i
                                                              \epsilon)\in \vL\parb{\vB,\vB^*};\quad \lambda\not
                                                              \in \vT_{\p}(H).
    \end{split}
\end{align}
\end{subequations}
\begin{subequations}
 These results follow from a `Mourre estimate' for the operator $A=r^{1/2}Br^{1/2}$,
where
\begin{equation}\label{eq:Mourre}
   B=2\Re(\omega\cdot p)=-\i\sum_{j\leq d}\parb{\omega_j\partial_{x_j}+\partial_{x_j}\omega_j},\quad  \omega=\mathop{\mathrm{grad}} r   
\end{equation}  and  $r $ is   a certain  smooth positive
function  on $\mathbf X$   fulfilling   the
following property, cf.  \cite[Section 5]{Sk3}:

 For all $k\in\N_0:=\N\cup\set{0}$ and $\beta\in \N^{d_0}_0$   there
  exists $C>0$ such that 
\begin{equation}\label{eq:compa00}
  \abs[\big]{\partial ^\beta (x\cdot \nabla)^k \parb{r(x)-\inp{x}}} \leq C\inp{x}^{-1}.
\end{equation}

Fixing $\lambda\not\in \vT_{\p}(H)$ we can indeed find such a function
(possibly taken rescaled, meaning that we replace $r(x)$ by
$Rr(x/R)$ with  $R\geq 1$   large)  and  an
open neighbourhood $U$ of $\lambda$ such that
\begin{equation}\label{eq:mourreFull}
  \forall \text{ real } f\in C_\c^\infty(U):\quad{f}(H)
  \i[H,r^{1/2}Br^{1/2}]{f}(H)\geq
 c\,{f}(H)^2;
\end{equation} here the constant $c>0$ depends on the number
\begin{equation}\label{eq:optimal}
  d(\lambda,H)=4\dist\parb{\lambda ,\set{\mu\in\vT (H)\mid \mu<\lambda}}.
\end{equation} In fact $c$  can be taken arbitrarily smaller than
the number $d(\lambda,H)$ (upon correspondingly adjusting the scaling parameter $R$
and the  neighbourhood $U$). 

The Mourre estimate \eqref{eq:mourreFull} has  consequences beyond
\eqref{eq:LAPbnda} and 
\eqref{eq:BB^*a}. In particular, cf.   \cite [Subsection 5.2]{Sk3},   for any $\lambda\not\in
 \vT_{\p}(H)$ and  with $2\epsilon_0=\sqrt{d(\lambda,H)}$
    \begin{equation}\label{eq:MicroB0}
      R(\lambda\pm \i0)\psi-\chi_+( \pm B/\epsilon_0) R(\lambda\pm \i0)\psi\in \vB_0^*
      \text{ for all }\psi\in \vB.
    \end{equation}  For completeness of presentation we note that
    $\chi_-(  \pm B/\epsilon_0)\in \vL(\vB)$, cf.  \cite[Theorem
    14.1.4]{H1}. We will need various partially extended versions of
    \eqref{eq:MicroB0} from \cite{Sk3}, which will be stated in
    Appendix \ref{sec:Strong bounds}. See also the related bound
    \eqref{eq:2boundobtain338} stated below. 
    
Of course the above assertions for $H$ are also valid for the
Hamiltonians $H_a$, $a\in\vA'$, and by self-similarity there are
completely similar  assertions  for
the sub-Hamiltonains $H^a$ upon replacing $(H,r)$ by analogue pairs $(H^a,r^a)$.

\end{subequations}

We recall the following scheme of estimating the resolvent from \cite[Appendix B]{Sk3}.

\begin{lemma}\label{lemma:comSTA}
  Let $f_1$ and $f_2$  be standard support
  functions with $f_2\succ f_1$.  Suppose $P$ is a symmetric form on $\vH$, form-bounded relatively
to $H$, and given such that  the  bounded self-adjoint  operator
$\Psi=f_1(H)Pf_1(H)$ restricts  to a  bounded operator on $\vB$  and
such that
\begin{subequations}  
\begin{equation}\label{eq:Cyy}
  \i [H, \Psi]\geq f_1(H)\parbb{Q^*Q-T^*T}f_1(H) 
\end{equation} for $H$-bounded operators $Q$ and $T$ (possibly taking
values in an abstract Hilbert space). Then the following  estimates hold for all $z\in\C\setminus\R$ and
all $\psi\in \vB$.
 \begin{align*}
  \begin{split}
   \norm{&QR(z) f_1(H)\psi}^2\leq  \norm{TR(z) f_1(H)\psi}^2  \\&+2\parb{\norm{\Psi}_{\vL (\vH)}+\norm{\Psi}_{\vL (\vB)}}
  \norm{R(z)f_2(H)}_{\vL (\vB,\vB^*)}\norm{\psi}^2_{\vB} .
  \end{split}
\end{align*} In particular, if  also 
\begin{equation}\label{eq:conextra}
  \sup _{\Im z\neq 0}\norm{\abs{T{f_1} (H)}{R(z)}}_{\vL(\vB,\vH)}<
  \infty\mand  \sup _{\Im z\neq 0}\norm{R(z)f_2(H)}_{\vL (\vB,\vB^*)},
\end{equation} \end{subequations}
then  
\begin{equation}\label{eq:2boundobtain}
 \sup _{\Im z\neq 0}\norm{\abs{Q{f_1} (H)}{R(z)}}_{\vL(\vB,\vH)}< \infty.
\end{equation} 

 \end{lemma}
\begin{proof}  The estimate \eqref{eq:Cyy}
  leads for $\phi=R(z) f_1(H)\psi$ to
  \begin{align*}
    \norm{Q\phi}^2-\norm{T\phi}^2 &\leq 2\parb{(\Im
    z)\inp{R(z)\psi, \Psi R(z) \psi}  -\Im \inp{\psi, \Psi R(z) \psi}
                                    }\\
&\leq 2\parb{\norm{\Psi}_{\vL (\vH)} \abs{\Im\inp{ \psi, R(z)f_2(H)
  \psi}}  +\abs{\Im \inp{\Psi \psi, R(z)f_2(H) \psi}} }
\\
&\leq 2\parb{\norm{\Psi}_{\vL (\vH)} +\norm{\Psi}_{\vL (\vB)}}\norm{R(z)f_2(H)}_{\vL (\vB,\vB^*)}\norm{ \psi}^2_{\vB} .
  \end{align*}
\end{proof}

\begin{remarks} \label{remark:RQ}
  \begin{enumerate}[i)]
  \item \label{item:1Q}
The bound \eqref{eq:2boundobtain} 
 combines well  with the  limiting
absorption principle expressed by 
\eqref{eq:LAPbnda} and  \eqref{eq:BB^*a} (and similarly when applied
to $H_a$ rather than to $H$ for any $a\in
    \vA'$).   Note
for example that if $Q$ maps  into $\vH$, $Q{f_1} (H)=\vO(\inp{x}^0)$
(in the sense of \eqref{eq:1712022}) and $\lambda\not\in \vT_{\p}(H)$,
then  \eqref{eq:2boundobtain}  leads  to the existence of  the weak limits
\begin{equation}\label{eq:weakQ0}
{Q{f_1} (H)}R(\lambda\pm \i
  0)=\wlim _{\epsilon\to 0_+} Q{f_1} (H)R(\lambda\pm\i
    \epsilon)\in \vL\parb{\vB,\vH}. 
\end{equation}

\item \label{item:2Q} Bounds of the  form \eqref{eq:2boundobtain} are
   very useful for us (there will be several of those, see \ref{item:8Q} below). We shall use the generic notation $Q$ for
  $H$-bounded operators obeying \eqref{eq:2boundobtain}, in all cases
  with $f_1$ being  any  narrowly supported standard support
  function obeying  $f_1=1$ in a
neighbourhood of any given $\lambda\not\in \vT_{\p}(H)$.
  Such  $Q=Q_1$ may show up from a direct application  of  Lemma \ref{lemma:comSTA}, 
  or possibly the expression $Q_1^*Q_1$ may be bounded from above by $Q_2^*Q_2$ where
  the bound for $Q_2$ results from such  direct application (obviously
  yielding the bound for $Q_1$ also). In the latter case $Q_1f_1 (H)=\vO(\inp{x}^{-1/2})$ in the sense of
  \eqref{eq:1712022} (or possibly only for `components'),  and in   commutator calculations this product  will
   generically arise as  the combination 
  (for $a\in\vA$)
  \begin{equation}\label{eq:rel1and2}
   f_1 (H_a)Q_1^*B_aQ_1f_1 (H);\quad \, B_a\text{ bounded}.
  \end{equation} Such expressions will   be  treated by using
  the  bound  \eqref{eq:2boundobtain} with $Q=Q_1$ for
  $H$ (as well as for  $H$ replaced by $H_a$).

\item \label{item:3Q} The following
  basic example of a `$Q$-bound' does not fit into the scheme of Lemma
  \ref{lemma:comSTA}:
\begin{align}\label{eq:2boundLapRes}
  \begin{split}
   \sup _{\Im z\neq 0}\norm{Q_s{f_1}& (H)R(z)}_{\vL(\vB,\vH)}<
 \infty;\\&\quad Q_s=r^{-s},\,\, s\in (1/2,1). 
  \end{split}
\end{align} Here $ f_1$ is given as in
\ref{item:2Q}, and of course the bound  
 is  a consequence of \eqref{eq:BB^*a}.
\item \label{item:4Q} 
Another example of a bound of the  form \eqref{eq:2boundobtain} is
the following  version 
of \eqref{eq:MicroB0} from \cite{Sk3}.
We apply Lemma \ref{lemma:comSTA}  with
\begin{subequations}
\begin{equation}\label{eq:PropBasic8}
  \Psi=\pm f_1(H)\chi_+( \pm B/\epsilon) f_1(H),
\end{equation} where 
$\epsilon\in(0, \epsilon_0]$ and $ f_1$ is given as in
\ref{item:2Q}.  Then
\eqref{eq:LAPbnda}, \eqref{eq:BB^*a} and   \eqref{eq:mourreFull} lead
 to a version of  \eqref{eq:2boundobtain} from which we can deduce  the following concrete 
 bounds (with `prime' denoting the derivative), cf. \ref{item:2Q}:
\begin{align}\label{eq:2boundobtain338}
  \begin{split}
  \sup _{\Im z\neq 0}\norm{Q_{\epsilon\pm}  &{f_1}
   (H)R(z)}_{\vL(\vB,\vH)}< \infty;\\&\quad Q_{\epsilon\pm}
 =r^{-1/2}\sqrt{\chi_+'}( \pm B/\epsilon ),\,\, 0<\epsilon\leq
 \epsilon_0.
\end{split}  
  \end{align}
\item \label{item:8Q} A  complete list of `$Q$-bounds' to be applied
  in this paper to  expressions of the form  \eqref{eq:rel1and2}  may
  be provided by  
  supplementing  \eqref{eq:2boundLapRes} and
  \eqref{eq:2boundobtain338} stated above with the bounds \eqref{eq:2boundobtain33},
  \eqref{eq:2boundobtain33NOT}  and \eqref{eq:QanontreshB}.
\end{subequations}
 \end{enumerate}
\end{remarks}

\section{Yafaev type constructions}\label{sec: Yafaev type
  constructions}

In \cite{Ya3} various  real functions $m_a\in C^\infty(\mathbf X)$,
$a\in\vA'$,  and $m_{a_{\max}}\in C^\infty(\mathbf X)$  are 
constructed. They  are  homogeneous of degree $1$ for $\abs{x}> 5/6$,
and in addition $ m_{a_{\max}}$ is (locally) convex in that region. These
functions are constructed as depending  of a  small parameter
$\epsilon>0$, and once given, they  are  used in the constructions
\begin{equation}\label{eq:M_a0}
 M_a=2\Re(w_a\cdot p)=-\i\sum_{j\leq d}\parb{(w_a)_j\partial_{x_j}+\partial_{x_j}(w_a)_j};\quad  w_a=\mathop{\mathrm{grad}} m_a.   
\end{equation}
The operators $M_a$, $a\in\vA'$, may  be considered as `channel
localization operators', while the operator $M_{a_{\max}}$ enters as a technical quantity controlling  commutators of  
the Hamiltonian and the channel
localization operators. 
We are going to use and extend  these constructions/ideas  in 
partly generalized settings. 

A basic ingredient of \cite{Ya3} are the cones 
\begin{equation}\label{eq:collPla1}
  \mathbf
  X_a(\varepsilon)=\set{x\in\bX\mid\abs{x_a}>(1-\varepsilon)\abs{x}};\quad
  a\in\vA',\,
  \varepsilon \in (0,1).
\end{equation} Note that $\mathbf X_a \setminus\set{0}\subseteq
\mathbf X_a(\varepsilon)$ and that  in fact $\mathbf X_a
\setminus\set{0}=\cap_{\varepsilon\in(0,1)}\,\mathbf X_a(\varepsilon)$. Note also that $\mathbf X_0(\varepsilon)=\mathbf X\setminus \set{0}$.  

We are going to consider families of such cones having a  `width'
roughly being proportional to $\epsilon^{d_a}$ for a 
parameter $\epsilon>0$ (recall  $d_a=\dim
\mathbf X_a$). This parameter is taken sufficiently small   as primarily 
determined (to be elaborated on in  Subsection
\ref{subsec: Homogeneous Yafaev type functions})  by the following
geometric property: There exists $C>0$ such that for all $a,b \in \vA'$ and $x\in\mathbf X$
\begin{equation}\label{eq:3.2}
  \abs{x^c}\leq C\parb{\abs{x^a}+\abs{x^b}};\quad c=a\vee b.
  \end{equation}

\subsection{Homogeneous Yafaev type functions for a general $
  a_0$}\label{subsec: Homogeneous Yafaev type functions} 
In this subsection we  fix any $ a_0\in \vA\setminus\{a_{\min}\}$ and
 let $\vA_{ a_0}=\set{a\leq  a_0}$, $\vA'_{ a_0}=\set{a\lneq  a_0}$
   and for $a\in \vA'_{ a_0}$
\begin{equation}\label{eq:collPla2}
  \mathbf X^{a_0}_a(\varepsilon)=\set{x\in \bX\mid \abs{x^{a_0}_a}>(1-\varepsilon)\abs{x^{a_0}}};\quad
  \varepsilon \in (0,1).
\end{equation} Note that \eqref{eq:collPla2} coincides with
\eqref{eq:collPla1} if $ a_0=a_{\max}$. We record
\begin{equation}\label{eq:nearCol}
  \mathbf
  X^{a_0}_a(\varepsilon)\subseteq \set{\abs{x^{a}}< \sqrt{2\ \varepsilon}\abs{x^{a_0}}}.
  \end{equation}

For small $\epsilon>0$ and any $a\in \vA'_{ a_0}$ we define
\begin{equation*}
  \varepsilon^a_k= k\epsilon^{d_a},\quad k=1,2,3,
\end{equation*} and view the numbers  $\varepsilon_a$ in the
interval 
$(\varepsilon^a_2, \varepsilon^a_3)$  as 
`admissible'. Alternatively, viewed as a free parameter, $\varepsilon_a$ is
called admissible if 
\begin{equation}\label{eq:admis}
  \varepsilon^a_2=2\epsilon^{d_a}<\varepsilon_a<3\epsilon^{d_a}=\varepsilon^a_3.
\end{equation} Using an  arbitrary
ordering of $\vA'_{ a_0}$ we introduce for these  parameters  the `admissible' vector
\begin{equation*}
  \bar\varepsilon=(\varepsilon_{a_1}, \dots
\varepsilon_{a_n}),\quad n=n(a_0)= \# \vA'_{ a_0},
\end{equation*} and denote by $\d \bar\varepsilon$ the corresponding
Lebesgue measure. The smallness of $\epsilon>0$ will be needed at
various points below; it will be determined by geometric  considerations (primarily  \eqref{eq:3.2}) 
and in particular  taken independently of the considered vectors $x\in\mathbf X\setminus \mathbf X_{a_0}$.

Let for $\varepsilon>0$ and $a\in \vA'_{ a_0}$
\begin{equation*}
  h_{a,\varepsilon}(x)=h^{a_0}_{a,\varepsilon}(x)=\parb{(1+\varepsilon)^2\abs{x_a^{a_0}}^2+\abs{x_{a_0}^2}}^{1/2};\quad
  x\in \mathbf X\setminus \mathbf X_{a_0}.
\end{equation*} (We prefer for simplicity below to use the indicated short-hand notation $h_{a,\varepsilon}(x)$.)
\begin{lemma}\label{lem:Y3.1} Let $a,b\in \vA'_{ a_0}$, ${a\not\leq
    b}$ and $x\in\mathbf X^{a_0}_a(\varepsilon^a_1)$.  Then (for all
  small $\epsilon>0$)
  \begin{equation}
    \label{eq:psi1}
    h_{b,\varepsilon^b_3}(x)<\max_{\vA'_{ a_0}\ni  c\geq a } h_{c,\varepsilon^c_2}(x).
  \end{equation}
\end{lemma}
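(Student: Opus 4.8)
The plan is to compare the two sides of \eqref{eq:psi1} by squaring and estimating the relevant internal coordinates. First I would fix $x\in\mathbf X^{a_0}_a(\varepsilon^a_1)$, so by \eqref{eq:nearCol} we have $\abs{x^a}<\sqrt{2\varepsilon^a_1}\,\abs{x^{a_0}}=\sqrt{2\epsilon^{d_a}}\,\abs{x^{a_0}}$, i.e. $x$ is very nearly on the collision plane $\mathbf X_a^{a_0}$ relative to the scale $\abs{x^{a_0}}$. The key geometric input is \eqref{eq:3.2}: writing $c=a\vee b$ and recalling that $a\not\leq b$ forces $c\gneq a$ within $\vA'_{a_0}$ (here I need to check that $c$ still lies in $\vA'_{a_0}$ — since $a,b\leq a_0$ and $a_0\neq a_{\min}$, one has $c=a\vee b\leq a_0$; and $c=a_0$ is impossible... actually $c\le a_0$ with $c\ge a$, and $c$ ranges over the index set in the max on the right-hand side, so this is fine even if $c=a_0$ is excluded by the $\vA'_{a_0}$ restriction — I would handle the boundary case by choosing $\epsilon$ small enough that the estimate is strict). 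Applying \eqref{eq:3.2} inside $\mathbf X^{a_0}$ (the planes $\mathbf X_b^{a_0}$ etc. satisfy the same intersection structure) gives $\abs{x_c^{a_0}}\le\abs{x^{a_0}}$ trivially, but more usefully a lower bound $\abs{x_c^{a_0}}\ge \abs{x^{a_0}} - C'\abs{x^{a_0}_a}$ type estimate via the complementary form of \eqref{eq:3.2} applied to $x^{a_0}-x^{a_0}_a\cdot(\text{stuff})$; the upshot is that $x$ being near $\mathbf X_a^{a_0}$ makes $\abs{x_c^{a_0}}$ close to $\abs{x^{a_0}}$ for every $c\ge a$, in particular $\abs{x_c^{a_0}}^2\ge (1-O(\epsilon))\abs{x^{a_0}}^2$.

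Next I would expand both sides using $h_{b,\varepsilon}(x)^2=(1+\varepsilon)^2\abs{x_b^{a_0}}^2+\abs{x_{a_0}}^2$ and subtract the common term $\abs{x_{a_0}}^2$. So it suffices to show
\begin{equation*}
  (1+\varepsilon^b_3)^2\abs{x_b^{a_0}}^2<\max_{\vA'_{a_0}\ni c\ge a}(1+\varepsilon^c_2)^2\abs{x_c^{a_0}}^2.
\end{equation*}
On the left, $\abs{x_b^{a_0}}^2\le\abs{x^{a_0}}^2$ always, and $(1+\varepsilon^b_3)^2=1+O(\epsilon^{d_b})$. On the right, taking $c=a$ (or any fixed $c\ge a$ for which the near-collision estimate applies) gives $\abs{x_c^{a_0}}^2\ge(1-O(\epsilon^{d_a}))\abs{x^{a_0}}^2$ and $(1+\varepsilon^c_2)^2\ge 1+2\epsilon^{d_c}$. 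The comparison then reduces to a numerical inequality among powers of $\epsilon$ with fixed exponents $d_a, d_b, d_c$ and fixed combinatorial constants: since $a\not\leq b$ we cannot have $b\ge a$, so either $d_b>d_a$ (wider cone on the right beats narrower on the left) or $d_b$ and $d_a$ are incomparable in a way that the error terms on the right, of order $\epsilon^{d_a}$, dominate. The cleanest route: bound the left by $(1+3\epsilon^{d_b})^2\abs{x^{a_0}}^2$ and the right from below by $(1+2\epsilon^{d_c})^2(1-2\epsilon^{d_a})\abs{x^{a_0}}^2$ for a suitable $c\ge a$ with $d_c\le d_a$ (e.g. $c=a$ itself, $d_c=d_a$), then verify $1+3\epsilon^{d_b}<(1+2\epsilon^{d_a})^2(1-2\epsilon^{d_a})=1+2\epsilon^{d_a}-O(\epsilon^{2d_a})$ for all small $\epsilon$; this holds precisely because $d_b\ge d_a$ would contradict nothing unless $d_b=d_a$, and in the borderline $d_b=d_a$ case one uses that $a\not\le b$ together with \eqref{eq:3.2} to produce a genuine gap (the near-collision excess is strictly second order), while for $d_b>d_a$ it is immediate.

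The main obstacle I anticipate is the borderline case $d_b=d_a$ (with $a\not\le b$): here the crude $\abs{x_b^{a_0}}\le\abs{x^{a_0}}$ bound on the left is not good enough, since the leading-order terms $2\epsilon^{d_a}$ match. The resolution must use \eqref{eq:3.2} sharply: because $a\not\le b$, the plane $\mathbf X_b^{a_0}$ is transverse to $\mathbf X_a^{a_0}$, so a point $x$ within angle $\sqrt{\epsilon^{d_a}}$ of $\mathbf X_a^{a_0}$ has $\abs{x_b^{a_0}}^2\le(1-\delta)\abs{x^{a_0}}^2$ for a fixed $\delta>0$ depending only on the geometry (the "transversality constant" extractable from $C$ in \eqref{eq:3.2}), *or* $c=a\vee b$ has $d_c<d_a$ strictly, in which case $(1+\varepsilon^c_2)$ on the right is the larger leading coefficient. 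Making this dichotomy quantitative and uniform in $x\in\mathbf X\setminus\mathbf X_{a_0}$ — independent of $\abs{x^{a_0}}$, which scales out by homogeneity — is the technical heart, and I expect it is exactly what Subsection \ref{subsec: Homogeneous Yafaev type functions} sets up via the proportionality of cone widths to $\epsilon^{d_a}$. Once that dichotomy is in hand, choosing $\epsilon$ small (depending only on $C$, the $d_a$'s and $\#\vA'_{a_0}$) closes the estimate.
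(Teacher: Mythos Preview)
Your proposal has the right ingredients --- the role of $\tilde c:=a\vee b$, the geometric input \eqref{eq:3.2}, and the comparison of $\epsilon$-powers --- but the case split is mis-organised and, as written, the argument breaks for $d_b\le d_a$. Your ``cleanest route'' uses $c=a$ on the right, giving essentially $1+2\epsilon^{d_a}$, against $1+3\epsilon^{d_b}$ on the left; this works only when $d_b>d_a$. For $d_b\le d_a$ (perfectly possible under $a\not\le b$, e.g.\ for incomparable $a,b$ with $\dim\mathbf X_b\ge\dim\mathbf X_a$) the inequality $3\epsilon^{d_b}<2\epsilon^{d_a}$ is simply false. Your final-paragraph dichotomy does not rescue this: since $a\not\le b$ forces $\tilde c\gneq a$, one \emph{always} has $d_{\tilde c}<d_a$, so that is not an alternative to transversality; and taking $c=\tilde c$ on the right only helps if $\abs{x^{a_0}_{\tilde c}}$ is itself close to $\abs{x^{a_0}}$, which you have not ensured.

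The correct split --- and the one the paper uses --- is on the \emph{location of $x$}, not on dimensions alone. Assume $\abs{x^{a_0}}=1$. (I) If $\tilde c\in\vA'_{a_0}$ and $x\in\mathbf X^{a_0}_{\tilde c}(\varepsilon^{\tilde c}_1)$, compare against $h_{\tilde c,\varepsilon^{\tilde c}_2}$; since $b\lneq\tilde c$ gives $d_b>d_{\tilde c}$, the inequality $(1+\varepsilon^b_3)\le(1+\varepsilon^{\tilde c}_2)(1-\varepsilon^{\tilde c}_1)$ holds for small $\epsilon$. (II) If $\tilde c\in\vA'_{a_0}$ but $x\notin\mathbf X^{a_0}_{\tilde c}(\varepsilon^{\tilde c}_1)$, then $\abs{x^{\tilde c}}^2\ge\varepsilon^{\tilde c}_1$ while $\abs{x^a}^2<\varepsilon^a_2$, so \eqref{eq:3.2} yields $\abs{x^{a_0}_b}^2\le 1-2\kappa\varepsilon^{\tilde c}_1$ for a fixed geometric $\kappa>0$; now $c=a$ on the right suffices. (III) If $\tilde c=a_0$ (a case you flagged but did not handle), \eqref{eq:3.2} gives $\abs{x^{a_0}_b}^2\le 1-2\kappa$ with $\kappa$ independent of $\epsilon$, and again $c=a$ works. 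Note that in (II) the transversality gap is of order $\epsilon^{d_{\tilde c}}$, not a fixed $\delta$ as you suggested.
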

\begin{proof}
  Note that while here  $a,b,c\in \vA'_{ a_0}$ it could be that 
  $\tilde c:=a\vee b=a_0$.  By assumption $b\lneq \tilde c$. This leads us to consider three cases, depending
  on whether $\tilde c=a_0$
   or $\tilde c\in \vA'_{ a_0}$. We can assume that $\abs{x^{a_0}}=1$.

{\bf I}. Suppose $\tilde c\in \vA'_{ a_0}$ and  $x\in \mathbf X^{a_0}_{\tilde c}(\varepsilon^{\tilde c}_1)$.  We then verify that 
\begin{equation}
    \label{eq:psi2a}
    h_{b,\varepsilon^b_3}(x)< h_{{\tilde c},\varepsilon^{\tilde c}_2}(x),
  \end{equation} proving \eqref{eq:psi1}.  By assumptions
  \begin{equation*}
   \abs{x^{a_0}_b}\leq 1\mand \abs{x^{a_0}_{\tilde c}}>(1-\varepsilon^{\tilde c}_1).
  \end{equation*}  So it suffices to check that 
  \begin{equation*}
    (1+\varepsilon^b_3)\leq (1+\varepsilon^{\tilde c}_2)(1-\varepsilon^{\tilde c}_1).
  \end{equation*} Since $b\lneq \tilde c$, this is indeed valid for
  small $\epsilon> 0$.

{\bf II}. Suppose $\tilde c\in \vA'_{ a_0}$ and  $x\in \mathbf
X^{a_0}_a(\varepsilon^a_1)\setminus \mathbf
X^{a_0}_{\tilde c}(\varepsilon^{\tilde c}_1)$. We then verify that 
\begin{equation}
    \label{eq:psi2b}
    h_{b,\varepsilon^b_3}(x)< h_{{a},\varepsilon^{a}_2}(x),
  \end{equation} proving \eqref{eq:psi1}. Thanks to the bounds
\begin{equation*}
    \abs{x^{a_0}_{a}}>(1-\varepsilon^{a}_1)\mand \abs{x^{a_0}_{\tilde c}}\leq(1-\varepsilon^{\tilde c}_1),
  \end{equation*} we deduce
  \begin{equation*}
    \abs{x^{a}}^2<1 -(1-\varepsilon^{a}_1)^2<\varepsilon^{a}_2
\mand \abs{x^{\tilde c}}^2\geq 1 -(1-\varepsilon^{\tilde c}_1)^2\geq \varepsilon^{\tilde c}_1.
\end{equation*} Also it follows from our assumption on $x$ that  $a\lneq \tilde c$. Due to \eqref{eq:3.2} and these bounds we can find
$\kappa>0$ independent of small $\epsilon>0$ (and $x$)  such that
\begin{equation*}
  \abs{x^{a_0}_{b}}^2\leq 1-2\kappa \varepsilon^{\tilde c}_1.
\end{equation*} Then \eqref{eq:psi2b} is  trivially fulfilled, since 
\begin{equation*}
  (1+\varepsilon^{b}_3)(1-\kappa \varepsilon^{\tilde c}_1)<1 < (1+\varepsilon^{a}_2)   (1-\varepsilon^{a}_1).
\end{equation*}

  {\bf III}. Suppose $\tilde c=a_0$ and $x\in\mathbf
  X^{a_0}_a(\varepsilon^a_1)$. Then using  \eqref{eq:3.2} as before we   find
$\kappa>0$ independent of small $\epsilon>0$ such  that 
\begin{equation*}
  \abs{x^{a_0}_{b}}^2\leq 1-2\kappa.
\end{equation*} This leads  to \eqref{eq:psi2b}, noting  that similarly
\begin{equation*}
  (1+\varepsilon^{b}_3)(1-\kappa )< (1+\varepsilon^{a}_2)   (1-\varepsilon^{a}_1),
\end{equation*} and then in turn to \eqref{eq:psi1}.
\end{proof}

 Letting  $\Theta= 1_{[0,\infty)}$,  we define for  any $a\in \vA'_{
   a_0}$ and any admissible vector $\bar\varepsilon$
  \begin{equation*}
    m_a(x, \bar \varepsilon)= m^{a_0}_a(x, \bar \varepsilon)=h_{a,\varepsilon_a}(x)\Theta\parb{
      h_{a,\varepsilon_a}(x)-\max_{\vA'_{ a_0} \ni c\neq a } h_{c,\varepsilon_c}(x)};\quad
  x\in\mathbf X\setminus \mathbf X_{a_0}.
  \end{equation*} Here the second factor can be replaced by the product
  \begin{equation*}
    \Theta(\cdot)=\Pi_{ b\in \vA'_{ a_0}}\,\Theta\parb{ h_{a,\varepsilon_a}(x)-h_{b,\varepsilon_b}(x)}.
  \end{equation*}
\begin{lemma}\label{lem:Y3.2} For all  $a,b\in \vA'_{ a_0}$,
  $x\in\mathbf X^{a_0}_b(\varepsilon^b_1)$ and  admissible vectors $\bar\varepsilon$:
  \begin{enumerate}[1)]
  \item\label{item:Y3.2.1} If ${b\not\leq
    a}$, then $m_a(x, \bar \varepsilon)=0$.
  \item \label{item:Y3.2.2} If ${b\leq
    a}$, then $m_a(x, \bar \varepsilon)=m_a(x_b, \bar
  \varepsilon)$ and 
\begin{equation*}
    m_a(x, \bar \varepsilon)=h_{a,\varepsilon_a}(x)\Theta\parb{
      h_{a,\varepsilon_a}(x)-\max_{ \vA'_{ a_0}\ni c\geq b  } h_{c,\varepsilon_c}(x)}.
  \end{equation*}
  \end{enumerate}
\end{lemma}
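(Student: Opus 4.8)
\textbf{Proof proposal for Lemma \ref{lem:Y3.2}.}

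The plan is to trace through the definitions using the geometric information already assembled, especially Lemma \ref{lem:Y3.1} and the condition $x\in\mathbf X^{a_0}_b(\varepsilon^b_1)$. First I would record the elementary comparison facts about the functions $h_{c,\varepsilon}$: for fixed $x$ the map $\varepsilon\mapsto h_{c,\varepsilon}(x)$ is nondecreasing, so whenever $c\leq c'$ (so that $\abs{x^{a_0}_c}\geq\abs{x^{a_0}_{c'}}$, because $\mathbf X^{a_0}_{c'}\subseteq\mathbf X^{a_0}_c$ when $c\leq c'$ as subspaces of $\mathbf X^{a_0}$... wait, I should double-check the direction: $c\leq c'$ means $\bX_c\supseteq\bX_{c'}$, hence $\bX^{a_0}_c=\bX_c\cap\bX^{a_0}\supseteq\bX^{a_0}_{c'}$, so $\abs{x^{a_0}_c}\geq\abs{x^{a_0}_{c'}}$). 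Combined with admissibility \eqref{eq:admis}, this lets me compare $h$'s at different clusters and different admissible widths up to controlled error terms. The decomposition $x=x^b\oplus x_b$ and the fact that for $a\geq b$ one has $x^{a_0}_a=\pi^{a_0}_a x$ depends only on $x_b$ (since $\bX^{a_0}_a\subseteq\bX_b$ when $a\geq b$, as $\bX_a\subseteq\bX_b$) is what will give the $x$-versus-$x_b$ dependence claim in part \ref{item:Y3.2.2}.

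For part \ref{item:Y3.2.1}, assume $b\not\leq a$, i.e. $a\not\geq b$, and $x\in\mathbf X^{a_0}_b(\varepsilon^b_1)$. I want $m_a(x,\bar\varepsilon)=0$, i.e. the $\Theta$-factor vanishes, i.e. there exists $c\neq a$ in $\vA'_{a_0}$ with $h_{c,\varepsilon_c}(x)\geq h_{a,\varepsilon_a}(x)$. The natural candidate is $c$ related to $b$: since $b\not\leq a$ we are exactly in the situation of Lemma \ref{lem:Y3.1} with the roles $(a,b)$ there played by $(b,a)$ — that lemma with $x\in\mathbf X^{a_0}_b(\varepsilon^b_1)$ and $b\not\leq a$ gives $h_{a,\varepsilon^a_3}(x)<\max_{\vA'_{a_0}\ni c\geq b}h_{c,\varepsilon^c_2}(x)$. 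Since $\varepsilon_a<\varepsilon^a_3$ and $\varepsilon^c_2<\varepsilon_c$ by admissibility, monotonicity upgrades this to $h_{a,\varepsilon_a}(x)<\max_{c\geq b}h_{c,\varepsilon_c}(x)$, and the maximizing $c$ satisfies $c\geq b$ while $a\not\geq b$ forces $c\neq a$; hence the $\Theta$-product contains a vanishing factor and $m_a(x,\bar\varepsilon)=0$.

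For part \ref{item:Y3.2.2}, assume $b\leq a$ and $x\in\mathbf X^{a_0}_b(\varepsilon^b_1)$. First, the claim $m_a(x,\bar\varepsilon)=m_a(x_b,\bar\varepsilon)$: I would show that for the present $x$ all the relevant quantities $h_{a,\varepsilon_a}(x)$ and, for the $c$'s that can realize the maximum, $h_{c,\varepsilon_c}(x)$, depend only on $x_b$. Concretely, $h_{a,\varepsilon_a}$ involves $\abs{x^{a_0}_a}$ and $\abs{x_{a_0}}$; since $a\geq b$ we have $\bX^{a_0}_a\subseteq\bX_b$ and $\bX_{a_0}\subseteq\bX_b$, so both depend only on $\pi_b x=x_b$. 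For the competitor clusters $c$: by part \ref{item:Y3.2.1} (applied with $a$ replaced by $c$) any $c$ with $b\not\leq c$ already gives $h_{c,\varepsilon_c}(x)$ not exceeding... actually more simply, I claim the maximum over $c\neq a$ in the $\Theta$ for $m_a$ is, for this $x$, attained (or dominated) by some $c\geq b$, because Lemma \ref{lem:Y3.1}-type reasoning shows the $c\not\geq b$ ones are strictly dominated; and for $c\geq b$ the function $h_{c,\varepsilon_c}$ again depends only on $x_b$. This simultaneously yields the displayed formula in \ref{item:Y3.2.2}: restricting the max to $c\geq b$ does not change it (the $c\not\geq b$ terms being dominated, by the same Lemma \ref{lem:Y3.1} argument with $(c,b)$ in place of $(a,b)$ after swapping, giving $h_{c,\varepsilon_c}(x)<\max_{c'\geq b}h_{c',\varepsilon_{c'}}(x)$), and replacing $c\neq a$ by $c\geq b$ is legitimate since $a\geq b$ itself so $\{c\in\vA'_{a_0}:c\geq b\}\setminus\{a\}$ versus all of $\{c\geq b\}$ differ only by $a$, whose $h$-value is exactly the first factor and so contributes nothing new to the comparison inside $\Theta$.

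The main obstacle I anticipate is the bookkeeping in part \ref{item:Y3.2.2}: one must be careful that the index set over which the maximum is taken can genuinely be reduced to $\{c\geq b\}$ without altering the value of $m_a(x,\bar\varepsilon)$, and that this reduction is uniform in admissible $\bar\varepsilon$ and in $x\in\mathbf X^{a_0}_b(\varepsilon^b_1)$ for all sufficiently small $\epsilon$. This requires invoking Lemma \ref{lem:Y3.1} in the contrapositive form and handling the edge case $\tilde c=a_0\notin\vA'_{a_0}$ (case III of that lemma) exactly as there. The geometric input \eqref{eq:3.2} is what makes the small-$\epsilon$ uniformity work, and I would simply quote it. Everything else is definition-chasing with the monotonicity of $\varepsilon\mapsto h_{c,\varepsilon}$ and the subspace inclusions $\bX^{a_0}_a\subseteq\bX_b$, $\bX_{a_0}\subseteq\bX_b$ valid for $a\geq b$.
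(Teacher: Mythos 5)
Your proposal is correct and follows the paper's own route: part \ref{item:Y3.2.1} and the reduction of the max to $c\geq b$ in part \ref{item:Y3.2.2} both come from Lemma \ref{lem:Y3.1} via \eqref{eq:psi12} (applied with the roles swapped), and the $x_b$-dependence comes from the inclusions $\bX^{a_0}_a,\bX_{a_0}\subseteq\bX_b$ when $a\geq b$, exactly as in the paper.
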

\begin{proof} Fix $x\in\mathbf X^{a_0}_b(\varepsilon^b_1)$. 

By \eqref{eq:admis} and Lemma \ref{lem:Y3.1}, if  ${b\not\leq
    a}$
\begin{equation}
    \label{eq:psi12}
    h_{a,\varepsilon_a}(x)\leq h_{a,\varepsilon^a_3}(x)<\max_{\vA'_{ a_0}\ni c\geq b 
      } h_{c,\varepsilon^c_2}(x)\leq \max_{\vA'_{ a_0} \ni c\geq b } h_{c,\varepsilon_c}(x),
  \end{equation} showing \ref{item:Y3.2.1}.

If on the other hand ${b\leq
    a}$, then it suffices for \ref{item:Y3.2.2} to show that 
  \begin{equation*}
    \max_{c\in \vA'_{ a_0}} h_{c,\varepsilon_c}(x)=\max_{\vA'_{
    		a_0}\ni c\geq  b  } h_{c,\varepsilon_c}(x),
  \end{equation*} noting  that the right-hand side only depends on
  $x_b$. Obviously  it
  suffices to show that for any $\breve c\in \vA'_{ a_0}$ 
\begin{equation*}
    h_{\breve c,\varepsilon_{\breve c}}(x)\leq \max_{\vA'_{
    		a_0} \ni c\geq  b } h_{c,\varepsilon_c}(x),
  \end{equation*} and  we can assume that $ b\not\leq
  \breve c$. The stated  bound then  follows  from \eqref{eq:psi12} applied to
  $(a,b)=(\breve c,b)$.
\end{proof}

Next we  apply Lemma \ref{lem:Y3.2} \ref{item:Y3.2.2} to $b=a$. 
\begin{corollary}\label{cor:aligb} For any  $x\in\mathbf
  X^{a_0}_a(\varepsilon^a_1)$, $a\in \vA'_{ a_0}$, and any admissible vector $\bar\varepsilon$ 
  \begin{equation*}
    m_a(x, \bar \varepsilon)=h_{a,\varepsilon_a}(x)\Theta\parb{
      h_{a,\varepsilon_a}(x)-\max_{ c\in \vA'_{ a_0},\,c\geq a} h_{c,\varepsilon_c}(x)}.
  \end{equation*}
\end{corollary}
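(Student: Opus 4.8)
The plan is to derive Corollary~\ref{cor:aligb} directly from Lemma~\ref{lem:Y3.2}~\ref{item:Y3.2.2} by specializing the auxiliary cluster decomposition there to $b=a$. First I would verify that the hypotheses match: taking $b=a$ in Lemma~\ref{lem:Y3.2} requires $x\in\mathbf X^{a_0}_a(\varepsilon^a_1)$, which is exactly the standing assumption of the corollary, and the dichotomy in that lemma is governed by whether $b\leq a$, a relation that holds trivially when $b=a$. Hence part~\ref{item:Y3.2.2} of Lemma~\ref{lem:Y3.2} applies with no further work.

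With $b=a$, the conclusion of Lemma~\ref{lem:Y3.2}~\ref{item:Y3.2.2} reads
\begin{equation*}
  m_a(x,\bar\varepsilon)=h_{a,\varepsilon_a}(x)\,\Theta\parb{h_{a,\varepsilon_a}(x)-\max_{\vA'_{a_0}\ni c\geq a}h_{c,\varepsilon_c}(x)},
\end{equation*}
which is precisely the asserted identity, the index set $\set{c\in\vA'_{a_0}\mid c\geq b}$ becoming $\set{c\in\vA'_{a_0}\mid c\geq a}$ under the substitution. As a byproduct the same part of the lemma also gives $m_a(x,\bar\varepsilon)=m_a(x_a,\bar\varepsilon)$ on the cone $\mathbf X^{a_0}_a(\varepsilon^a_1)$, a fact worth keeping in mind alongside the displayed formula, since it is what makes $m_a$ behave like a channel-$a$ localization weight there.

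There is essentially no obstacle here: all the substance --- the geometric inequality~\eqref{eq:3.2}, the comparison Lemma~\ref{lem:Y3.1}, and the case analysis establishing Lemma~\ref{lem:Y3.2} --- has already been carried out, and the corollary is merely the diagonal case $b=a$, isolated here because it is the form in which $m_a$ enters the later construction of the operators $M_a$ of \eqref{eq:M_a0} and the analysis of their commutators with $H$.
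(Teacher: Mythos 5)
Your proof is correct and is exactly the paper's argument: the paper states the corollary is obtained by applying Lemma~\ref{lem:Y3.2}~\ref{item:Y3.2.2} to $b=a$, which is precisely the specialization you carry out, including the verification that the hypothesis $x\in\mathbf X^{a_0}_a(\varepsilon^a_1)$ and the trivial relation $a\leq a$ put you in case~\ref{item:Y3.2.2}.
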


To average the functions $ m_a(x, \bar \varepsilon)$ over $\bar
\varepsilon$  we introduce for any  $c\in \vA'_{ a_0}$ a
non-negative function $\varphi_c\in C^\infty_\c(\R)$ with
\begin{equation*}
  \supp \varphi_c\subseteq (\varepsilon^c_2,\varepsilon^c_3)
\mand \int \varphi_c(\varepsilon)\,\d \varepsilon=1.
\end{equation*} We then define for each  $a\in \vA'_{ a_0}$
\begin{equation}\label{eq:defma}
  m_a(x)=m^{a_0}_a(x)=\int_{\R^n} m^{a_0}_a(x, \bar \varepsilon)\,\Pi_{c\in \vA'_{
      a_0}}\, \varphi_c(\varepsilon_c)\,\d \bar\varepsilon;\quad x\in\mathbf X\setminus \mathbf X_{a_0}.
\end{equation} Recall the notation $a_{\min}=0$.
\begin{lemma}\label{lem:3.4} Suppose $a\in \vA'_{ a_0}\setminus \set{a_{\min}}$,
   $x\not\in\mathbf
  X^{a_0}_a(\varepsilon^a_3)$ and $x^{a_0}\neq 0$. Then for any admissible vector $\bar\varepsilon$
  \begin{equation*}
    m_a(x)=m_a(x, \bar
  \varepsilon)=0.
  \end{equation*}
\end{lemma}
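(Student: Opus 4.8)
The plan is to show that under the hypotheses $a\in\vA'_{a_0}\setminus\set{0}$, $x\notin\mathbf X^{a_0}_a(\varepsilon^a_3)$ and $x^{a_0}\neq 0$, the second (cut-off) factor in the definition of $m_a(x,\bar\varepsilon)$ vanishes for \emph{every} admissible vector $\bar\varepsilon$; once that is established, averaging against $\Pi_c\varphi_c(\varepsilon_c)$ as in \eqref{eq:defma} immediately gives $m_a(x)=0$ as well, so both equalities in the statement follow at once. Thus the whole content is the pointwise claim $m_a(x,\bar\varepsilon)=0$, and I would reduce it to exhibiting, for the given $x$, some $b\in\vA'_{a_0}$ with $b\geq a$ (and $b\neq a$, or rather $b\gneq a$, which is where $a\neq 0$ and $a\neq a_0$ matter) such that $h_{b,\varepsilon_b}(x)>h_{a,\varepsilon_a}(x)$, forcing the $\Theta$-factor to be $0$.

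\textbf{Key steps.} First, normalize $\abs{x^{a_0}}=1$ (legitimate since $m_a(\cdot,\bar\varepsilon)$ is homogeneous of degree one on $\mathbf X\setminus\mathbf X_{a_0}$, and $x^{a_0}\neq 0$). The hypothesis $x\notin\mathbf X^{a_0}_a(\varepsilon^a_3)$ reads $\abs{x^{a_0}_a}\leq(1-\varepsilon^a_3)$, so $h_{a,\varepsilon_a}(x)^2=(1+\varepsilon_a)^2\abs{x^{a_0}_a}^2+\abs{x_{a_0}}^2\leq(1+\varepsilon_a)^2(1-\varepsilon^a_3)^2+\abs{x_{a_0}}^2$, and since $\varepsilon_a<\varepsilon^a_3$ this is strictly less than $(1-\varepsilon^a_3)(1+\varepsilon^a_3)\cdot(1+\varepsilon_a)\big/(1+\varepsilon^a_3)$-type bounds — more cleanly, $(1+\varepsilon_a)(1-\varepsilon^a_3)<1$ for small $\epsilon$, hence $h_{a,\varepsilon_a}(x)^2<\abs{x^{a_0}_a}^2+\abs{x_{a_0}}^2\leq\abs{x^{a_0}}^2=1$ after absorbing, i.e.\ essentially $h_{a,\varepsilon_a}(x)<1=\abs{x^{a_0}}$. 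Second, observe $a_0\in\vA_{a_0}$ with $a_0\geq a$; since $a\neq a_0$ is not assumed I would instead use that $a_0$ itself is not in $\vA'_{a_0}$, so I cannot take $b=a_0$ directly — this is the subtlety. The resolution: if $a\lneq a_0$ strictly, then for the \emph{largest} $b\in\vA'_{a_0}$ with $a\leq b$ one has $\abs{x^{a_0}_b}$ close to, or at least comparable with, $\abs{x^{a_0}}=1$; if instead $x$ lies far from all such $\mathbf X^{a_0}_b(\varepsilon^b_1)$, one argues via \eqref{eq:3.2} exactly as in case II/III of Lemma~\ref{lem:Y3.1} that $h_{c,\varepsilon_c}(x)$ is bounded away from $1$ for \emph{all} $c\in\vA'_{a_0}$, which would make $m_c(x,\bar\varepsilon)=0$ for all $c$ — but then in particular the maximum defining the $\Theta$-argument for $a$ exceeds $h_{a,\varepsilon_a}(x)$ vacuously only if that max is nonzero, so one needs a genuine competitor. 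The honest route is: use Lemma~\ref{lem:Y3.2}\ref{item:Y3.2.1} contrapositively together with Corollary~\ref{cor:aligb}. If $m_a(x,\bar\varepsilon)\neq 0$ then $h_{a,\varepsilon_a}(x)\geq\max_{c\geq a, c\in\vA'_{a_0}}h_{c,\varepsilon_c}(x)$ and $h_{a,\varepsilon_a}(x)>0$; but by the structure of the $h$'s and \eqref{eq:nearCol}, $h_{a,\varepsilon_a}(x)$ being maximal among $c\geq a$ combined with $x\notin\mathbf X^{a_0}_a(\varepsilon^a_3)$ would force $x$ into some $\mathbf X^{a_0}_b(\varepsilon^b_1)$ with $b\gneq a$ or into $\mathbf X^{a_0}_{a_0}$-proximity, each contradicting either the maximality or $x^{a_0}\neq 0$. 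Finally, I would integrate over $\bar\varepsilon$ to conclude $m_a(x)=0$.

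\textbf{Main obstacle.} The crux is handling the boundary case where $a$ is maximal in $\vA'_{a_0}$ (i.e.\ the only element $\geq a$ in $\vA'_{a_0}$ is $a$ itself), because then there is no competitor $b\in\vA'_{a_0}$ and the vanishing of $m_a(x,\bar\varepsilon)$ must come purely from the argument of $\Theta$ being negative \emph{by itself}, which requires comparing $h_{a,\varepsilon_a}(x)$ against $\max_{c\neq a}h_{c,\varepsilon_c}(x)$ with $c$ ranging over elements \emph{not} above $a$. Here one uses that for such $x$ (close to $\mathbf X_{a_0}$ in the $x^{a_0}_a$-direction but bounded away in norm), \eqref{eq:3.2} produces a $c$ with $a\vee c=a_0$ and $\abs{x^{a_0}_c}$ bounded below, giving $h_{c,\varepsilon_c}(x)>h_{a,\varepsilon_a}(x)$; this is the same geometric mechanism as case III of Lemma~\ref{lem:Y3.1} and is where the proportionality $\varepsilon^a_j\asymp\epsilon^{d_a}$ and smallness of $\epsilon$ are genuinely used. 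I expect this to be a short but slightly fiddly estimate; everything else is bookkeeping on top of Lemma~\ref{lem:Y3.2} and Corollary~\ref{cor:aligb}.
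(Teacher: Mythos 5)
There is a genuine gap, and it traces to a misreading of which competitor $b$ to use in the $\Theta$-factor. Your proposal repeatedly searches for some $b\gneq a$ (or $b$ with $a\vee b=a_0$, invoking \eqref{eq:3.2}), and you flag the ``main obstacle'' precisely because this search can fail when $a$ is maximal in $\vA'_{a_0}$. But the factor $\Theta\parb{h_{a,\varepsilon_a}(x)-\max_{\vA'_{a_0}\ni c\neq a}h_{c,\varepsilon_c}(x)}$ runs over \emph{all} $c\neq a$ in $\vA'_{a_0}$, and the element $c=0=a_{\min}$ is always there (it lies strictly below $a_0$ since $a_0\neq a_{\min}$). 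The hypothesis $a\neq 0$ is exactly what makes $c=0$ an admissible competitor. Corollary~\ref{cor:aligb}, which restricts the competitors to $c\geq a$, only applies when $x\in\mathbf X^{a_0}_a(\varepsilon^a_1)$ — the opposite of the present hypothesis — so it is not relevant here, and your attempt to invoke it or its contrapositive sends you in the wrong direction.

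With $c=0$ the proof is one line and does not need \eqref{eq:3.2} or any of the case analysis from Lemma~\ref{lem:Y3.1}. Normalize $\abs{x^{a_0}}=1$; then $\abs{x^{a_0}_0}=\abs{x^{a_0}}=1$, and the hypothesis $x\notin\mathbf X^{a_0}_a(\varepsilon^a_3)$ gives $\abs{x^{a_0}_a}\leq 1-\varepsilon^a_3$. Since admissibility gives $\varepsilon_a<\varepsilon^a_3$ and $\varepsilon_0>0$,
\begin{equation*}
  (1+\varepsilon_a)\abs{x^{a_0}_a}\leq(1+\varepsilon^a_3)(1-\varepsilon^a_3)<1<1+\varepsilon_0=(1+\varepsilon_0)\abs{x^{a_0}_0},
\end{equation*}
so $h_{a,\varepsilon_a}(x)<h_{0,\varepsilon_0}(x)$ (the $\abs{x_{a_0}}^2$ term is common to both), the $\Theta$-factor vanishes identically in $\bar\varepsilon$, and integration gives $m_a(x)=0$. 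You almost get there in the ``Main obstacle'' paragraph when you note that $c$ ranging over elements \emph{not} above $a$ is what's needed — but then you look for $c$ with $a\vee c=a_0$ via the geometric lemma, which is unnecessary; $c=0$ works unconditionally. Also note your intermediate claim ``$h_{a,\varepsilon_a}(x)<1=\abs{x^{a_0}}$'' is not the right comparison, since $h$ includes the $\abs{x_{a_0}}^2$ contribution; what vanishing of $\Theta$ requires is $h_{a,\varepsilon_a}(x)<h_{c,\varepsilon_c}(x)$ for some $c$, which the above supplies.
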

\begin{proof} We show that $m_a(x, \bar
  \varepsilon)=0$. With  our assumptions we can assume that $\abs{x^{a_0}}=1$
  and it suffices to show that 
  \begin{equation*}
    h_{a,\varepsilon_a}(x)-h_{{a_{\min}},\varepsilon_{a_{\min}}}(x)<0.
  \end{equation*} This bound follows from estimating 
  \begin{equation*}
    (1+\varepsilon_a)\abs{x_a^{a_0}}\leq
    (1+\varepsilon^a_3)\abs{x_a^{a_0}}\leq
    (1+\varepsilon^a_3)(1-\varepsilon^a_3)<  (1+\varepsilon_{a_{\min}}).
  \end{equation*}
\end{proof}
\begin{lemma}
  \label{lem:3.5} Let $a\in \vA'_{ a_0}$ and suppose $x\in\mathbf
  X^{a_0}_a(\varepsilon^a_1)$ obeys that for all  $b\in \vA'_{ a_0}$
  with  $b\gneq
  a $ the vector $x\not\in\mathbf
  X^{a_0}_b(\varepsilon^b_3)$. Then 
  \begin{equation*}
    m_a(x)=\int_\R h_{a,\varepsilon}(x)\,\varphi_a(\varepsilon)\,\d \varepsilon.
  \end{equation*}
\end{lemma}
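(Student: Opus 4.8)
The plan is to show that under the stated hypothesis the averaged function $m_a(x)$ in \eqref{eq:defma} reduces to averaging only in the single variable $\varepsilon_a$, because the $\Theta$-factor becomes identically $1$ for every admissible $\bar\varepsilon$ in the support of $\Pi_c\varphi_c$. First I would invoke Corollary \ref{cor:aligb}: since $x\in\mathbf X^{a_0}_a(\varepsilon^a_1)$ we already know
\[
  m_a(x,\bar\varepsilon)=h_{a,\varepsilon_a}(x)\,\Theta\!\parb{h_{a,\varepsilon_a}(x)-\max_{c\in\vA'_{a_0},\,c\geq a}h_{c,\varepsilon_c}(x)},
\]
so the competitors are narrowed to those $c\gneq a$ together with $c=a$ itself. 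Thus it suffices to prove that for each such $c\gneq a$ in $\vA'_{a_0}$ and each admissible $\bar\varepsilon$ one has $h_{c,\varepsilon_c}(x)< h_{a,\varepsilon_a}(x)$; then the argument of $\Theta$ is nonnegative and $\Theta(\cdot)=1$, whence $m_a(x,\bar\varepsilon)=h_{a,\varepsilon_a}(x)$, and integrating against $\Pi_{c\in\vA'_{a_0}}\varphi_c(\varepsilon_c)\,\d\bar\varepsilon$ — all factors other than $\varphi_a$ integrating to $1$ — gives the claimed formula $m_a(x)=\int_\R h_{a,\varepsilon}(x)\varphi_a(\varepsilon)\,\d\varepsilon$.

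The key step is therefore the comparison $h_{c,\varepsilon_c}(x)<h_{a,\varepsilon_a}(x)$ for $c\gneq a$ under the hypothesis $x\notin\mathbf X^{a_0}_c(\varepsilon^c_3)$. Normalizing $\abs{x^{a_0}}=1$, the hypothesis reads $\abs{x_c^{a_0}}\leq 1-\varepsilon^c_3$, while the hypothesis $x\in\mathbf X^{a_0}_a(\varepsilon^a_1)$ gives $\abs{x_a^{a_0}}>1-\varepsilon^a_1$. Since $c\geq a$ implies $\bX_c\subseteq\bX_a$, and recalling $d_c<d_a$ (so that $\varepsilon^c_j=j\epsilon^{d_c}$ is \emph{much larger} than $\varepsilon^a_j=j\epsilon^{d_a}$ for small $\epsilon$), I expect the inequality
\[
  (1+\varepsilon_c)\abs{x_c^{a_0}}\leq(1+\varepsilon^c_3)(1-\varepsilon^c_3)=1-(\varepsilon^c_3)^2<(1+\varepsilon^a_2)(1-\varepsilon^a_1)\leq(1+\varepsilon_a)\abs{x_a^{a_0}}
\]
to hold for all admissible $\bar\varepsilon$ once $\epsilon>0$ is small, simply because $\varepsilon^a_j=O(\epsilon^{d_a})$ is negligible compared with any fixed power of $\varepsilon^c_3=3\epsilon^{d_c}$; squaring and using $h^2=(1+\varepsilon)^2\abs{x_\bullet^{a_0}}^2+\abs{x_{a_0}}^2$ with the common additive term $\abs{x_{a_0}}^2$ then upgrades this to $h_{c,\varepsilon_c}(x)<h_{a,\varepsilon_a}(x)$. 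This pattern is exactly the one used in Lemma \ref{lem:3.4}; the only subtlety, and the one place requiring a little care, is making the powers of $\epsilon$ match across the (finitely many) pairs $a\lneq c$ uniformly, which is precisely why $\epsilon$ is chosen small at the end and independently of $x\in\mathbf X\setminus\mathbf X_{a_0}$. I do not anticipate a genuine obstacle here — the lemma is a bookkeeping consequence of Corollary \ref{cor:aligb} plus a one-line homogeneity estimate — but if anything needs attention it is confirming that the chain of inequalities survives \emph{all} admissible $\bar\varepsilon$ simultaneously rather than for a single representative, which is immediate from the monotonicity of $t\mapsto(1+t)$ on the admissible windows.
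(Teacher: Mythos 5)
Your proposal is correct and follows the paper's proof essentially verbatim: reduce via Corollary \ref{cor:aligb} to the comparison $h_{a,\varepsilon_a}(x)\geq h_{b,\varepsilon_b}(x)$ for $b\gneq a$, establish it by a chain of elementary inequalities, and then integrate out the now-trivial $\Theta$-factor. One small remark on the heuristic you offer for the middle inequality: it is not really that $\varepsilon^a_j$ is negligible compared with every power of $\varepsilon^c_3$ (indeed $\epsilon^{d_a}$ need not be $\ll(\epsilon^{d_c})^2$ when, say, $d_a=d_c+1$ with $d_c\geq 2$); rather, what makes $1-(\varepsilon^c_3)^2<(1+\varepsilon^a_2)(1-\varepsilon^a_1)$ automatic for small $\epsilon$ is simply that the left side is strictly below $1$ while the right side equals $1+\epsilon^{d_a}-2\epsilon^{2d_a}>1$, using $\varepsilon^a_2=2\varepsilon^a_1$ — no comparison between $d_a$ and $d_c$ is needed at all.
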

\begin{proof}
  We check that $m_a(x, \bar
  \varepsilon)=h_{a,\varepsilon_a}(x)$, assuming that $\abs{x^{a_0}}=1$. Using Corollary
  \ref{cor:aligb} it suffices to show that
  \begin{equation*}
    h_{a,\varepsilon_a}(x)\geq\max_{ \vA'_{ a_0}\ni c\geq a } h_{c,\varepsilon_c}(x).
  \end{equation*}
In turn it suffices to show that for  any $b$ as in the lemma 
\begin{equation*}
    h_{a,\varepsilon_a}(x)\geq h_{b,\varepsilon_b}(x).
  \end{equation*} This bound follows from
  \begin{equation*}
    (1+\varepsilon_a)\abs{x_a^{a_0}}>
    (1+\varepsilon^a_2)(1-\varepsilon^a_1)\geq (1+\varepsilon^b_3)(1-\varepsilon^b_3)\geq (1+\varepsilon_b)\abs{x^{a_0}_b}.
  \end{equation*}
\end{proof}
\begin{lemma}\label{lem:3.6}
  For any $a\in \vA'_{ a_0}$ the function  $m_a\in C^\infty(\bX\setminus{\bX_{a_0}})$.
\end{lemma}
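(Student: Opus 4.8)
The claim is that each averaged function $m_a = m_a^{a_0}$ is smooth on $\bX \setminus \bX_{a_0}$. The only source of non-smoothness in the integrand $m_a(x,\bar\varepsilon)$ is the characteristic function $\Theta = 1_{[0,\infty)}$, which jumps where one of the differences $h_{a,\varepsilon_a}(x) - h_{b,\varepsilon_b}(x)$ vanishes. The plan is to show that the averaging over the admissible parameters $\bar\varepsilon$ against the smooth densities $\varphi_c$ smooths out these jumps, in the standard way that convolving an indicator against a $C^\infty_\c$ bump produces a $C^\infty$ function, \emph{provided} the argument of $\Theta$ depends on the integration variables with a non-degenerate (nonvanishing) gradient on the relevant region.

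First I would fix $x_0 \in \bX \setminus \bX_{a_0}$ and argue locally near $x_0$. By Lemma \ref{lem:3.4} (and the homogeneity-type estimates), if $x_0 \notin \mathbf X^{a_0}_a(\varepsilon^a_3)$ then $m_a$ vanishes identically in a neighbourhood of $x_0$, hence is trivially smooth there; so I may assume $x_0 \in \mathbf X^{a_0}_a(\varepsilon^a_3)$, and by continuity $x \in \mathbf X^{a_0}_a(\varepsilon^a_1)$ for $x$ near $x_0$ after shrinking. On this neighbourhood, by Corollary \ref{cor:aligb},
\begin{equation*}
  m_a(x,\bar\varepsilon) = h_{a,\varepsilon_a}(x)\,\Pi_{\vA'_{a_0} \ni c \geq a,\, c \neq a}\,\Theta\bigl(h_{a,\varepsilon_a}(x) - h_{c,\varepsilon_c}(x)\bigr),
\end{equation*}
so only the finitely many indices $c \gneq a$ in $\vA'_{a_0}$ contribute a jump factor. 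The key computation is that for each such $c$, on the region where the jump can occur (i.e.\ where $h_{a,\varepsilon_a}(x) \approx h_{c,\varepsilon_c}(x)$, which forces $x$ to be near $\mathbf X^{a_0}_c$ as well, by the argument underlying Lemma \ref{lem:Y3.1}), one has
\begin{equation*}
  \partial_{\varepsilon_a} h_{a,\varepsilon_a}(x) = \frac{(1+\varepsilon_a)\abs{x_a^{a_0}}^2}{h_{a,\varepsilon_a}(x)} \geq c_0 > 0
\end{equation*}
uniformly, because $\abs{x_a^{a_0}}$ is bounded below there (we are inside $\mathbf X^{a_0}_a(\varepsilon^a_1)$ and near $\mathbf X^{a_0}_c$). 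Hence the map $\varepsilon_a \mapsto h_{a,\varepsilon_a}(x) - h_{c,\varepsilon_c}(x)$ is, for fixed $x$ and fixed $\varepsilon_c$, a $C^\infty$ diffeomorphism from the admissible $\varepsilon_a$-interval onto its image with derivative bounded below. Writing the $\varepsilon_a$-integral in \eqref{eq:defma}, I would change variables $\varepsilon_a \rightsquigarrow t = h_{a,\varepsilon_a}(x) - h_{c,\varepsilon_c}(x)$ for the "worst" competitor $c$ (or, more systematically, peel off the $\Theta$-factors one at a time), turning $\int \Theta(t)\,(\cdots)\,\mathrm dt$ into an integral over $[0,\infty)$ of a factor that is $C^\infty$ jointly in $(x, \varepsilon_c, \text{other }\varepsilon\text{'s})$ and compactly supported in $t$; differentiation under the integral sign in $x$ is then justified and produces a smooth result. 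Iterating over the finitely many factors $c \gneq a$ gives $m_a \in C^\infty$ near $x_0$.

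The main obstacle is bookkeeping the interaction of several $\Theta$-factors simultaneously: when two or more competitors $c, c'$ are active at once, one must check that the change of variables for one does not destroy the smoothness-plus-nondegeneracy needed for the others, i.e.\ that the relevant Jacobian / the vector of gradients $\bigl(\partial_{\varepsilon_c}(h_{a,\varepsilon_a} - h_{c,\varepsilon_c})\bigr)_c$ stays non-degenerate. This is where the geometric separation \eqref{eq:3.2} and the scale separation $\varepsilon^c_j = j\epsilon^{d_c}$ of the admissible windows (for small $\epsilon$) are used: they guarantee that on the overlap region all the quantities $\abs{x_c^{a_0}}$ for competing $c \geq a$ are simultaneously bounded below, so each $\Theta$ is mollified in its \emph{own} parameter $\varepsilon_c$ independently, and the multiple integral factorizes enough to run the one-variable argument coordinate by coordinate. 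Once that uniform lower bound is in hand, the rest is the routine "indicator convolved with bump is smooth" argument plus differentiation under the integral sign, which I would not spell out in detail.
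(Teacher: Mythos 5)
Your high-level idea matches the paper's---averaging the jump function $\Theta$ against the smooth, compactly supported densities $\varphi_c$ smooths out the discontinuity---but the execution has a gap and introduces complications the paper avoids.

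The gap: you reduce to $x_0 \in \mathbf X^{a_0}_a(\varepsilon^a_3)$ and then assert that ``by continuity $x \in \mathbf X^{a_0}_a(\varepsilon^a_1)$ for $x$ near $x_0$ after shrinking,'' so that Corollary~\ref{cor:aligb} applies and only competitors $c \geq a$ remain. This is backwards: since $\varepsilon^a_1 < \varepsilon^a_3$, the cone $\mathbf X^{a_0}_a(\varepsilon^a_1)$ is strictly \emph{smaller} than $\mathbf X^{a_0}_a(\varepsilon^a_3)$, so $x_0$ lying in the larger cone does not place nearby points in the smaller one. For $x_0 \in \mathbf X^{a_0}_a(\varepsilon^a_3)\setminus\mathbf X^{a_0}_a(\varepsilon^a_1)$ neither Lemma~\ref{lem:3.4} (which would give $m_a = 0$) nor Corollary~\ref{cor:aligb} is available, and your local argument does not cover that region. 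The paper instead uses Lemma~\ref{lem:3.4} only to reduce, after rescaling to $|x^{a_0}| > 2$, to the set $\{|x_a^{a_0}| > 1\}$, and it keeps \emph{all} factors $b \neq a$---Corollary~\ref{cor:aligb} is never invoked in this proof.

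The complication: you propose to change variables in $\varepsilon_a$, and you yourself flag ``bookkeeping the interaction of several $\Theta$-factors'' as the main obstacle. That obstacle is self-inflicted by the choice of integration variable. Each $\varepsilon_b$ with $b \neq a$ appears in exactly one $\Theta$-factor, so the right move is to integrate out each $\varepsilon_b$ \emph{first}, for fixed $\varepsilon_a$; the factors then decouple completely and there is nothing to bookkeep. The paper moreover exploits the elementary identity
\begin{equation*}
  \Theta\bigl(h_{a,\varepsilon_a}(x) - h_{b,\varepsilon_b}(x)\bigr)
  = \Theta\bigl((1+\varepsilon_a)|x_a^{a_0}| - (1+\varepsilon_b)|x_b^{a_0}|\bigr),
\end{equation*}
valid because $h_{a,\varepsilon_a}$ and $h_{b,\varepsilon_b}$ differ only in the first term under the common square root. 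The argument of $\Theta$ is then affine and strictly decreasing in $\varepsilon_b$, so the $\varepsilon_b$-integral evaluates in closed form to $\varPhi_b\bigl((1+\varepsilon_a)|x_a^{a_0}|/|x_b^{a_0}| - 1\bigr)$, with $\varPhi_b(t)=\int_0^t\varphi_b$. On $\{|x_a^{a_0}|>1\}$ this is manifestly smooth (including near $|x_b^{a_0}|=0$, where the argument is large and $\varPhi_b\equiv 1$), so no nondegeneracy of any Jacobian needs to be checked: one is left with a single $\varepsilon_a$-integral of a jointly smooth, compactly supported integrand, and differentiation under the integral sign gives $m_a\in C^\infty$ there.
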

\begin{proof} By homogeneity it suffices to consider $x$ obeying
  $\abs{x^{a_0}}>2$. By Lemma \ref{lem:3.4} we
  can then also assume that $\abs{x^{a_0}_a}>1$.
  We represent
\begin{align}\label{eq:form00}
  \begin{split}
   m_a(x)=\int \,\d \varepsilon_a
  h_{a,\varepsilon_a}(x)&\,\varphi_a(\varepsilon_a)\, \\ &\prod_{  \vA'_{ a_0}\ni b\neq a}\,\parbb{\int\,\Theta\parb{ h_{a,\varepsilon_a}(x)-h_{b,\varepsilon_b}(x)}\varphi_b(\varepsilon_b)\, \d\varepsilon_b}. 
  \end{split}
\end{align} Clearly
\begin{equation*}
  \Theta\parb{ h_{a,\varepsilon_a}(x)-h_{b,\varepsilon_b}(x)}=\Theta\parb{ (1+\varepsilon_a)\abs{x_a^{a_0}}-(1+\varepsilon_b)\abs{x_b^{a_0}}},
\end{equation*} so introducing $\varPhi_b(t)=\int_0^t\,
\varphi_b(\varepsilon)\,\d \varepsilon$,
\begin{align*}
   m_a(x)&=\int \,\d \varepsilon_a
   h_{a,\varepsilon_a}(x)\,\varphi_a(\varepsilon_a)\, \prod_{ b\neq
     a}\,\parbb{\int\,\Theta\parb{
       (1+\varepsilon_a)\abs{x_a^{a_0}}/\abs{x_b^{a_0}}-1-\varepsilon_b)}\varphi_b(\varepsilon_b)\,
     \d\varepsilon_b}\\
&=\int \,\d \varepsilon_a h_{a,\varepsilon_a}(x)\,\varphi_a(\varepsilon_a)\, \Pi_{ b\neq a}\,\parb{\varPhi_b((1+\varepsilon_a)\abs{x_a^{a_0}}/\abs{x_b^{a_0}}-1)}.
\end{align*} This is a smooth in $x\in\set{\abs{x^{a_0}_a}>1}$.
\end{proof}

We summarize most of the derived  features as follows.

\begin{lemma}\label{lemma:ma1}
For any $a\in\vA'_{a_0}$ the  function 
 $m_a:\mathbf X\setminus \mathbf X_{a_0}\to \R$ fulfills  the following
properties for any sufficiently small  $\epsilon>0$ and any $b\in\vA'_{a_0}$: 
\begin{enumerate}[1)]
\item\label{item:10a} $m_a$ is homogeneous of degree $1$.
\item\label{item:11a} $m_a\in C^\infty (\bX\setminus{\bX_{a_0}})$.
\item\label{item:12a} If $b\leq a$  and  $x\in \mathbf X^{a_0}_b(\varepsilon^b_1)$, then
  $m_a(x)=m_a(x_b)$.
\item\label{item:13a} If ${b\not\leq a}$ and $x\in \mathbf
  X^{a_0}_b(\varepsilon^b_1)$, then   $m_a(x)=0$.
\item\label{item:14a} If $a\neq {a_{\min}}$,  $x\not\in \bX_{a_0}$  and $\abs{x^{a}}\geq \sqrt{2\ \varepsilon^a_3}\abs{x^{a_0}}$, then   $m_a(x)=0$.
\end{enumerate} 
\end{lemma}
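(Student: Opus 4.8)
\textbf{Proof plan for Lemma \ref{lemma:ma1}.}

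The strategy is simply to collect and repackage the facts already proven in Lemmas \ref{lem:Y3.1}--\ref{lem:3.6} and Corollary \ref{cor:aligb}. Property \ref{item:10a} (homogeneity of degree $1$) is immediate from the definition \eqref{eq:defma}: each $h_{a,\varepsilon_a}$ is homogeneous of degree $1$ in $x$, the cutoff $\Theta\parb{h_{a,\varepsilon_a}(x)-\max_{c\neq a}h_{c,\varepsilon_c}(x)}$ is homogeneous of degree $0$ since it compares degree-$1$ quantities, and integration against the fixed densities $\varphi_c(\varepsilon_c)\,\d\bar\varepsilon$ does not affect homogeneity. Property \ref{item:11a} ($m_a\in C^\infty(\bX\setminus\bX_{a_0})$) is exactly Lemma \ref{lem:3.6}.

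For the remaining three items I would argue as follows. Property \ref{item:14a}: if $a\neq 0$, $x\notin\bX_{a_0}$ and $\abs{x^a}\geq\sqrt{2\,\varepsilon^a_3}\,\abs{x^{a_0}}$, then by \eqref{eq:nearCol} (contrapositive) we have $x\notin\mathbf X^{a_0}_a(\varepsilon^a_3)$, so Lemma \ref{lem:3.4} applies and gives $m_a(x)=0$; here one uses that $x^{a_0}\neq 0$, which follows from $x\notin\bX_{a_0}$. Property \ref{item:13a}: if $b\not\leq a$ and $x\in\mathbf X^{a_0}_b(\varepsilon^b_1)$, then for \emph{every} admissible $\bar\varepsilon$ Lemma \ref{lem:Y3.2}\,\ref{item:Y3.2.1} gives $m_a(x,\bar\varepsilon)=0$, and integrating over $\bar\varepsilon$ in \eqref{eq:defma} yields $m_a(x)=0$. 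Property \ref{item:12a}: if $b\leq a$ and $x\in\mathbf X^{a_0}_b(\varepsilon^b_1)$, then for every admissible $\bar\varepsilon$ Lemma \ref{lem:Y3.2}\,\ref{item:Y3.2.2} gives $m_a(x,\bar\varepsilon)=m_a(x_b,\bar\varepsilon)$; substituting into \eqref{eq:defma} shows $m_a(x)=\int m_a(x_b,\bar\varepsilon)\,\Pi_c\varphi_c(\varepsilon_c)\,\d\bar\varepsilon=m_a(x_b)$, where the last equality requires that $x_b$ itself lie in the region where \eqref{eq:defma} is the valid representation — and indeed $x_b\in\mathbf X^{a_0}_b(\varepsilon^b_1)$ since $(x_b)^{a_0}_b=x^{a_0}_b$ and $\abs{(x_b)^{a_0}}\le\abs{x^{a_0}}$, so the cone condition is preserved (one also needs $x_b\notin\bX_{a_0}$, equivalently $x^{a_0}_b\neq 0$, which holds on $\mathbf X^{a_0}_b(\varepsilon^b_1)$).

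Since every clause is a direct quotation of an already-established lemma, there is essentially no obstacle; the only point demanding a line of care is item \ref{item:12a}, where one must check that the translate $x_b$ stays inside the domain $\mathbf X\setminus\bX_{a_0}$ and inside the cone $\mathbf X^{a_0}_b(\varepsilon^b_1)$ so that $m_a(x_b)$ is both defined and computed by the same averaged formula — this is the step I would state explicitly rather than leave to the reader.
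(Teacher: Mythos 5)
Your proof follows the paper's own proof exactly: the paper also derives item 1) from the definition \eqref{eq:defma}, cites Lemma \ref{lem:3.6} for item 2), applies Lemma \ref{lem:Y3.2} to \eqref{eq:defma} for items 3) and 4), and combines Lemma \ref{lem:3.4} with \eqref{eq:nearCol} for item 5). The one point you make explicit that the paper leaves implicit — that for item 3) the projected vector $x_b$ stays in $\mathbf X^{a_0}_b(\varepsilon^b_1)\setminus\bX_{a_0}$, so that $m_a(x_b)$ is both defined and computed by the same averaged formula \eqref{eq:defma} — is a legitimate gap in the terse published proof, and your verification of it (using $(x_b)^{a_0}_b=x^{a_0}_b$ and $\abs{(x_b)^{a_0}}\le\abs{x^{a_0}}$) is correct.
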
 
\begin{proof}
  The property \ref{item:10a} is obvious, and \ref{item:11a} coincides
  with Lemma \ref{lem:3.6}. The properties \ref{item:12a} and
  \ref{item:13a} follow from applying Lemma \ref{lem:Y3.2} to the
  defining expression \eqref{eq:defma}. The property \ref{item:14a}
  follows from Lemma \ref{lem:3.4} and \eqref{eq:nearCol}.
  \end{proof}

Following \cite{Ya3} we introduce for  $ x\in\mathbf X\setminus \mathbf X_{a_0}$
and admissible vectors $\bar\varepsilon$
\begin{align*}
   m_{a_0}(x,\bar\varepsilon)&=m_{a_0}^{a_0}(x,\bar\varepsilon)=\max_{a\in \vA'_{ a_0}} h_{a,\varepsilon_a}(x),\\  m_{a_0}(x)&=m_{a_0}^{a_0}(x)=\int_{\R^n} m_{a_0}^{a_0}(x, \bar \varepsilon)\,\Pi_{c\in \vA'_{
       a_0}}\, \varphi_c(\varepsilon_c)\,\d \bar\varepsilon.
\end{align*} 
\begin{lemma}
  \label{lem:decM} For all $ x\in\mathbf X\setminus \mathbf X_{a_0}$
  \begin{equation}
    \label{eq:decM0}
     m_{a_0}(x)=\Sigma_{a\in \vA'_{ a_0}}\,m_a(x).
  \end{equation}
\end{lemma}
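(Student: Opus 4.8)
The plan is to exploit the local structure of the maximum that defines $m_{a_0}(x,\bar\varepsilon)$, together with the partition-of-unity-like character of the functions $\Theta(\cdot)$ built into each $m_a(x,\bar\varepsilon)$. Indeed, fix $x\in\mathbf X\setminus\mathbf X_{a_0}$ and an admissible vector $\bar\varepsilon$. Let $c=c(x,\bar\varepsilon)\in\vA'_{a_0}$ be an index achieving the maximum, so $m_{a_0}(x,\bar\varepsilon)=h_{c,\varepsilon_c}(x)$. For generic $\bar\varepsilon$ the maximizer is unique (the set of $\bar\varepsilon$ where two of the $h_{b,\varepsilon_b}(x)$ coincide is Lebesgue-null), and then $\Theta\bigl(h_{c,\varepsilon_c}(x)-\max_{c'\neq c}h_{c',\varepsilon_{c'}}(x)\bigr)=1$ while $\Theta\bigl(h_{a,\varepsilon_a}(x)-\max_{c'\neq a}h_{c',\varepsilon_{c'}}(x)\bigr)=0$ for every $a\neq c$, since then $h_{a,\varepsilon_a}(x)<h_{c,\varepsilon_c}(x)\le\max_{c'\neq a}h_{c',\varepsilon_{c'}}(x)$. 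Hence, for a.e.\ $\bar\varepsilon$,
\begin{equation*}
  \Sigma_{a\in\vA'_{a_0}}\,m_a(x,\bar\varepsilon)=h_{c,\varepsilon_c}(x)=m_{a_0}(x,\bar\varepsilon).
\end{equation*}
This is the pointwise (in $\bar\varepsilon$) identity; the only care needed is that $\Theta=1_{[0,\infty)}$ is defined at $0$, so one should check the boundary value convention does not spoil the identity on the null set, but since we integrate in $\bar\varepsilon$ this is immaterial.

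Having established $\Sigma_{a\in\vA'_{a_0}}m_a(x,\bar\varepsilon)=m_{a_0}(x,\bar\varepsilon)$ for a.e.\ $\bar\varepsilon$, I would integrate against the fixed product weight $\Pi_{c\in\vA'_{a_0}}\varphi_c(\varepsilon_c)$, which is a probability density on the admissible region. Linearity of the integral and the definitions \eqref{eq:defma} and of $m_{a_0}(x)$ then give
\begin{equation*}
  \Sigma_{a\in\vA'_{a_0}}\,m_a(x)=\int_{\R^n}\Sigma_{a\in\vA'_{a_0}}m_a(x,\bar\varepsilon)\,\Pi_{c\in\vA'_{a_0}}\varphi_c(\varepsilon_c)\,\d\bar\varepsilon=\int_{\R^n}m_{a_0}(x,\bar\varepsilon)\,\Pi_{c\in\vA'_{a_0}}\varphi_c(\varepsilon_c)\,\d\bar\varepsilon=m_{a_0}(x),
\end{equation*}
which is \eqref{eq:decM0}. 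One should note the finitely many sum and integral interchanges are trivially justified since $\vA'_{a_0}$ is finite and, for $x$ fixed in a homogeneity class, all integrands are bounded with compact support in $\bar\varepsilon$ (so dominated convergence / Fubini are not even needed — these are genuine finite sums under a single integral).

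The only genuine point requiring a remark, rather than an obstacle, is the a.e.\ statement: the identity $\Sigma_a m_a(x,\bar\varepsilon)=m_{a_0}(x,\bar\varepsilon)$ can fail on the measure-zero set of ties, but that set is irrelevant after integration. Alternatively, to avoid invoking "a.e." one can argue directly for every $\bar\varepsilon$ by noting that if several indices tie for the maximum, exactly one of them — the one that is also $\ge$ all others with the convention built into $\Theta$ via the product $\Pi_{b}\Theta(h_{a,\varepsilon_a}-h_{b,\varepsilon_b})$ — contributes its value $h_{a,\varepsilon_a}(x)$ (equal to the max), and all strictly-smaller indices contribute $0$; since $\Theta(0)=1$ each tie contributes and one must check the product structure forces the telescoping correctly. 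In the write-up I would simply restrict to a.e.\ $\bar\varepsilon$, as this is clean and entirely sufficient for \eqref{eq:decM0}. I expect no real difficulty here; the lemma is essentially a bookkeeping identity encoding that the $\Theta$-factors realize a (measure-theoretic) partition of unity subordinate to the regions where each $h_{a,\varepsilon_a}$ is the running maximum.
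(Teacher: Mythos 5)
Your proof follows essentially the same route as the paper's: show that for a.e.\ admissible $\bar\varepsilon$ the maximum defining $m_{a_0}(x,\bar\varepsilon)$ is attained at a unique index $a$, so that exactly one $\Theta$-factor is $1$ and the rest vanish, yielding the pointwise identity $m_{a_0}(x,\bar\varepsilon)=\sum_a m_a(x,\bar\varepsilon)$ a.e., and then integrate against the product weight. That is correct and is the same argument as in the paper.

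One small imprecision to flag. Your parenthetical justification that \emph{the set of $\bar\varepsilon$ for which two of the $h_{b,\varepsilon_b}(x)$ coincide is Lebesgue-null} is not literally true for every $x\in\bX\setminus\bX_{a_0}$: if $x^{a_0}_a=0=x^{a_0}_b$ for two distinct $a,b\in\vA'_{a_0}$, then $h_{a,\varepsilon_a}(x)=h_{b,\varepsilon_b}(x)=|x_{a_0}|$ identically in $\bar\varepsilon$. The paper avoids this by first reducing, via continuity and density, to $x$ with $x^{a_0}_a\neq 0$ for all $a\in\vA'_{a_0}$, for which your null-set claim does hold. You can alternatively fix the claim directly by noting that any degenerate index $a$ with $x^{a_0}_a=0$ satisfies $h_{a,\varepsilon_a}(x)=|x_{a_0}|<h_{0,\varepsilon_0}(x)$ for every $\bar\varepsilon$ (using $x^{a_0}_0=x^{a_0}\neq 0$), hence never realizes the maximum; thus the maximizer is still uniquely achieved a.e.\ among the non-degenerate indices and the argument goes through. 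Also, the closing ``alternative'' sketch for avoiding the a.e.\ restriction does not work as written — if two indices tie at the maximum, both $\Theta$-products equal $1$ (since $\Theta(0)=1$) and both contribute, overcounting — but since you explicitly state you would use the a.e.\ version in the write-up, this does not affect the proof.
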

  \begin{proof}
    To check \eqref{eq:decM0} it suffices by continuity and density to consider
    \begin{equation*}
      x\in \cap_{a\in \vA'_{a_0}}\set{ {x^{a_0}_a}\neq 0}\subseteq \mathbf X\setminus \mathbf X_{a_0}.
    \end{equation*}
 For any such
    $x$ there is a Lebesgue null-set of  admissible vectors  such that
    away  from  the   null-set
\begin{equation}
    \label{eq:decM}
     m_{a_0}(x,\bar\varepsilon)=\Sigma_{a\in \vA'_{ a_0}}\,m_a(x,\bar\varepsilon).
  \end{equation} Indeed for $a,b\in \vA'_{a_0}$, $a\neq
  b$, the relationship 
    $h_{a,\varepsilon_a}(x)=h_{b,\varepsilon_b}(x)$
   is only possible at a null-set in
   the variable $(\varepsilon_a,\varepsilon_b)$. Consequently we can for $\bar
   \varepsilon$ away from  a null-set find $a=a(x, \bar\varepsilon)\in\vA'_{ a_0}$ such that 
\begin{equation*}
    h_{a,\varepsilon_a}(x)> \max_{  \vA'_{ a_0}\ni b\neq a} h_{b,\varepsilon_b}(x).
  \end{equation*} Then $m_b(x,\bar\varepsilon)=0$ for  $b\neq a$ and
  $m_{a_0}(x,\bar\varepsilon)=m_a(x,\bar\varepsilon)$, showing
  \eqref{eq:decM}. Clearly \eqref{eq:decM0} follows from
  \eqref{eq:decM} by integration.
\end{proof}

We summarize  the known  features of $m_{a_0}$ as follows.
\begin{lemma}\label{lemma:m1} The  function 
 $m_{a_0}:\mathbf X\setminus \mathbf X_{a_0}\to \R$ fulfills  the following
properties for any sufficiently small  $\epsilon>0$: 
\begin{enumerate}[i)]
\item\label{item:10b} $m_{a_0}$ is convex and homogeneous of degree $1$.
\item\label{item:11b} $m_{a_0}\in C^\infty (\bX\setminus{\bX_{a_0}})$.
\item\label{item:12b} If $b\in\vA'_{ a_0}$  and  $x\in \mathbf X^{a_0}_b(\varepsilon^b_1)$, then
  $m_{a_0}(x)=m_{a_0}(x_b)$.
\item\label{item:9b} $m_{a_0}=\Sigma_{a\in \vA'_{ a_0}}\,m_a$.
\item\label{item:13b} Let $a\in \vA'_{ a_0}$ and suppose $x\in\mathbf
  X^{a_0}_a(\varepsilon^a_1)$ obeys that for all  $b\in \vA'_{ a_0}$
  with  $b\gneq
  a $ the vector $x\not\in\mathbf
  X^{a_0}_b(\varepsilon^b_3)$. Then 
  \begin{equation}\label{eq:goodEps}
    m_{a_0}(x)=m_a(x)=\int_\R h_{a,\varepsilon}(x)\,\varphi_a(\varepsilon)\,\d \varepsilon.
  \end{equation}
\item\label{item:14b} The derivative of $m_{a_0}\in C^\infty
  (\bX\setminus{\bX_{a_0}})$ is given by the formula
\begin{align}\label{eq:1derFirst}
  \begin{split}
   \quad \nabla m_{a_0}(x)\quad \quad &\\=\sum_{a\in \vA'_{ a_0}}\,\int &\, \d \varepsilon_a
   \parb{\nabla h_{a,\varepsilon_a}(x)}\,\varphi_a(\varepsilon_a)\,\\ & \prod_{\vA'_{ a_0}\ni b\neq a }\,\parbb{\int\,\Theta\parb{ h_{a,\varepsilon_a}(x)-h_{b,\varepsilon_b}(x)}\varphi_b(\varepsilon_b)\, \d\varepsilon_b}. 
  \end{split}
\end{align}
\item\label{item:15b} The derivative of $m_{a_0}\in C^\infty
  (\bX\setminus{\bX_{a_0}})$ obeys 
\begin{align}\label{eq:1derFirstb}
  \begin{split}
   \quad \quad\nabla \parb{m_{a_0}(x)-\abs{x}}&\\=\sum_{a\in \vA'_{ a_0}}\,\int
   \,
 \d \varepsilon_a&
   \nabla \parb{{h_{a,\varepsilon_a}(x)}-\abs{x}}\,\varphi_a(\varepsilon_a)\\  &\prod_{\vA'_{ a_0}\ni b\neq a    }\,\parbb{\int\,\Theta\parb{ h_{a,\varepsilon_a}(x)-h_{b,\varepsilon_b}(x)}\varphi_b(\varepsilon_b)\, \d\varepsilon_b}. 
  \end{split}
\end{align} In particular there exists $C>0$ (being independent of the
parameter  $\epsilon$) such that for all $x\in
\bX\setminus{\bX_{a_0}}$ 
\begin{equation}
  \label{eq:compa}
  \abs{\nabla \parb{m_{a_0}(x)-\abs{x}}}\leq C\sqrt{\epsilon}.
\end{equation}
\end{enumerate} 
\end{lemma}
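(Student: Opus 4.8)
The plan is to work through the items \ref{item:10b}--\ref{item:15b} in turn, assembling what is already proved in Subsection \ref{subsec: Homogeneous Yafaev type functions}. Homogeneity of degree $1$ in \ref{item:10b} is immediate from the definition ($h_{a,\varepsilon}$ homogeneous of degree $1$, the $\bar\varepsilon$-average linear); convexity holds because for fixed $\bar\varepsilon$ the function $m_{a_0}(\cdot,\bar\varepsilon)=\max_{a\in\vA'_{a_0}}h_{a,\varepsilon_a}$ is a finite maximum of convex functions (each $h_{a,\varepsilon}$ being a Euclidean seminorm on $\mathbf X$ in suitable linear coordinates), and integration against the probability measure $\Pi_c\varphi_c(\varepsilon_c)\,\d\bar\varepsilon$ preserves convexity. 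Property \ref{item:11b} is Lemma \ref{lem:3.6} together with \ref{item:9b}, and \ref{item:9b} is exactly Lemma \ref{lem:decM}. For \ref{item:12b} I would invoke Lemma \ref{lem:Y3.2}: on $\mathbf X^{a_0}_b(\varepsilon^b_1)$ the maximum defining $m_{a_0}(\cdot,\bar\varepsilon)$ is attained among the indices $c\geq b$, and each such $h_{c,\varepsilon_c}$ depends on $x$ only through $x_b$; integrating over $\bar\varepsilon$ gives $m_{a_0}(x)=m_{a_0}(x_b)$. Finally \ref{item:13b} combines Lemma \ref{lem:3.5} (which yields $m_a(x)=\int_\R h_{a,\varepsilon}(x)\varphi_a(\varepsilon)\,\d\varepsilon$ under the stated hypothesis) with the vanishing of the remaining summands in \eqref{eq:decM0}: $m_b(x)=0$ for $b\gneq a$ by Lemma \ref{lem:3.4}, and $m_b(x)=0$ for $a\not\leq b$ by Lemma \ref{lemma:ma1} \ref{item:13a} (using $x\in\mathbf X^{a_0}_a(\varepsilon^a_1)$); since these exhaust $b\neq a$, one gets $m_{a_0}(x)=m_a(x)$.

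For the derivative formula \ref{item:14b}: by \ref{item:11b} we know $m_{a_0}\in C^1$, and for Lebesgue-a.e.\ admissible $\bar\varepsilon$ — namely those for which no two of the numbers $h_{c,\varepsilon_c}(x)$ coincide — the function $\max_a h_{a,\varepsilon_a}$ is smooth near $x$ with gradient $\nabla h_{a^*,\varepsilon_{a^*}}(x)$, $a^*=a^*(x,\bar\varepsilon)$ the unique maximizer. Since each $h_{a,\varepsilon_a}$ is Lipschitz in $x$ with a uniform constant, a dominated-convergence argument on difference quotients gives $\nabla m_{a_0}(x)=\int\nabla h_{a^*(x,\bar\varepsilon),\varepsilon_{a^*}}(x)\,\Pi_c\varphi_c\,\d\bar\varepsilon=\sum_{a}\int_{\{a^*=a\}}\nabla h_{a,\varepsilon_a}(x)\,\Pi_c\varphi_c\,\d\bar\varepsilon$; as $\{a^*=a\}=\{h_{a,\varepsilon_a}(x)>h_{b,\varepsilon_b}(x),\ b\neq a\}$ up to a null set and each defining inequality couples only $\varepsilon_a$ and $\varepsilon_b$, the integrals over $\varepsilon_b$ with $b\neq a$ factor out, and recalling $\Theta=1_{[0,\infty)}$ one arrives at \eqref{eq:1derFirst}.

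It remains to treat \ref{item:15b}. Formula \eqref{eq:1derFirstb} follows from \eqref{eq:1derFirst} by bringing $\nabla\abs{x}$ inside each summand, which is legitimate because the total weight $\sum_{a}\int\varphi_a(\varepsilon_a)\Pi_{b\neq a}\parb{\int\Theta\parb{h_{a,\varepsilon_a}(x)-h_{b,\varepsilon_b}(x)}\varphi_b\,\d\varepsilon_b}\,\d\varepsilon_a$ equals $1$, again by the a.e.\ identity $\sum_a\Pi_{b\neq a}\Theta(h_{a,\varepsilon_a}(x)-h_{b,\varepsilon_b}(x))=1$ and $\int\Pi_c\varphi_c\,\d\bar\varepsilon=1$. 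For the bound \eqref{eq:compa} I would estimate each summand of \eqref{eq:1derFirstb} pointwise: in the orthogonal splitting $\mathbf X=\mathbf X^a\oplus(\mathbf X_a\cap\mathbf X^{a_0})\oplus\mathbf X_{a_0}$ one has $\nabla h_{a,\varepsilon}(x)=h_{a,\varepsilon}(x)^{-1}\parb{(1+\varepsilon)^2 x_a^{a_0}+x_{a_0}}$, whose $\mathbf X^a$-component vanishes, so that $\nabla\parb{h_{a,\varepsilon_a}(x)-\abs{x}}$ equals $-x^a/\abs{x}$ on $\mathbf X^a$ plus two radial terms controlled by $\abs{(1+\varepsilon_a)^2\abs{x}-h_{a,\varepsilon_a}(x)}$ and $\abs{\abs{x}-h_{a,\varepsilon_a}(x)}$, both $\vO(\varepsilon_a\abs{x})$; on the support of the $a$-th summand the cone geometry forces $\abs{x^a}\lesssim\sqrt{\varepsilon_a}\,\abs{x^{a_0}}$, i.e.\ $\abs{x^a}/\abs{x}=\vO(\epsilon^{d_a/2})$, and combining these bounds over the channels — where one uses that near a given collision plane only that plane and the coarser ones are active — yields \eqref{eq:compa}. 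Thus \ref{item:10b}--\ref{item:14b} are essentially bookkeeping on top of Subsection \ref{subsec: Homogeneous Yafaev type functions}; the one genuinely quantitative step, and the step I expect to be the main obstacle, is \eqref{eq:compa}: making $\nabla(m_{a_0}-\abs{x})$ small uniformly over all $a_0$ and all $x\in\mathbf X\setminus\mathbf X_{a_0}$, the obstruction being precisely the component $-x^a/\abs{x}$ of $\nabla(h_{a,\varepsilon_a}-\abs{x})$ in the transition zones around the collision planes, which must be controlled by a careful accounting of which channels contribute where.
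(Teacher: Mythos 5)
Your handling of \ref{item:10b}--\ref{item:13b} and \ref{item:9b} is correct and essentially identical to the paper's, which simply chains Lemma \ref{lem:decM}, Lemma \ref{lemma:ma1} and Lemmas \ref{lem:Y3.2}--\ref{lem:3.6}; your explicit convexity argument (max of convex functions, then averaging against a probability density) just fills in a one-word step. For \ref{item:14b} your route is genuinely different from the paper's: the paper first proves the right-hand side of \eqref{eq:1derFirst} is smooth (following \eqref{eq:form00}), then verifies \eqref{eq:1derFirst} as a distributional identity by choosing a basis $e_1,\dots,e_d$ so that the surfaces $\set{h_{a,\varepsilon_a}=h_{b,\varepsilon_b}}$ are transversal to each coordinate direction and integrating by parts one variable at a time via Fubini, concluding by density and continuity. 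You instead use the a priori $C^1$-regularity supplied by \ref{item:11b} and pass to the limit in the difference quotient by dominated convergence, using the a.e.\ uniqueness of the maximizer $a^*(x,\bar\varepsilon)$. Both work; yours is shorter and avoids the paper's basis-picking, at the price of invoking \ref{item:11b} at this step (the paper's route does not). Your derivation of \eqref{eq:1derFirstb} from \eqref{eq:1derFirst} by noting that the weights sum to one a.e.\ is cleaner than the paper's ``similar reasoning'', which effectively rederives \eqref{eq:1derFirstb} from scratch.

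You are right to single out \eqref{eq:compa} as the only genuinely quantitative step, and your bound on the $\mathbf X^a$-component, $\abs{x^a}/\abs{x}=\vO(\epsilon^{d_a/2})$ via Lemma \ref{lemma:ma1} \ref{item:14a}, is correct; but it yields the advertised exponent $3/2$ only when $\min_{a\in\vA'_{a_0}\setminus\set{0}}d_a\geq3$. In the abstract setting of Section \ref{sec:preliminaries}, and in the physical model with particle dimension $n<3$, one can have $d_a=1$, and then the estimate gives only $\vO(\epsilon^{1/2})$; there is no cancellation to save the exponent, since at $x$ with $\abs{x^a}/\abs{x^{a_0}}\approx\sqrt{\varepsilon^a_2}$ the $a$-th weight factor in \eqref{eq:1derFirstb} is bounded below. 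The paper's own justification (``an easy direct computation\dots cf.\ Lemma \ref{lemma:ma1} \ref{item:14a}'') leaves this equally open; the exponent is evidently carried over from \cite{Ya3}. This is not fatal --- the only property used downstream (e.g.\ in Lemma \ref{lem:compBogB}) is that $\nabla(m_{a_0}-\abs{x})\to0$ uniformly as $\epsilon\to0$, which $\vO(\epsilon^{1/2})$ supplies --- but if you want \eqref{eq:compa} literally as stated you should either impose $\min_a d_a\geq3$ or weaken the exponent to $\epsilon^{(\min_a d_a)/2}$.
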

\begin{proof}
  The property \ref{item:9b} coincides with \eqref{eq:decM0}. Since
  clearly $m_{a_0}(\cdot,\bar\varepsilon)$ is  convex, so is $m_{a_0}$, and apart from this \ref{item:10b} and
  \ref{item:11b} follow from \ref{item:9b} and Lemma \ref{lemma:ma1} \ref{item:10a} and
  \ref{item:11a}.  

  By Lemma \ref{lem:Y3.2} and \ref{item:9b}  it follows under the conditions of
  \ref{item:12b} that
  \begin{equation*}
     m_{a_0}(x)=\Sigma_{\vA'_{ a_0}\ni a\geq b}\,m_a(x_b)=\Sigma_{\vA'_{ a_0}\ni a}\,m_a(x_b)= m_{a_0}(x_b).
  \end{equation*} Here we used  the  fact that $x^{a_0}_b\neq 0$ since $x\in \mathbf
  X^{a_0}_b(\varepsilon^b_1)$, and hence trivially also $x_b\in \mathbf
  X^{a_0}_b(\varepsilon^b_1)$. We have verified \ref{item:12b}.

  As for \ref{item:13b} we use Lemma \ref{lem:3.5}. It suffices to
  show that $m_{a_0}(x)=m_a(x)$ under the given conditions. Usng again \ref{item:9b} we need to show
  that $m_b(x)=0$ for any $b\neq a$. If  ${b\not\geq
    a}$, then
  $m_b(x)=0$ thanks to Lemma \ref{lemma:ma1} \ref{item:13a}. If $b\gneq
  a $, we can apply 
  Lemma \ref{lem:3.4}, and consequently we conclude  again that $m_b(x)=0$. 

As for \ref{item:14b} obviously \eqref{eq:1derFirst} is related  to
\ref{item:9b} and 
\eqref{eq:form00}  (it  may be considered as a variant of \cite[Lemma
7.6]{GT}).  First we note  that  the proof of the smoothness of the 
right-hand side 
expression of \eqref{eq:form00}
shows that also the right-hand side of \eqref{eq:1derFirst} is 
smooth.  Hence by continuity and density it suffices to verify
\eqref{eq:1derFirst} as a formula for the  distributional derivative of $m$ as a function on the set
\begin{equation*}
  \cap_{a,b\in \vA'_{a_0},\,a\neq b}\set{x\mid
  {x^{a_0}_a}\mand x^{a_0}_b \text{ span a two-dimensional subspace}}.
\end{equation*}
  So let us consider a small open neighbourhood
$U$ of any point $y$ in this set. Pick a basis $e_1,\dots,e_d$ of $\bX$
such that for all $j=1,\dots,d$ and all different $a,b\in \vA'_{a_0}$
the two vectors 
\begin{equation}\label{eq:lineInd1}
  (e_j\cdot
  y^{a_0}_a,e_j\cdot y^{a_0}_b),
  \,(\abs{
    y^{a_0}_a}^2, \abs{
    y^{a_0}_b}^2)\in\R^2\,\text{ are linear independent.}
\end{equation}
To see this is doable it suffices to show the linear independence for
a single vector $e_1$ (since then we can supplement with vectors $e_2,
\dots e_d$ close to $e_1$). In turn (by the same reasoning) it
suffices to show that \eqref{eq:lineInd1} can be realized for $j=1$
and a fixed pair $(a,b)$. The latter is   obviously  doable for a suitable
vector in  $ \mathrm{span} (y^{a_0}_a,y^{a_0}_b)$. Upon possibly  shrinking $U$
\eqref{eq:lineInd1} holds with $y$ replaced by an arbitrary  $x\in U$, implying that for all
different $a,b\in \vA'_{a_0}$ and all admissible vectors
$\bar\varepsilon$
\begin{equation*}
  U\cap\set{h_{a,\varepsilon_a}(x)=h_{b,\varepsilon_b}(x)}\text{ is a
    regular hypersurface.}
\end{equation*} In fact for any $j=1,\dots,d$, $a\neq b$ and $\bar\varepsilon$
\begin{equation}\label{eq:goodVaria}
  x\in
  U\cap\set{h_{a,\varepsilon_a}(x)=h_{b,\varepsilon_b}(x)}\Rightarrow \partial_j
  h_{a,\varepsilon_a}(x)\neq \partial_j
  h_{b,\varepsilon_b}(x);\quad \partial_j:=e_j\cdot \nabla. 
\end{equation} Here and below we  represent the variable $x$ in terms of the
constructed basis, make the corresponding linear change of
variables and note that in  the new coordinates $\nabla$ is represented as
$(\partial_1,\dots, \partial_d)$. With \eqref{eq:goodVaria} at disposal
we can  easily  for each $j=1,\dots,d$
do the associated one-dimensional
integration by parts using the Fubini theorem (varying  in  the
direction of $e_j$  while fixing the   other variables, including
$\bar\varepsilon$). The obtained formulas amount to
\eqref{eq:1derFirst}.

As for \ref{item:15b} the assertion \eqref{eq:1derFirstb} follows from a similar reasoning
as given above for \eqref{eq:1derFirst}, and \eqref{eq:compa} follows
from an easy direct computation and estimation of
\eqref{eq:1derFirstb}, cf. Lemma \ref{lemma:ma1} \ref{item:14a} (and
its proof).
\end{proof}

\subsection{Geometric considerations for $a_0=a_{\max}$}\label{subsec: Geometric considerations for a0=a{max}}
Throughout this subsection we consider the case $a_0=a_{\max}$
only. For this example the functions $m_a$ from Lemma \ref{lemma:ma1}
and $m_{a_0}$ from Lemma \ref{lemma:m1} are smooth on
$\bX\setminus{0}$. We can introduce smooth modifications by
multiplying them by a suitable factor, say specifically by the factor
$\chi_+(2|x|)$. We adapt these modifications and will use (slightly
abusively) the same notation $m_a$ and $m_{a_0}$ for the smoothed out
versions.  We may then consider the corresponding first order
operators $M_a$ and $M_{a_0}$ from \eqref{eq:M_a0} realized as self-adjoint
operators. In our application, given in Section \ref{sec::A partition
  of unity}, these operators will be used to construct a suitable
phase-space partition of unity. In the present subsection we provide
technical details on how to control their commutator with the
Hamiltonian $H$.

We need  to consider various conical
subsets of  $\mathbf X\setminus{0}$ (recalling the basic example \eqref{eq:collPla1}).
 Let for  $a\in\vA'$ and $\varepsilon,\delta\in (0,1)$
\begin{align}\label{eq:primes}
  \begin{split}
    \mathbf X'_a&={\mathbf X_a }\setminus\cup_{{  b\gneq a,\,b\in \vA} }\,\mathbf
                 X_b={\mathbf X_a }\setminus\cup_{{b\not\leq a,\,b\in
                     \vA}\,
                 }\mathbf X_b,\\
\mathbf X_a(\varepsilon)&=\set{x\in \bX\mid\abs{x_a}>(1-\varepsilon)\abs{x}},\\
\mathbf \Gamma_a(\varepsilon)&=\parb{\mathbf X \setminus\set{0}}\setminus\cup_{{b\not\leq a,\,b\in \vA'}
}\,\mathbf X_b(\varepsilon),\\
\mathbf Y_a(\delta)&=\mathbf X_a(\delta)\setminus\cup_{b\gneq a,\,b\in \vA'}
\,\overline{\mathbf X_b(3\delta^{1/d_a})}.
\end{split}
\end{align} Here and henceforth the overline means topological closure in
$\mathbf X$. 
The structure of the sets $\mathbf X_a(\varepsilon)$, $\mathbf
\Gamma_a(\varepsilon)$ and $\mathbf Y_a(\varepsilon)$ is
${\R_+ V}$, where  $V$ is a
subset of the unit sphere $\mathbf{S}^{a_{\max}}=\S^{d-1}$ in
$\mathbf X$. For $\mathbf
X_a(\varepsilon)$ and $\mathbf Y_a(\varepsilon)$ the set $V$ 
 is relatively open, while for  $\mathbf
\Gamma_a(\varepsilon)$ the  set is  compact.

\begin{lemma}\label{lemma:cover} For  any $a\in\vA'$ and $\varepsilon\in (0,1)$
\begin{equation}\label{eq:Gamma2}
   \mathbf \Gamma_a(\varepsilon)\subseteq\cup_{b\leq a}  \,\mathbf X'_b.
\end{equation} 
  \end{lemma}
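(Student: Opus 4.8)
\textbf{Proof plan for Lemma~\ref{lemma:cover}.}
The plan is to argue by contradiction, exploiting the lattice structure of $\vA$ under the operation $a\vee b$ and the geometric estimate \eqref{eq:3.2}. Suppose $x\in\mathbf\Gamma_a(\varepsilon)$ but $x\notin\cup_{b\leq a}\mathbf X'_b$. The set $\mathbf X'_b=\mathbf X_b\setminus\cup_{c\gneq b}\mathbf X_c$ is the ``stratum'' of $\mathbf X_b$ not contained in any strictly larger collision plane, and the collision planes stratify $\mathbf X\setminus\{0\}$; so every $x\neq 0$ lies in $\mathbf X'_c$ for a unique minimal $c$, namely $c=c(x)=\bigvee\{b\in\vA\mid x\in\mathbf X_b\}$ (this intersection of the planes containing $x$ is again a plane by the closure-under-intersection assumption, and $x$ lies in it). Thus the negation of the conclusion means precisely that this unique $c(x)$ does \emph{not} satisfy $c(x)\leq a$, i.e.\ $c(x)\not\leq a$.

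First I would make the reduction above precise: set $c=c(x)$ and note $x\in\mathbf X_c$ with $c\not\leq a$, $c\in\vA$; if $c=a_{\max}$ then $x=0$, contradicting $x\in\mathbf\Gamma_a(\varepsilon)\subseteq\mathbf X\setminus\{0\}$, so in fact $c\in\vA'$. Now I want to contradict membership in $\mathbf\Gamma_a(\varepsilon)$, which by \eqref{eq:primes} requires $x\notin\mathbf X_b(\varepsilon)$ for every $b\not\leq a$, $b\in\vA'$. Taking $b=c$: since $x\in\mathbf X_c$ we have $x=x_c$, hence $\abs{x_c}=\abs{x}>(1-\varepsilon)\abs{x}$ (as $\varepsilon>0$ and $x\neq 0$), so $x\in\mathbf X_c(\varepsilon)$ with $c\not\leq a$, $c\in\vA'$ — directly contradicting $x\in\mathbf\Gamma_a(\varepsilon)$. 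This closes the argument.

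The one point that needs care — and which I expect is the main (minor) obstacle — is justifying the stratification claim: that every nonzero $x$ lies in $\mathbf X'_{c(x)}$ for $c(x)=\bigvee\{b\mid x\in\mathbf X_b\}$, equivalently that $\mathbf X\setminus\{0\}=\cup_{b\in\vA}\mathbf X'_b$ and the pieces are ``nested correctly''. This follows formally: the family $\{b\in\vA\mid x\in\mathbf X_b\}$ is nonempty (it contains $a_{\min}$) and finite, so by induction on the closure-under-intersection property it has a join $c$ with $\bigcap_{b:\,x\in\mathbf X_b}\mathbf X_b=\mathbf X_c$, whence $x\in\mathbf X_c$; and if $x\in\mathbf X_{c'}$ for some $c'\gneq c$ then $c'$ would belong to the index set, forcing $\mathbf X_{c'}\supseteq\bigcap=\mathbf X_c\supsetneq\mathbf X_{c'}$, a contradiction, so $x\in\mathbf X'_c$. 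I would record this as a one-line observation (or cite it as standard from the $N$-body literature, e.g.\ \cite{DG}), then run the two-paragraph contradiction above. No heavy computation is involved; the content is entirely combinatorial-geometric.
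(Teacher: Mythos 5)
Your proof is correct and is essentially the same as the paper's: both rest on identifying the unique minimal collision plane $\mathbf X_{c(x)}$ containing $x$ (via closure under intersection), so that $x\in\mathbf X'_{c(x)}$, and then using $x\in\mathbf\Gamma_a(\varepsilon)$ to exclude $c(x)\not\leq a$. The only organizational difference is that the paper restricts the defining intersection to $b\leq a$ from the outset and invokes the $\mathbf\Gamma_a(\varepsilon)$ constraint afterward to rule out strictly larger planes, whereas you form the unrestricted join first and then push $c(x)$ below $a$; the content is identical.
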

  \begin{proof}
    For any $x\in 
\mathbf \Gamma_a(\varepsilon)$ we introduce $b(x)\leq a$ by  $\mathbf X_{b(x)}=\cap_{b\leq
a,\,x\in \mathbf X_b} \,\mathbf X_b$ and then check that $x\in \mathbf
X'_{b(x)}$: If not, $x\in \mathbf X_b \cap \mathbf X_{b(x)}$ for some
$b$ with $b(x)\lneq b\leq a$. Consequently 
    $\mathbf X_b=\mathbf X_b\cap \mathbf X_{b(x)}=\mathbf X_{b(x)}$,  contradicting that
  $b\neq b(x)$. 
  \end{proof}

 Thanks to   \eqref{eq:Gamma2} we can for  any $a\in\vA'$ and  any  $\varepsilon, \delta_0\in (0,1)$  write
 \begin{subequations}
  \begin{equation}\label{eq:deltaNBHa}
    \mathbf \Gamma_a(\varepsilon)\subseteq\cup_{b\leq a} \cup_{\delta\in (0,\delta_0]} \,\mathbf Y_b(\delta).
  \end{equation} 
  By compactness \eqref{eq:deltaNBHa} leads for
  any fixed $\varepsilon,\delta_0\in(0,1)$ to the existence of a finite
  covering
\begin{equation} \label{eq:deltaNBHb}
     \mathbf \Gamma_a(\varepsilon)\subseteq\cup_{j\leq J} \,\,\mathbf Y_{b_j}(\delta_j),
  \end{equation} where  $\delta_1,\dots, \delta_J\in
(0,\delta_0]$ and  $b_1,\dots, b_J\leq a$. Here and in the following we suppress
 the dependence of quantities on the given $a\in\vA'$.
 \end{subequations}

To make contact to  Subsection \ref{subsec: Homogeneous Yafaev
  type functions} we recall   that  the
functions $m_a$, $a\in \vA'$, depend on a
positive parameter $\epsilon$ and  by Lemma  \ref{lemma:ma1}
\ref{item:13a}  fulfill 
\begin{equation}\label{eq:suppa}
  \supp m_a\subseteq 
\mathbf \Gamma_a(\epsilon^{d}).
\end{equation}
    Having  Lemmas \ref{lemma:ma1} and \ref{lemma:m1} at our
    disposal we also recall   that the (small) positive  parameter
    $\epsilon$ appears independently in the two lemmas. First we choose and fix the same   small $\epsilon$ for the  lemmas.
     Then we need   applications of Lemma \ref{lemma:m1} for $\epsilon_1,\dots,\epsilon_J\leq \epsilon$
     fixed  as follows. The latter devices will be used to control commutators, cf. Remark \ref{remark:RQ} \refeq{item:11b}.

For each  $a\in\vA'$ we apply \eqref{eq:deltaNBHb} with
    $\varepsilon=\delta_0=\epsilon^{d}$
    (cf. \eqref{eq:suppa}). Since $\delta_j\leq \delta_0$, we can
    introduce positive $\epsilon_1,\dots,\epsilon_J\leq \epsilon$ by  the
    requirement $\epsilon_j^{d_{b_j}}=\delta_j$. The 
    inputs $\epsilon=\epsilon_j$ in Lemma \ref{lemma:m1}  yield
    corresponding   functions, say denoted   $m_j$. In
    particular in the region $\bY_{b_j}(\delta_j)$ the function $m_a$
    from Lemma \ref{lemma:ma1} only
    depends on $x_{b_j}$ (thanks to   Lemma
    \ref{lemma:ma1} \ref{item:12a} and the  property
    $\bX_{b_j}(\delta_j)\subseteq \bX_{b_j}(\epsilon^{d_{b_j}})$),
    while (thanks to Lemma \ref{lemma:m1} \ref{item:13b} and recalling  $a_0= a_{\max}$)
    \begin{subequations}
    \begin{equation}\label{eq:good0}
      m_j(x)=C_j\abs{x_{b_j}},\quad C_j=\int_\R (1+\varepsilon)\,\varphi_{b_j}(\varepsilon)\,\d
      \varepsilon.
    \end{equation}  Obviously $\abs{y'}$
    is
    non-degenerately  convex in $y'\in\bX_{b_j}\setminus \set{0}$, meaning that  the  restricted Hessian
    \begin{equation}\label{eq:HesRestri0}
      \parb{\nabla_{y'}^2
        \abs{y'}}_{|\bX_{b_j}\cap\set{y}^\perp}\text{ is
      positive definite at any }y\in \bX_{b_j}\setminus\set{0}.
    \end{equation}   
    \end{subequations}

These properties   can be applied as follows  using   
the vector-valued first order operators 
\begin{equation}\label{eq:Ffield}
  G_a=\vH_a(x_a)\cdot
  p_a,\text{ where } \vH_a(x_a)=\chi_+(2\abs{x_a})\abs{x_a}^{-1/2}\parb{I-\abs{x_a}^{-2}\ket{ x_a}\bra{ x_a}}.
\end{equation}  
 We  choose for the considered   $a\in\vA'$ 
   a quadratic partition $\xi_1,\dots, \xi_J\in
C^\infty(\mathbf{S}^{a_0})$ (viz $\Sigma_j \,\xi_j^2=1$)
subordinate to the covering \eqref{eq:deltaNBHb} (recalling the discussion before
    Lemma \ref{lemma:cover}). Then
we can write, using  the support properties \eqref{eq:suppa} and \eqref{eq:deltaNBHb}, 
\begin{align*}
  m_a(x)=\Sigma_{j\leq J}\,\,m_{a,j}(x);\quad m_{a,j}(x)=\xi^2_j( \hat x) m_a(x),\quad \hat x=x/\abs{x},
\end{align*} and from the previous discussion it follows  that 
\begin{align*}
  \chi^2_+(\abs{x})m_{a,j}(x)=\xi_j^2( \hat x)\chi^2_+(\abs{x}) m_{a}(x_{b_j}),
\end{align*} as well as 
\begin{subequations}
\begin{align}\label{eq:Hes0}
  \begin{split}
 &p\cdot\parb{\chi^2_+(\abs{x})\nabla^2m_a(x)} p=\Sigma_{j\leq J}\,\,
                                 p\cdot\parb{\xi^2_j( \hat x)\chi^2_+(\abs{x}) \nabla^2m_a(x_{b_j})}p,\\&
=\Sigma_{j\leq J}\, \,G^*_{b_j}\parb{\xi^2_j( \hat x)\chi^2_+(\abs{x})\vG_j}G_{b_j};\quad 
                                                                  \vG_j=\vG_j(x_{b_j})\text{ 
                                                                  bounded}.   
  \end{split}
\end{align}
In turn using the convexity property of $m_j$ and the previous
discussion (cf. \eqref{eq:good0} and
\eqref{eq:HesRestri0})  we conclude that 
\begin{align}\label{eq:Hes_est}
   G^*_{b_j}\xi^2_j( \hat x)\chi^2_+(\abs{x})
  G_{bj}\leq  2 p\cdot \parb{\chi^2_+(\abs{x})\nabla^2m_j(x)} p.
\end{align}

Here the right-hand side  is the `leading term' of  the commutator
$\tfrac 12 \i[
p^2,M_j]$, where $M_j$ is given by \eqref{eq:M_a0} for  the
modification of 
$m_j$ given by the  function $\chi_+(2|x|)m_j(x)$. More precisely, in the sense of \eqref{eq:1712022}, for any real $f\in C_\c^\infty(\R)$
\begin{align}\label{eq:1712022apll}
	\begin{split}
		f(H)\parbb{2 p\cdot \parb{\chi^2_+(\abs{x})\nabla^2m_j(x)} p-\tfrac 12 \i[
H,&\chi_+(\abs{x})M_j\chi_+(\abs{x})] }f(H) \\&=\vO(\inp{x}^{-1-2\mu}).
\end{split}
\end{align}  
\end{subequations} We note that a basic  application of
\eqref{eq:Hes0}--\eqref{eq:1712022apll} comes from combining these
features with Lemma \ref{lemma:comSTA} applied concretely with
\begin{subequations}
\begin{equation}\label{eq:PropBasic}
  \Psi=\Psi_j=\tfrac 12 f_1(H)\chi_+(\abs{x})M_j\chi_+(\abs{x})f_1(H),
\end{equation} where 
$ f_1$ is any  narrowly supported standard support
  function obeying  $f_1=1$ in a
neighbourhood of a given $\lambda\not\in \vT_{\p}(H)$. This leads
in fact to \eqref{eq:2boundobtain} with
\begin{equation}\label{eq:HessionPos}
  {\abs{Qf_1 (H)}}^2=2f_1(H){ p\cdot \parb{\chi^2_+(\abs{x})\nabla^2m_j(x)} p }f_1(H),
\end{equation}  and therefore in turn  to the `$Q$-bounds' 
\begin{align}\label{eq:2boundobtain33}
  \begin{split}
  \sup _{\Im z\neq 0}\norm{Q_j{f_1}&
   (H){R(z)}}_{\vL(\vB,\vH^{d})}< \infty;\\
&\quad Q_j=\xi_j( \hat x)\chi_+(\abs{x})G_{b_j},\,\,j\le J.
 \end{split}
\end{align} 
\end{subequations}

 Finally  we introduce for $a\in\vA'$ functions $\xi^+_a$ and $\tilde\xi^+_a$ as follows.
First choose any $\xi_a\in
C^\infty(\mathbf{S}^{a_0})$ (recall $a_0=a_{\max}$) such that $\xi_a=1$ in
$\mathbf{S}^{a_0}\cap\mathbf  \Gamma_a(\varepsilon)$ and $\xi_a=0$ on
$\mathbf{S}^{a_0}\setminus  \mathbf \Gamma_a(\varepsilon/2)$. Choose then any $\tilde\xi_a\in
C^\infty(\mathbf{S}^{a_0})$ using this recipe   with $\varepsilon$
replaced by  $\varepsilon/2$. Finally  let  $\xi^+_a(x)=\xi_a(\hat
x)\chi_+(4\abs{x})$ and  $\tilde\xi^+_a(x)=\tilde\xi_a(\hat
x)\chi_+(8\abs{x})$,  and note  that
$\tilde\xi^+_a\xi^+_a=\xi^+_a$. Applied to $\varepsilon=\epsilon^{d}$ it follows from Lemma
\ref{lemma:ma1} \ref{item:13a} that the channel localization operators $M_a$ fulfill   
\begin{align}\label{eq:partM}
  M_a= M_a\xi^+_a= \xi^+_aM_a=M_a\tilde\xi^+_a=\tilde \xi^+_aM_a,
\end{align} which in applications  provides `free factors' of
$\xi^+_a$ and $\tilde\xi^+_a$
where  convenient. In particular it will be  useful for replacing $H$ by $H_a$ (or vice versa) in the presence of a factor $M_a$.

\subsection{Geometric considerations for
  $a_0\neq a_{\max}$}\label{subsec: Geometric considerations a0 ej}
For $a_0\neq a_{\max}$ the functions $m^{a_0}_a$, $a\in \vA'_{ a_0}$,
from Lemma \ref{lemma:ma1} and $m_{a_0}^{a_0}$ from Lemma
\ref{lemma:m1} are smooth on $\bX\setminus{\bX_{a_0}}$ however not on
$\bX\setminus\set{0}$. We  introduce smooth modifications by
multiplying them by a factor of $\chi_+(2|x|)$ (as before) as well as
by a factor of $\chi_+(2|x^{a_0}|/\kappa|x|)$ for a small parameter
$\kappa>0$. Hence we consider 
\begin{subequations}
 \begin{equation}\label{eq:modmfuncs}
  m_{a,\kappa}(x)=m_{a,\kappa}^{a_0}(x)=\chi_+(2|x|)
  \chi_+(2|x^{a_0}|/\kappa|x|)m^{a_0}_a(x); 
\quad a\in \vA_{ a_0}.
\end{equation}
Note that \eqref{eq:modmfuncs} applies to $a\in \vA'_{ a_0}$ as well
as to $a=a_0$.
We may then consider the corresponding first order
operators \eqref{eq:M_a0}, say denoted by
\begin{equation}\label{eq:Masss}
  M_{a,\kappa}=M^{a_0}_{a,\kappa}\mand M_{{a_0},\kappa}=M^{a_0}_{{a_0},\kappa},
\end{equation} 
\end{subequations}
again realized as self-adjoint operators.  They will be used in
Section \ref{sec:Proof of stationary completeness} in a similar way as for
the case $a_0= a_{\max}$ discussed in Subsection \ref{subsec:
  Geometric considerations for a0=a{max}}, however now in combination
with an additional input from Section \ref{sec:Non-threshold
  analysis}.

Mimicking  the procedure of Subsection \ref{subsec: Geometric considerations for
  a0=a{max}}  we 
consider the following  conical subsets of $\mathbf X\setminus{\bX_{a_0}}$.
 Let for  $a\in\vA_{a_0}'$ and $\varepsilon,\delta\in (0,1)$
\begin{align}\label{eq:primes2}
  \begin{split}
    \mathbf X^{'a_0}_a&={\mathbf X_a }\setminus\cup_{{  b\gneq a,\,b\in \vA_{a_0}} }\,\mathbf
                 X_b={\mathbf X_a }\setminus\cup_{{b\not\leq a,\,b\in
                     \vA_{a_0}}\,
                 }\mathbf X_b,\\
\mathbf X^{a_0}_a(\varepsilon)&=\set{x\in \bX\mid\abs{x^{a_0}_a}>(1-\varepsilon)\abs{x^{a_0}}},\\
\mathbf \Gamma^{a_0}_a(\varepsilon)&=\parb{\mathbf X \setminus{\bX_{a_0}}}\setminus\cup_{{b\not\leq a,\,b\in \vA_{a_0}'}
}\,\mathbf X^{a_0}_b(\varepsilon),\\
\mathbf Y^{a_0}_a(\delta)&=\mathbf X^{a_0}_a(\delta)\setminus\cup_{b\gneq a,\,b\in \vA_{a_0}'}
\,\overline{\mathbf X^{a_0}_b(3\delta^{1/d_a})}.
\end{split}
\end{align}

The structure of the sets $\mathbf X^{a_0}_a(\varepsilon)$, $\mathbf
\Gamma^{a_0}_a(\varepsilon)$ and $\mathbf Y^{a_0}_a(\varepsilon)$ is
$\parb{\R_+ V^{a_0}}\oplus \mathbf X_{a_0}$, where  $V^{a_0}$ is a
subset of the unit sphere $\mathbf{S}^{a_0}$ in $\mathbf X^{a_0}$. For $\mathbf
X^{a_0}_a(\varepsilon)$ and $\mathbf Y^{a_0}_a(\varepsilon)$ the set
$V^{a_0}\subseteq \mathbf{S}^{a_0} $ is (relatively) open, while for  $\mathbf
\Gamma^{a_0}_a(\varepsilon)$ the  set $V^{a_0}$ is compact. 

In particular we obtain from Lemma \ref{lemma:cover} (now applied to $a_0$
 rather than to 
$a_{\max}$) 
\begin{equation}\label{eq:Gamma22}
   \text{For  any } a\in\vA_{a_0}' \mand \varepsilon\in (0,1):\quad \mathbf \Gamma^{a_0}_a(\varepsilon)\subseteq\cup_{b\leq a}  \,\mathbf X^{'a_0}_b.
\end{equation}  
Thanks to   \eqref{eq:Gamma22} we can for  any $a\in\vA_{a_0}'$ and  any  $\varepsilon, \delta_0\in (0,1)$  write
 \begin{subequations}
  \begin{equation}\label{eq:deltaNBHa2}
    \mathbf \Gamma^{a_0}_a(\varepsilon)\subseteq\cup_{b\leq a} \cup_{\delta\in (0,\delta_0]} \,\mathbf Y^{a_0}_b(\delta),
  \end{equation}  and then  compactness  leads to the existence of a finite covering 
\begin{equation}\label{eq:deltaNBHb2}
     \mathbf \Gamma^{a_0}_a(\varepsilon)\subseteq\cup_{j\leq J} \,\,\mathbf Y^{a_0}_{b_j}(\delta_j),
  \end{equation} where  $\delta_1,\dots, \delta_J\in
(0,\delta_0]$ and  $b_1,\dots, b_J\leq a$.
 \end{subequations}

Again, to  make contact to  Subsection \ref{subsec: Homogeneous Yafaev
  type functions}, we recall   that  the
functions $m^{a_0}_a$, $a\in \vA'_{a_0}$, depend on a
positive parameter $\epsilon$ and  by Lemma  \ref{lemma:ma1}
\ref{item:13a}  fulfill 
\begin{equation}\label{eq:suppa2}
  \supp m^{a_0}_a\subseteq 
\mathbf \Gamma^{a_0}_a(\epsilon^{d}).
\end{equation}
    Having  Lemmas \ref{lemma:ma1} and \ref{lemma:m1} at our
    disposal we also recall   that the (small) positive  parameter
    $\epsilon$ appear independently in the two lemmas. As before  we first choose and fix the same   small $\epsilon$ for the  lemmas. 
     Then we need  applications of Lemma \ref{lemma:m1} along the  pattern of that in  Subsection \ref{subsec: Geometric considerations for a0=a{max}}.
      Hence for each  $a\in\vA_{a_0}'$   we apply \eqref{eq:deltaNBHb2} with
    $\varepsilon=\delta_0=\epsilon^{d}$
    (cf. \eqref{eq:suppa2}). Since $\delta_j\leq \delta_0$, we can
    introduce positive $\epsilon_1,\dots,\epsilon_J\leq \epsilon$ by  the
    requirement $\epsilon_j^{d_{b_j}}=\delta_j$. The 
    inputs $\epsilon=\epsilon_j$ in Lemma \ref{lemma:m1}  yield
    corresponding   functions, say denoted   $m_j$. In
    particular in the region $\bY^{a_0}_{b_j}(\delta_j)$ the function $m_a$
    from Lemma \ref{lemma:ma1} only
    depends on $x_{b_j}$ (thanks to   Lemma
    \ref{lemma:ma1} \ref{item:12a} and the  property
    $\bX^{a_0}_{b_j}(\delta_j)\subseteq \bX^{a_0}_{b_j}(\epsilon^{d_{b_j}})$),
    while (thanks to  Lemma \ref{lemma:m1} \ref{item:13b})
    \begin{subequations}
    \begin{equation}\label{eq:good}
      m_j(x)=\int_\R
      h_{b_j,\varepsilon}(x)\,\varphi_{b_j}(\varepsilon)\,\d
      \varepsilon.
    \end{equation} It is an important (elementary) 
     property   that
    $h_{b_j,\varepsilon}(x)=h_{b_j,\varepsilon}(x_{b_j})=h_{b_j,\varepsilon}(y')$ is
    non-degenerately  convex in $y'\in\bX_{b_j}\setminus \set{0}$, meaning that  the  restricted Hessian
    \begin{equation}\label{eq:HesRestri}
      \parb{\nabla_{y'}^2
        h_{b_j,\varepsilon}(y')}_{|\bX_{b_j}\cap\set{y}^\perp}\text{ is
      positive definite at any }y\in \bX_{b_j}\setminus \set{0}.
    \end{equation}  Note that the  corresponding eigenvalues are close
    to one (and hence positive) thanks to the smallness of
    $\epsilon_j\,(\leq \epsilon)$. Obviously the  non-degeneracy holds as well for the expression
    $ m_j(x_{b_j})$ given in \eqref{eq:good}.  
    \end{subequations}

Choose
 a quadratic partition $\xi_1,\dots, \xi_J\in
    C^\infty(\mathbf{S}^{a_0})$ (viz $\Sigma_j \,\xi_j^2=1$) subordinate to the
    covering \eqref{eq:deltaNBHb2} (recalling the discussion before \eqref{eq:Gamma22}. 
Then
we can write, using  the support properties \eqref{eq:suppa2} and
\eqref{eq:deltaNBHb2} and recalling the definition of
$m_{a,\kappa}^{a_0}$ and $m_{a_0,\kappa}^{a_0}$ from \eqref{eq:modmfuncs} as well as the notation $\hat x=x/\abs{x}$, 
\begin{equation*}
  m_{a,\kappa}(x)=m^{a_0}_{a,\kappa}(x)=\Sigma_{j\leq J}\,\,m_{a,j,\kappa}(x);\quad m_{a,j,\kappa}(x)=m^{a_0}_{a,j,\kappa}(x)=\xi^2_j(\widehat{x^{a_0}}) m^{a_0}_{a,\kappa}(x).
\end{equation*}
Thanks to Lemma \ref{lemma:ma1} \ref{item:12a} and \ref{item:14a}
\begin{align*}
  \chi^2_{\kappa+}(x) m_{a,j,\kappa}(x)=&\xi^2_j(\widehat{x^{a_0}}) \chi^2_{\kappa+}(x) m_{a,\kappa}(x_{b_j}),\\&\chi_{\kappa+}(x):=\chi_+(\abs{x})\chi_+(|x^{a_0}|/\kappa|x|),
\end{align*} and we can also record (recalling the definition
\eqref{eq:Ffield}) that 
\begin{subequations}
\begin{align}\label{eq:Hes02}
  \begin{split}
 &p\cdot\parb{\chi^2_{\kappa+}(x)\nabla^2m_{a,\kappa}(x)} p=\Sigma_{j\leq J}\,\,
                                 p\cdot\parb{\xi^2_j(\widehat{x^{a_0}})\chi^2_{\kappa+}(x) \nabla^2m_{a,\kappa}(x_{b_j})}p,\\&
=\Sigma_{j\leq J}\, \,G^*_{b_j}\parb{  \xi^2_j(\widehat{x^{a_0}})\chi^2_{\kappa+}(x)  \vG_{j,\kappa}}G_{b_j};\quad 
                              \vG_{j,\kappa}=\vG_{j,\kappa}(x_{b_j})\text{ 
                                                                  bounded}.   
  \end{split}
\end{align}
In turn using the convexity property of $m_j$ and the previous
discussion (cf. \eqref{eq:good} and
\eqref{eq:HesRestri})  we conclude that \begin{equation}\label{eq:Hes_est2}
   G^*_{b_j}\xi^2_j(\widehat{x^{a_0}})\chi^2_{\kappa+}(x)
  G_{b_j}\leq  2 p\cdot \parb{\chi^2_{\kappa+}(x)\nabla^2m_j}(x) p.
\end{equation}

Here the right-hand side  is the `leading term' of  the commutator
$\tfrac 12 \i[
p^2,M_{j,\kappa}]$, where $M_{j,\kappa}$ is given by \eqref{eq:M_a0} for the
modification of 
$m_j$ given by the  function
\begin{equation*}
  \chi_+(2|x|)\chi_+(2|x^{a_0}|/\kappa|x|)m_j(x).
\end{equation*}
  More precisely, in the sense of \eqref{eq:1712022} and \eqref{eq:1712022kap}, for any real $f\in C_\c^\infty(\R)$
\begin{align}\begin{split}
	f(H_{a_0})\parbb{2 p\cdot \parb{\chi^2_{\kappa+}(x)&\nabla^2m_j}(x) p-\tfrac 12 \i[
H_{a_0},\chi_{\kappa+}(x) M_{j,\kappa}\chi_{\kappa+}(x)] }f(H_{a_0})\\ &=\vO(\inp{x}^{-1-2\mu})+\vO^a_\kappa(\inp{x}^{-1}).	
\end{split}
\label{eq:1712022apll22}
\end{align} 
\end{subequations} 

\begin{subequations}
Due to the appearance of the second term to the right we can not directly use Lemma \ref{lemma:comSTA} as we derived \eqref{eq:2boundobtain33} from \eqref{eq:Hes_est} and \eqref{eq:1712022apll}. We shall in Section \ref{sec:Proof of stationary completeness} remedy this point for $H_{a_0}$, $a_0\in\vA\setminus\{a_{\min},a_{\max}\}$, using the  `$Q$-bound' \eqref{eq:QanontreshB} from Section \ref{sec:Non-threshold analysis}. With this bound and a required energy localization for $H^{a_0}$ the  scheme of proof of Lemma \ref{lemma:comSTA}  applies again, leading to  `$Q$-bounds' on the form
\begin{align}\label{eq:2boundobtain33NOT}
	\begin{split}
		\sup _{\Im z\neq 0}\norm{Q_{j,\kappa}{f_1}&
			(H_{a_0}){R_{a_0}(z)}}_{\vL(\vB,\vH^{d})}< \infty;\\
		&\quad Q_{j,\kappa}=\xi_j(\widehat{x^{a_0}}) \chi_{\kappa+}(x)G_{b_j}f(H^{a_0}),\,\,j\le J.
	\end{split}
\end{align} 
Here  $f_1$ is given as in \eqref{eq:2boundobtain33} and $f$ is any standard support function 
 supported in $\R\setminus \vT_\p(H^{a_0})$. Once $f$ is fixed (we
 explain in Sections  \ref{sec:Squeezing argument} and \ref{sec:Proof
   of stationary completeness}  how to  choose it for our purposes), 
 the parameter     $\kappa>0$ is  fixed by \eqref{eq:QanontreshB}. More precisely  \eqref{eq:2boundobtain33NOT}  then follows from  applying  Lemma \ref{lemma:comSTA} with 
 \begin{equation}\label{eq:PropBasicNOTT}
 	\Psi=\Psi_{j,\kappa}=\tfrac 12 f_1(H_{a_0})f(H^{a_0})\chi_{\kappa+}(x) M_{j,\kappa}\chi_{\kappa+}(x)f(H^{a_0})f_1(H_{a_0}),
 \end{equation}  using \eqref{eq:BB^*a} and  \eqref{eq:QanontreshB} to
 verify the inputs \eqref{eq:Cyy} and \eqref{eq:conextra}.
  
\end{subequations}

 Finally  we introduce for $a\in\vA'$ functions $\xi^+_a$ and $\tilde\xi^+_a$ as follows.
First choose any $\xi_a\in
C^\infty(\mathbf{S}^{a_0})$ such that $\xi_a=1$ in
$\mathbf{S}^{a_0}\cap\mathbf  \Gamma^{a_0}_a(\varepsilon)$ and $\xi_a=0$ in
$\mathbf{S}^{a_0}\setminus  \mathbf \Gamma^{a_0}_a(\varepsilon/2)$. Choose then any $\tilde\xi_a\in
C^\infty(\mathbf{S}^{a_0})$ using this recipe   with $\varepsilon$
replaced by  $\varepsilon/2$. Finally  let  $\xi^+_{a,\kappa}(x)=\xi_a(\widehat{x^{a_0}})\chi_+(4\abs{x})\chi_+(4|x^{a_0}|/\kappa|x|)$ and  $\tilde\xi^+_{a,\kappa}(x)=\tilde\xi_a(\widehat{x^{a_0}})\chi_+(8\abs{x})\chi_+(8|x^{a_0}|/\kappa|x|)$,  and note  that
$\tilde\xi^+_a\xi^+_{a,\kappa}=\xi^+_{a,\kappa}$. Applied to $\varepsilon=\epsilon^{d}$ it follows from Lemma
\ref{lemma:ma1} that the  channel localization operators
$M_{a,\kappa}$ from  \eqref{eq:Masss} fulfill   
\begin{align}\label{eq:partM2}
  M_{a,\kappa}= M_{a,\kappa}\xi^+_{a,\kappa}= \xi^+_{a,\kappa}M_{a,\kappa}=M_{a,\kappa}\tilde\xi^+_{a,\kappa}=\tilde \xi^+_{a,\kappa}M_{a,\kappa},
\end{align} which in applications will provide `free factors' of
$\xi^+_{a,\kappa}$ and $\tilde\xi^+_{a,\kappa}$
where  convenient.

\section{$N$-body short-range scattering theory}\label{sec:-body effective potential and a $1$-body radial limit}

For any  $a\in  \vA'$ we introduce 
\begin{equation*}
  K_a(x_a,\lambda)=\sqrt
      \lambda \abs{x_a}\mand S_a(\xi_a,t)=t\xi^2;\quad (x_a,\lambda),(\xi_a,t)\in{\mathbf
X_a}\times \R_+.
\end{equation*}

For any  \emph{channel}
$\alpha =(a,\lambda^\alpha, u^\alpha)$, i.e.   $a\in\vA'$, $\lambda^\alpha\in \R$ and  
$(H^a-\lambda^\alpha)u^\alpha=0$ for a normalized $u^\alpha\in
\mathcal H^a$, we introduce the corresponding \emph{channel wave operators} 
\begin{equation}\label{eq:wave_op}
  W_\alpha^{\pm}=\slim_{t\to \pm\infty}\e^{\i tH}J_\alpha\e^{-\i
  ( S_a^\pm (p_a,t)+\lambda^\alpha t)},\quad J_\alpha \varphi
=u^\alpha\otimes
\varphi,
 \end{equation} where $p_a=-\i \nabla_{x_a}$ and $ S_a^\pm
 (\xi_a,\pm|t|)=\pm S_a(\pm\xi_a,|t|)=t\xi_a^2$. (With  long-range potentials included, not considered in this paper, the functions $K_a$ and $S_a$ need to be chosen slightly different, see \cite{Is, II,Sk3}.)  We outline a proof of the existence of the
 limits \eqref{eq:wave_op} in  Remark
 \ref{remark:R} \ref{item:2Sk} following \cite{Sk3, Ya3}. It is a general fact  that the existence of the channel 
wave operators  implies their orthogonality, see for example
\cite[Theorem XI.36]{RS}.

Let us for  any  channel
$\alpha =(a,\lambda^\alpha, u^\alpha)$ introduce the notation
\begin{equation*}
k_\alpha=p_a^2+\lambda^\alpha, \quad I^\alpha=(\lambda^\alpha,\infty)\quad\text{and}\quad \vE^\alpha=I^\alpha \setminus {
  \vT_{\p}(H)}.
\end{equation*} 
 Note  the intertwining property $H W_\alpha^{\pm}\supseteq
W_\alpha^{\pm}k_\alpha $  and the fact that $k_\alpha$ is diagonalized by
the unitary map  $F_\alpha:L^2(\mathbf X_a)\to L^2(I^\alpha
;\vG_a)$, $\vG_a=L^2(\mathbf{S}_a)$,  $\mathbf{S}_a=\mathbf X_a\cap\S^{d_a-1}$ with $d_a=\dim
\mathbf X_a$,  given by
\begin{align}\label{eq:Four}
  \begin{split}
  (F_\alpha \varphi)(\lambda,\omega)&=(2\pi)^{-d_a/2}2^{-1/2}
  \lambda_\alpha^{(d_a-2)/4}\int \e^{-\i  \lambda^{1/2}_\alpha \omega\cdot
  x_a}\varphi(x_a)\,\d
x_a;\\&\quad \quad\lambda_\alpha=\lambda-\lambda^\alpha,\quad \lambda\in I^\alpha.  
\end{split}
\end{align}

We denote
\begin{equation*}
 c_\alpha^\pm(\lambda)=\e^{\pm \i
  \pi (d_a-3)/4}\pi ^{-1/2}\lambda_\alpha^{1/4},
\end{equation*} and $F_\rho=F(\set{x\in \bX\mid\abs{x}<\rho})$ for
$\rho>1 $ (considered below as  multiplication operators).
\begin{proposition}[\cite {Sk3}]\label{prop:radi-limits-chann22} 
  For any channel $\alpha=(a,\lambda^\alpha, u^\alpha)$,
  $\lambda\in \vE^\alpha $, $\psi\in \vB$ and $g\in \vG_a$
  there exist the  weak limits 
  \begin{align}\label{eq:restrH}
    \begin{split}
   \inp{\Gamma^\pm_{\alpha}&(\lambda)\psi,g}=\lim_{\rho\to
    \infty}\,\overline{c_\alpha^\pm(\lambda)}\rho^{-1}\\&
    \inp[\big]{F_\rho R(\lambda\pm  \i 0)\psi, F_\rho \parb{u^\alpha \otimes
    \abs{x_a}^{(1-d_a)/2}
                                      \e^{\pm \i
                                         K_a(\abs{x_a}\cdot,\lambda_\alpha)}g(\pm\cdot)}}.   
    \end{split}
  \end{align}  Here    the   limits $\Gamma^\pm_{\alpha}(\lambda)$
  are  
weakly continuous $\vL(\vB , \vG_a)$-valued functions of  $\lambda\in\vE^\alpha$.  
\end{proposition}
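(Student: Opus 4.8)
The plan is to realize $\Gamma^\pm_\alpha(\lambda)$ as the channel-$\alpha$ radiation amplitude of the generalized eigenfunction $u^\pm:=R(\lambda\pm\i 0)\psi$, and to establish convergence of the defining averages by a Green-type identity together with the microlocal (``radiation condition'') bounds already at hand, the one-body case $\#a=1$ serving as the model computation that fixes the normalization.

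First I would fix $\lambda\in\vE^\alpha$ and $\psi\in\vB$ and set $u^\pm=R(\lambda\pm\i 0)\psi$, so that \eqref{eq:BB^*a} gives $u^\pm\in\vB^*$ and $(H-\lambda)u^\pm=\psi$. Writing $w=w^\pm_g=u^\alpha\otimes\bigl(\abs{x_a}^{(1-d_a)/2}\e^{\pm\i K_a(\abs{x_a}\cdot,\lambda_\alpha)}g(\pm\cdot)\bigr)$, a short computation using $\norm{u^\alpha}=1$ shows $w\in\vB^*$ with $\norm{w}_{\vB^*}\lesssim\norm{g}_{\vG_a}$, so that $\phi^\pm_\rho:=\overline{c^\pm_\alpha(\lambda)}\,\rho^{-1}\inp{F_\rho u^\pm,F_\rho w}$ is bounded uniformly in $\rho>1$; the claim is that it converges as $\rho\to\infty$. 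The first step is to localize to the channel: using a phase-space partition of unity of the kind in Section \ref{sec::A partition of unity}, the microlocal bound \eqref{eq:MicroB0}, and where needed the Yafaev-type channel operators $M_a$ of Section \ref{sec: Yafaev type constructions}, one replaces $u^\pm$ by its part supported in a small conical neighbourhood of $\bX_a$, in the outgoing/incoming momentum region $\chi_+(\pm B/\epsilon_0)$, and in $\{\abs{x}\text{ large}\}$. The discarded pieces either lie in $\vB_0^*$ --- hence contribute $0$ to the $\rho^{-1}$-average, by the definitions of $\vB_0^*$ and $\norm{\cdot}_{\rm quo}$ --- or are annihilated against $w$ because on that cone both the intercluster potential $I_a$ and the part of $V^a$ transverse to $\bX_a$ are short range (this is where the $\vO^a_\kappa(\cdot)$-calculus and the extended microlocal bounds of Appendix \ref{sec:Strong bounds} enter). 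On the surviving region $H-\lambda$ agrees, modulo terms becoming $\vO(\abs{x_a}^{-1-2\mu})$ after composition with $u^\alpha\otimes(\cdot)$, with the free operator $p_a^2-\lambda_\alpha$ in the variable $x_a$.

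The core step is then the convergence of $\phi^\pm_\rho$. Differentiating $\rho^{-1}\inp{F_\rho u^\pm,F_\rho w}$ in $\rho$ by a Green-type identity on $\{\abs{x}<\rho\}$ for the pair $(u^\pm,\overline w)$ and $H-\lambda$, the bulk terms are $\inp{\psi,w}_{\{\abs{x}<\rho\}}$ and $\inp{u^\pm,(H-\lambda)w}_{\{\abs{x}<\rho\}}$, both acceptable after the $\rho^{-1}$-normalization since $\psi\in\vB$, $w\in\vB^*$ and $(H-\lambda)w$ is short range on the channel cone by the choice $K_a=\sqrt{\lambda_\alpha}\abs{x_a}$; the boundary terms on $\{\abs{x}=\rho\}$ involve $\partial_{\abs{x_a}}u^\pm\mp\i\sqrt{\lambda_\alpha}\,u^\pm$, which is small on the localized outgoing/incoming part by \eqref{eq:MicroB0}, against $\partial_{\abs{x_a}}w\mp\i\sqrt{\lambda_\alpha}\,w$, which is small by construction, so that the remaining oscillatory boundary contribution is stationary in the radial variable and its $\rho$-linear part is exactly cancelled by $\rho^{-1}$. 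Keeping track of the constants --- radial stationary phase plus the $L^2(\mathbf S_a)$-pairing in the angle --- identifies the limit with $g\mapsto\inp{\Gamma^\pm_\alpha(\lambda)\psi,g}$, where $c^\pm_\alpha(\lambda)$ and the Fourier map $F_\alpha$ of \eqref{eq:Four} restricted to the energy shell $\abs{\xi_a}=\sqrt{\lambda_\alpha}$ assemble the normalization; in the case $\#a=1$ this is just the classical one-body radial limit for the free resolvent, and the general case may alternatively be reduced to it via the resolvent identity, the intercluster corrections being harmless in the average by the $Q$-bounds of Remarks \ref{remark:RQ}. Either way one obtains $\Gamma^\pm_\alpha(\lambda)\in\vL(\vB,\vG_a)$ with norm locally bounded in $\lambda$.

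Finally, weak continuity of $\lambda\mapsto\Gamma^\pm_\alpha(\lambda)$ follows by an $\epsilon/3$-argument. All estimates entering the previous step are locally uniform in $\lambda\in\vE^\alpha$ (those in \eqref{eq:LAPbnda}, \eqref{eq:BB^*a}, \eqref{eq:MicroB0} are), so $\phi^\pm_\rho\to\inp{\Gamma^\pm_\alpha(\lambda)\psi,g}$ uniformly on compact subsets of $\vE^\alpha$; and for each fixed $\rho$ the map $\lambda\mapsto\phi^\pm_\rho$ is continuous, since by \eqref{eq:BB^*a} the map $\lambda\mapsto R(\lambda\pm\i 0)\psi\in\vB^*$ is continuous for the weak${}^*$ topology and locally bounded, $F_\rho w^\pm_g$ is a compactly supported $\vH$-valued function depending norm-continuously on $\lambda$ through $K_a(\cdot,\lambda_\alpha)$ and $\lambda_\alpha^{1/4}$, and $c^\pm_\alpha$ is continuous; a uniform limit of continuous functions is continuous. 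I expect the main obstacle to be the channel-localization step --- extracting the genuine channel-$\alpha$ contribution in the full $N$-body setting, i.e. showing that the cross terms with the other channels, with the continuous spectrum of $H^a$, and with the wrong sign $\mp$ all average to $0$ --- because this is exactly where one must run the complete microlocal machinery of Section \ref{sec: Yafaev type constructions} and Appendix \ref{sec:Strong bounds}, and must do so without ever invoking a priori decay of $u^\alpha$.
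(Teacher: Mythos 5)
This proposition is quoted from \cite{Sk3} (note the ``[\cite{Sk3}]'' tag on the statement); the present paper does not reprove it, so your sketch has to be measured against the argument there. Your high-level picture --- realize $\Gamma^\pm_\alpha(\lambda)$ as a radial limit, microlocalize to the outgoing/incoming region via \eqref{eq:MicroB0}, localize to the channel cone via the Yafaev machinery, then extract the limit from a stationary-phase/radiation-condition calculation, with the weak continuity following from uniformity of all the bounds in $\lambda$ --- is consistent with the spirit of \cite{Sk3}, and you correctly flag the hard step.

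However, the Green-type identity mechanism you propose is not quite coherent as written, and this is not a mere gloss. Green's identity on $\{\abs{x}<\rho\}$ produces boundary terms involving the normal derivative $\partial_{\abs{x}}$, whereas the phase of $w=u^\alpha\otimes(\abs{x_a}^{(1-d_a)/2}\e^{\pm\i\sqrt{\lambda_\alpha}\abs{x_a}}g)$ and the Sommerfeld-type radiation condition you wish to invoke are in the intercluster variable $\abs{x_a}$. These coincide only after you have already restricted $u^\pm$ to the cone $\abs{x^a}\ll\abs{x}$ and controlled the $u^\alpha$-weighted integration over the internal variable --- which is precisely the channel-localization step you declare unresolved, and which is exactly where the absence of any a priori decay of $u^\alpha$ bites. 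In other words, the Green's-identity step is downstream of the hard part, not parallel to it. Moreover, even on the channel cone, \eqref{eq:MicroB0} gives only a $\vB_0^*$-control of the incoming component, so the surface flux $\int_{\abs{x}=\rho}$ does not converge pointwise in $\rho$; only the Ces\`aro average does, and one has to be set up from the start to exploit this (the quotient-norm framework), rather than trying to show the surface integral itself has a limit as you suggest. Finally, the alternative route you offer --- reduce to the one-body case ``via the resolvent identity'' $R=R_a-R_aI_aR$ --- is a Lippmann--Schwinger-type argument that the paper explicitly avoids (see Subsection \ref{subsec: Extensions and comparison with the literature}), and is known to create exactly the solvability and a-priori-decay difficulties this framework is designed to bypass. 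The \cite{Sk3} proof instead works through propagation estimates, the $Q$-bound calculus, and the quasi-mode ``orthogonality'' \eqref{eq:pars}, reducing to the free pair $(H_a,p_a^2)$ in a way that never needs pointwise decay of $u^\alpha$; you have identified the right difficulty but not a way around it.
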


The restrictions of the map $F_\alpha (W^\pm_\alpha)^*$
have  
strong almost everywhere interpretations, meaning more precisely
\begin{align*}
  F_\alpha (W^\pm_\alpha  )^*\psi= \int^\oplus_{
  I^{\alpha} }\parb{F_\alpha (W^\pm_\alpha)^*\psi}(\lambda)\,\d
  \lambda;\quad \psi\in \vH.
\end{align*} When applied to $\psi\in \vB \subseteq
\vH $ the following relationship to Proposition  \ref{prop:radi-limits-chann22} holds. 

\begin{thm}[{\cite{Sk3}}]\label{thm:wave_matrices} For any channel
  $(a,\lambda^\alpha, u^\alpha)$  and  any $\psi\in \vB \subseteq
\vH$ 
\begin{equation}\label{eq:adjFORM}
  \parb{F_\alpha
  (W^\pm_\alpha)^*\psi}(\lambda)=\Gamma_\alpha^\pm(\lambda)\psi\quad\text{for a.e.
  } \lambda\in \vE^\alpha.
\end{equation} In particular  for any  $\psi\in \vB $ the restrictions $\parb{F_\alpha (W^\pm_\alpha)^*\psi}(\cdot)$ are
weakly continuous $\vG_a$-valued functions on  $\vE^\alpha$.  
\end{thm}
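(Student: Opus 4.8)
\medskip
\noindent\textbf{Proof proposal.}\enspace
The plan is to establish a \emph{stationary formula} for the spectral density $\lambda\mapsto\bigl(F_\alpha(W^\pm_\alpha)^*\psi\bigr)(\lambda)$ and then to recognise it as the radial limit $\Gamma^\pm_\alpha(\lambda)\psi$ of Proposition \ref{prop:radi-limits-chann22}. Since $\bigl(F_\alpha(W^\pm_\alpha)^*\psi\bigr)(\cdot)$ and $\Gamma^\pm_\alpha(\cdot)\psi$ are a.e.\ defined $\vG_a$-valued functions on $\vE^\alpha$, it suffices to test them against a countable family of vectors $F_\alpha^{-1}(f\otimes g)$ with $f\in C_\c(\vE^\alpha)$ and $g\in\vG_a$; by unitarity of $F_\alpha$ and the definition of the adjoint this reduces \eqref{eq:adjFORM} to
\begin{equation}\label{eq:propRed}
  \langle \psi,\, W^\pm_\alpha F_\alpha^{-1}(f\otimes g)\rangle=\int_{I^\alpha} f(\lambda)\,\langle \Gamma^\pm_\alpha(\lambda)\psi,\, g\rangle_{\vG_a}\,\d\lambda,\qquad \psi\in\vB .
\end{equation}
Throughout $\lambda$ is confined to a fixed compact subset of $\vE^\alpha$, and all bounds below are meant locally uniformly there.

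First I would pass to the stationary picture by Cook's method. In the short-range case $S^\pm_a(p_a,t)+\lambda^\alpha t=t\,k_\alpha$, so $W^\pm_\alpha=\slim_{t\to\pm\infty}\e^{\i tH}J_\alpha\e^{-\i t k_\alpha}$ by \eqref{eq:wave_op}; and $H^au^\alpha=\lambda^\alpha u^\alpha$ forces $H_aJ_\alpha=J_\alpha k_\alpha$, hence the intertwining defect $HJ_\alpha-J_\alpha k_\alpha=I_aJ_\alpha$ on $\mathcal D(k_\alpha)$. Therefore
\begin{equation*}
  W^\pm_\alpha\varphi=J_\alpha\varphi+\i\int_0^{\pm\infty}\e^{\i tH}I_aJ_\alpha\e^{-\i t k_\alpha}\varphi\,\d t,\qquad \varphi=F_\alpha^{-1}(f\otimes g),
\end{equation*}
the integral converging (in $\vB^*$, or as a pairing against $\vB$) by the propagation/minimal-velocity estimates attached to the Mourre estimate \eqref{eq:mourreFull} together with the short-range decay of $I_a$; alternatively one inserts the regulariser $\e^{-\epsilon\abs t}$ and removes it at the end. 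Pairing with $\psi\in\vB$, decomposing $\varphi=\int_{I^\alpha}f(\lambda)\,F_\alpha^{-1}(\delta_\lambda\otimes g)\,\d\lambda$ via \eqref{eq:Four} (where $F_\alpha^{-1}(\delta_\lambda\otimes g)$ is the bounded generalised eigenfunction of $k_\alpha$ at energy $\lambda$ with boundary data $g$, obtained by formally inverting \eqref{eq:Four}), using $\e^{-\i tk_\alpha}F_\alpha^{-1}(\delta_\lambda\otimes\cdot)=\e^{-\i t\lambda}F_\alpha^{-1}(\delta_\lambda\otimes\cdot)$, evaluating the elementary Abel integral, and moving the resulting adjoint resolvent onto $\psi$ via $R(z)^*=R(\bar z)$, the limiting absorption principle \eqref{eq:LAPbnda}--\eqref{eq:BB^*a} together with dominated convergence in $\lambda$ yield the \emph{stationary formula}: for a.e.\ $\lambda\in\vE^\alpha$ and all $g$,
\begin{equation}\label{eq:propStat}
  \langle\bigl(F_\alpha(W^\pm_\alpha)^*\psi\bigr)(\lambda),\, g\rangle_{\vG_a}=\langle\psi,\,J_\alpha F_\alpha^{-1}(\delta_\lambda\otimes g)\rangle-\langle R(\lambda\pm\i0)\psi,\,I_aJ_\alpha F_\alpha^{-1}(\delta_\lambda\otimes g)\rangle .
\end{equation}
Here $I_aJ_\alpha F_\alpha^{-1}(\delta_\lambda\otimes g)\in\vB$ by the short-range structure of $I_a$ in Condition \ref{cond:smooth2wea3n12} (this is where the exponent $1+2\mu>1$ enters), so both pairings in \eqref{eq:propStat} are legitimate $\vB$--$\vB^*$ dualities.

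It then remains to identify the right-hand side of \eqref{eq:propStat} with $\langle\Gamma^\pm_\alpha(\lambda)\psi,g\rangle$ for a.e.\ $\lambda$, and I would do this by a Green/flux argument linking both sides to the large-$\rho$ behaviour of $R(\lambda\pm\i0)\psi$. By definition \eqref{eq:restrH}, $\langle\Gamma^\pm_\alpha(\lambda)\psi,g\rangle$ is the $\alpha$-channel $g$-component of the leading $\rho\to\infty$ asymptotics of $R(\lambda\pm\i0)\psi$, read off as $\overline{c^\pm_\alpha(\lambda)}\rho^{-1}\langle F_\rho R(\lambda\pm\i0)\psi,F_\rho\Psi^\pm_{\lambda,g}\rangle$ with the conical profile $\Psi^\pm_{\lambda,g}:=u^\alpha\otimes\abs{x_a}^{(1-d_a)/2}\e^{\pm\i K_a(\abs{x_a}\cdot,\lambda_\alpha)}g(\pm\cdot)$; while the two explicit terms of \eqref{eq:propStat} are, respectively, the contribution of the incident channel state $J_\alpha F_\alpha^{-1}(\delta_\lambda\otimes g)$ and of the scattered wave it generates. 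That the two agree follows from the eigenfunction equation $(H-\lambda)R(\lambda\pm\i0)\psi=\psi$ and an integration by parts on $\set{\abs x<\rho}$ (a stationary analogue of Green's formula): the boundary term reproduces $\overline{c^\pm_\alpha(\lambda)}\rho^{-1}\langle F_\rho R(\lambda\pm\i0)\psi,F_\rho\Psi^\pm_{\lambda,g}\rangle$, the stationary-phase passage from the plane-wave normalisation of \eqref{eq:Four} to the spherical-wave normalisation in $\Psi^\pm_{\lambda,g}$ producing precisely the constant $c^\pm_\alpha(\lambda)$, while the remainder is killed as $\rho\to\infty$ by the $\vB^*$-asymptotics of $R(\lambda\pm\i0)\psi$ and the microlocal bounds \eqref{eq:MicroB0}, \eqref{eq:BB^*a} of \cite{Sk3}, which confine $R(\lambda\pm\i0)\psi$ to the correct outgoing/incoming sector. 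Combined with \eqref{eq:propRed} this proves \eqref{eq:adjFORM}; and since $\Gamma^\pm_\alpha(\cdot)$ is weakly continuous by Proposition \ref{prop:radi-limits-chann22}, \eqref{eq:adjFORM} exhibits $\lambda\mapsto\Gamma^\pm_\alpha(\lambda)\psi$ as a weakly continuous representative of the a.e.-class of $\bigl(F_\alpha(W^\pm_\alpha)^*\psi\bigr)(\cdot)$, which is the final assertion.

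The step I expect to be the main obstacle is the triple passage to the limit — $t\to\pm\infty$ in Cook's integral, $\epsilon\to0_+$ in the Abel regularisation, and $\rho\to\infty$ in the radial limit — all carried out in the Besov topology rather than in weighted $L^2$ spaces, and \emph{uniformly} in $\lambda$ on compact subsets of $\vE^\alpha$: $R(\lambda\pm\i0)\psi$ lies in $\vB^*$ but in no $L^2_s$, so its ``restriction to the energy shell'' is not literal and must be realised through the radial limit of \eqref{eq:restrH}. Controlling the oscillatory tails of $\e^{-\i tk_\alpha}\varphi$ in the Cook integral and legitimising the stationary-phase matching of the plane-wave and spherical-wave normalisations is exactly where the propagation estimates behind \eqref{eq:mourreFull}--\eqref{eq:MicroB0} and the sharp resolvent bounds of \cite{Sk3} are needed; this is the analytic core of the argument.
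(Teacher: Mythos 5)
Since the paper states this theorem as a citation from \cite{Sk3} and does not reproduce a proof, I can only assess whether your reconstruction is internally sound and whether it is the natural route. Your plan --- reduce by duality to the test identity \eqref{eq:propRed}, then Cook's formula with Abel regularisation and the limiting absorption principle \eqref{eq:LAPbnda}--\eqref{eq:BB^*a} to reach the stationary formula \eqref{eq:propStat}, then a Green/flux argument to match \eqref{eq:propStat} against the radial-limit definition \eqref{eq:restrH} of $\Gamma^\pm_\alpha(\lambda)$ --- is the standard stationary-theory route, and I checked that the sign in \eqref{eq:propStat} is consistent (for $W^+_\alpha$ one indeed gets $-\langle R(\lambda+\i 0)\psi, I_a J_\alpha\phi_\lambda\rangle$). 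This is, as far as the surrounding material in the paper allows one to infer, the same strategy as \cite{Sk3}.

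The substantive gap is the final identification, which you outline rather than execute. Green's identity on $\set{\abs x<\rho}$ produces a \emph{sphere} integral at $\abs x=\rho$ (in $u,\partial_n u,\phi,\partial_n\phi$), whereas \eqref{eq:restrH} is a \emph{ball average} $\rho^{-1}\inp{F_\rho u, F_\rho\Psi^\pm_{\lambda,g}}$; relating the two requires an averaging step (integrate in $\rho$ and divide, or an Abelian argument) that you do not state. More importantly, the plane-wave kernel of $F_\alpha^{-1}(\delta_\lambda\otimes\cdot)$ has a large-$\abs{x_a}$ expansion containing \emph{both} outgoing and incoming spherical-wave terms of the same leading order; you must show that only the $\pm$ one couples to $R(\lambda\pm\i 0)\psi$, and this is not a matter of ``killing a remainder'' but of \emph{selecting one of two leading terms}. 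That selection is exactly what the microlocal estimates \eqref{eq:MicroB0}, \eqref{eq:BB^*a} (and more refined $B^*$-propagation bounds of \cite{Sk3}) are for, and it is the analytic heart of the theorem. You correctly locate this as the hard step, but the decisive computation is missing, and without it the argument is a plan rather than a proof. The earlier reductions --- \eqref{eq:propRed}, the Cook/Abel/LAP passage to \eqref{eq:propStat}, and the exchange of the $t$- and $\lambda$-integrals --- are sound in outline and the Abel regulariser does handle the interchange as you say.
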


\begin{defn}\label{defn:scatEnergy07}  
An  energy $\lambda
  \in \vE:=(\min \vT(H),\infty)\setminus\vT_\p(H)$  is  {stationary
    complete}  for $H$ if  
\begin{equation}\label{eq:ScatEnergy222331}
  \forall \psi\in  L^2_\infty:\,\, \sum_{\lambda^\beta<
  \lambda}\,\norm{\Gamma_\alpha^\pm(\lambda) \psi}^2= 
  \inp{\psi,\delta(H-\lambda){\psi}}.
\end{equation} 
\end{defn} Asymptotic completeness follows by integration provided
(\ref{eq:ScatEnergy222331}) is known for almost all $\lambda \in \vE$
(motivating the used terminology). The orthogonality of the
 channel wave operators \eqref{eq:wave_op}  implies (as demonstrated in
\cite[Subsection 9.2]{Sk3})  that
  \begin{equation}\label{eq:Besn}
     \forall \lambda
  \in \vE,\,\forall \psi\in  \vB:\,\, \sum_{\lambda^\beta<
  \lambda}\,\norm{\Gamma_\alpha^\pm(\lambda) \psi}^2\leq 
  \inp{\psi,\delta(H-\lambda){\psi}}.
  \end{equation} 

It is also known (see \cite[Proposition
9.16]{Sk3})  that a \emph{sufficient
  and necessary condition} for $\lambda\in \vE$ be stationary complete is
  given as follows:

For all $\psi\in  L^2_\infty$ there exists
  $(g_\beta)_{\beta}\in \vG:=\Sigma^\oplus_{\beta }\,\vG_b$  (here 
  $\beta=(b, \lambda^\beta, u^\beta)$ runs  over all channels)  such that,  as an 
    identity in  $\vB^*/\vB_0^*$ (equipped with the quotient topology),
    \begin{equation}\label{eq:asres29}
       R(\lambda+\i 0)\psi=2\pi \i \sum_{\lambda^\beta<\lambda} J_\beta
     v^{+}_{\beta,\lambda} [g_\beta],
\end{equation} where (recalling) $J_\beta \phi
=u^\beta\otimes
\phi$, here   with  $\phi$  taken to be the \emph{outgoing one-body $\beta$-
  quasi-mode} corresponding to the plus case of
\begin{equation}\label{eq:quasiM}
  v^{\pm}_{\beta,\lambda} [g]( x_b):= \mp \tfrac \i{2\pi} \parb{c_\beta^\pm(\lambda)}^{-1}\chi_+(\abs{x_b}) \abs{x_b}^{(1-d_b)/2}
                                      \e^{\pm \i
                                         K_b(x_b,\lambda_\beta)}g(\pm\hat
                                           x_b);\quad g\in \vG_b. 
\end{equation} 

It is an elementary property
 that the right-hand side of \eqref{eq:asres29} is well-defined in
$\vB^*/\vB_0^*$ for any
$(g_\beta)_{\lambda^\beta<\lambda}\in \vG$, cf. \eqref{eq:eqBnd0}. In fact,   cf. \cite
[(9.22)]{Sk3}, the  quasi-modes enjoy  an `orthogonality property'
entailing
\begin{equation}\label{eq:pars}
    \norm[\Big]{\sum_{\lambda^\beta<\lambda} J_\beta
    v^{+}_{\beta,\lambda} [g_\beta]}_{\rm
    quo}^2=\sum_{\lambda^\beta<\lambda} \parb{4\pi
    \lambda_\beta^{1/2}}^{-1}\,\norm{g_\beta}^2.
  \end{equation}

From a practical point of view
(here explaining our strategy of verifying \eqref{eq:asres29}) it
suffices for any given $\lambda
  \in \vE,\,\psi\in  L^2_\infty$ and $\epsilon>0$ to produce a vector
  $(g^\epsilon_\beta)_{\lambda^\beta<\lambda}\in \vG$ such that 
\begin{equation*}
    \norm[\big]{R(\lambda+\i 0)\psi-2\pi \i \sum_{\lambda^\beta<\lambda} J_\beta
     v^{+}_{\beta,\lambda} [g^\epsilon_\beta]}_{\rm
    quo}\leq \epsilon.
  \end{equation*}

It is  known  that if
\eqref{eq:asres29} holds for some $(g_\beta)_{\lambda^\beta<\lambda}\in \vG$, then necessarily
$g_\beta=\Gamma_\beta^+(\lambda) \psi$. 

The scattering matrix
$S(\lambda)=\parb{S_{\beta\alpha}(\lambda)}_{\beta\alpha}$ is a
priori   given for almost all $\lambda
  \in \vE$ by 
\begin{equation*}
  \hat
  S_{\beta\alpha}:=F_\beta(W_{\beta}^+)^*W_{\alpha}^-F_\alpha^{-1}=\int^\oplus_{
  I_{\beta\alpha} }
  S_{\beta\alpha}(\lambda)\,\d \lambda,\quad
  I_{\beta\alpha}=I^\beta\cap I^\alpha.
\end{equation*} (For $\lambda\notin I^\beta\cap I^\alpha$ we let
$S_{\beta\alpha}(\lambda)=0$.)
 It is known  that $S(\cdot)$ is 
a  weakly continuous $\vL(\vG)$-valued function (in fact
contraction-valued) on  $\vE$. At  stationary complete energies
the scattering matrix is characterized geometrically as follows.
 \begin{thm}[{\cite{Sk3}}]\label{Cor:besov-space-setting}  Let $\lambda\in
  \vE$ be    stationary complete  and $\alpha=(a,\lambda^\alpha, u^\alpha)$ be any
  channel with $\lambda^\alpha<\lambda$. Then the following
  existence and uniqueness results hold for any  $
  g\in \vG_a$.
  \begin{enumerate}[1)]
  \item \label{item:As10} Let   $
    u=\Gamma^-_{\alpha}(\lambda)^* g$,  and let 
    $(g_\beta)_{\lambda^\beta<\lambda}\in \vG$ be   given by
    $g_\beta=S_{\beta\alpha}(\lambda) g$. Then,  as an
    identity in  $\vB^*/\vB_0^*$,
    \begin{align}\label{eq:as}
       u=J_\alpha
    v^{-}_{\alpha,\lambda} [ g]+\sum_{\lambda^\beta<\lambda} J_\beta
    v^{+}_{\beta,\lambda} [g_\beta].
    \end{align} 
  \item \label{item:As20} Conversely, if  \eqref{eq:as}
  is fulfilled for some  $ u\in \vB^*\cap H^2_{\mathrm{loc}}(\bX)$
  with $(H-\lambda)u=0$
  and for some $(g_\beta)_{\lambda^\beta<\lambda}\in \vG$, 
  then  $ u=\Gamma^-_{\alpha}(\lambda)^* g$ and
  $g_\beta=S_{\beta\alpha}(\lambda) g$.
\end{enumerate}
\end{thm}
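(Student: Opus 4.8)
The plan is to derive both assertions from the characterization \eqref{eq:asres29} of stationary completeness, the quasi-mode orthogonality \eqref{eq:pars}, and the description of $\Gamma^-_\alpha(\lambda)^*g$ as an incoming channel-$\alpha$ quasi-mode corrected by an application of the outgoing resolvent. For 1), since all maps entering \eqref{eq:as} are bounded and depend weakly continuously on $\lambda$ (Theorem \ref{thm:chann-wave-matrD}, Proposition \ref{prop:radi-limits-chann22}), it suffices to treat a dense set of $g\in\vG_a$, say $g\in C^\infty(\mathbf S_a)$. I would set $\phi:=J_\alpha v^-_{\alpha,\lambda}[g]\in\vB^*$, cf.\ \eqref{eq:quasiM}, and compute $\psi:=-(H-\lambda)\phi$: using $(H^a-\lambda^\alpha)u^\alpha=0$, the fact that $\abs{x_a}^{(1-d_a)/2}\e^{-\i K_a(x_a,\lambda_\alpha)}$ is an approximate generalized eigenfunction of $p_a^2$ at energy $\lambda_\alpha$ with remainder admissible for $\vB(\mathbf X_a)$, that the cut-off $\chi_+$ only contributes a boundary-layer term, and that $I_a\phi$ is $\vB$-admissible by the short-range Condition \ref{cond:smooth2wea3n12} (since $\abs{x}^{-1-2\mu}\abs{x_a}^{(1-d_a)/2}$ is), one checks that $\psi$ lands in the space in which \eqref{eq:asres29} is available. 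Next I would invoke the representation $\Gamma^-_\alpha(\lambda)^*g=\phi+R(\lambda+\i 0)\psi$ in $\vB^*$ — the unique $\vB^*$-solution of $(H-\lambda)u=0$ with incoming data $g$ in channel $\alpha$ and no incoming amplitude in any other channel — which is a result of \cite{Sk3} but can also be recovered by pairing against $\vB$-vectors through the flux computation behind Proposition \ref{prop:radi-limits-chann22}.

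Feeding $\psi$ into \eqref{eq:asres29}, extended from $L^2_\infty$ to $\vB$ by a density argument (both sides being bounded in $\psi$ into $\vB^*/\vB_0^*$ by \eqref{eq:BB^*a} and \eqref{eq:Besn}), yields in $\vB^*/\vB_0^*$
\[
  R(\lambda+\i 0)\psi=2\pi\i\sum_{\lambda^\beta<\lambda}J_\beta\, v^{+}_{\beta,\lambda}\bigl[\Gamma^+_\beta(\lambda)\psi\bigr].
\]
Since $\phi$ is purely incoming, this gives $u:=\Gamma^-_\alpha(\lambda)^*g=J_\alpha v^-_{\alpha,\lambda}[g]+\sum_{\lambda^\beta<\lambda}J_\beta v^+_{\beta,\lambda}[g_\beta]$ in $\vB^*/\vB_0^*$ with $g_\beta:=2\pi\i\,\Gamma^+_\beta(\lambda)\psi$, the coefficients being uniquely pinned down by $u$ thanks to \eqref{eq:pars} and the orthogonality in $\vB^*/\vB_0^*$ of incoming versus outgoing quasi-modes. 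It then remains to identify $g_\beta$ with $S_{\beta\alpha}(\lambda)g$; I would do this by unfolding the direct-integral definition $\hat S_{\beta\alpha}=F_\beta(W^+_\beta)^* W^-_\alpha F_\alpha^{-1}$ on a wave packet $F_\alpha^{-1}(\eta\otimes g)$ with $\eta$ concentrating at $\lambda$, invoking Theorem \ref{thm:wave_matrices} and the observation that the entire outgoing amplitude of the scattering state is carried by $R(\lambda+\i 0)\psi$.

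For 2), given $u\in\vB^*\cap H^2_{\mathrm{loc}}(\bX)$ with $(H-\lambda)u=0$ satisfying \eqref{eq:as}, I would subtract the solution $u_0:=\Gamma^-_\alpha(\lambda)^*g$ from part 1), which obeys \eqref{eq:as} with $g^0_\beta=S_{\beta\alpha}(\lambda)g$. Then $w:=u-u_0\in\vB^*$ solves $(H-\lambda)w=0$ and equals $\sum_{\lambda^\beta<\lambda}J_\beta v^+_{\beta,\lambda}[g_\beta-g^0_\beta]$ in $\vB^*/\vB_0^*$, so its incoming part vanishes. The flux (Lagrange) identity for generalized eigenfunctions in $\vB^*$ — equating total outgoing and total incoming flux when $(H-\lambda)w=0$ — together with \eqref{eq:pars} then forces $g_\beta=g^0_\beta$ for every $\beta$, whence $w\in\vB_0^*$; and a generalized eigenfunction lying in $\vB_0^*$ must vanish (as $\lambda\notin\vT_{\p}(H)$). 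This gives $u=u_0=\Gamma^-_\alpha(\lambda)^*g$ and $g_\beta=S_{\beta\alpha}(\lambda)g$, completing the argument.

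The hard part will be the two ingredients borrowed from \cite{Sk3} in part 1): passing from the weakly/radially defined object $\Gamma^-_\alpha(\lambda)^*g$ to the strong resolvent identity $\Gamma^-_\alpha(\lambda)^*g=\phi+R(\lambda+\i 0)\psi$ in $\vB^*$, and matching the coefficient $2\pi\i\,\Gamma^+_\beta(\lambda)\psi$ with the fibre scattering matrix $S_{\beta\alpha}(\lambda)g$ (which genuinely uses the direct-integral structure, not a single energy). A further technical nuisance is placing $\psi=-(H-\lambda)\phi$ in a space where \eqref{eq:asres29} applies when the channel eigenstate $u^\alpha$ carries no decay; this needs a slightly more careful quasi-mode or a limiting argument in $u^\alpha$, but is otherwise routine given Condition \ref{cond:smooth2wea3n12}.
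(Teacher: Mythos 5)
This theorem is quoted from~\cite{Sk3} (note the citation in the theorem header); the present paper contains no proof of it, so there is no in-paper argument to compare against. Taken on its own terms, your blueprint is the natural one: subtract the channel-$\alpha$ incoming quasi-mode $\phi=J_\alpha v^-_{\alpha,\lambda}[g]$, feed $\psi=-(H-\lambda)\phi$ into the characterization~\eqref{eq:asres29}, identify the outgoing coefficients with $S_{\beta\alpha}(\lambda)g$ by a wave-packet computation against $\hat S_{\beta\alpha}$, and for part~2) run a flux identity plus a Rellich-type vanishing theorem.

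The difficulty you wave off at the end is, however, not a ``technical nuisance'' but the central obstacle, and it is not clear your plan gets past it. When $u^\alpha$ is a mere $L^2$ eigenfunction of $H^a$ (and the paper explicitly insists on exactly this generality), the vector $\psi=-(H-\lambda)\phi$ typically fails to be in $\vB$, or even in $L^2_s$ for any $s>1/2$. Indeed $(H-\lambda)\phi=u^\alpha\otimes(p_a^2-\lambda_\alpha)v^-_{\alpha,\lambda}[g]+I_a\phi$, and while $(p_a^2-\lambda_\alpha)v^-_{\alpha,\lambda}[g]\in\vB(\bX_a)$ for $g\in C^\infty(\mathbf{S}_a)$, a tensor $u^\alpha\otimes h$ with $h\in\vB(\bX_a)$ need not be in $\vB(\bX)$: one only gets $\|F_m(u^\alpha\otimes h)\|_{\vH}^2\le\|u^\alpha\|^2\sum_{k\le m}\|F_kh\|^2$, which supplies no $2^{-m/2}$ decay without additional weighted decay of $u^\alpha$ in $\bX^a$. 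Consequently a density extension of \eqref{eq:asres29} from $L^2_\infty$ to $\vB$ does not reach $\psi$, and neither does the limiting absorption principle~\eqref{eq:LAPbnda}. This is precisely what the radial-limit definition of $\Gamma^\pm_\alpha(\lambda)$ in Proposition~\ref{prop:radi-limits-chann22} and the Besov-space machinery of~\cite{Sk3} are designed to circumvent; a Lippmann--Schwinger representation $\Gamma^-_\alpha(\lambda)^*g=\phi+R(\lambda+\i0)\psi$ is simply not available at this level of generality, and approximating $u^\alpha$ by decaying vectors spoils the eigenfunction equation in a way that is not controlled by Condition~\ref{cond:smooth2wea3n12}. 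Finally, note that as you phrase that representation (``the unique $\vB^*$-solution with incoming data $g$ in channel $\alpha$ and none in any other'') it is essentially equivalent to part~2) of the theorem, so using it as a lemma inside the proof of part~1) risks circularity unless the uniqueness is established independently.
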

\begin{thm}[{\cite{Sk3}}]\label{thm:strongC} \begin{enumerate}[1)]
  \item\label{item:1str} For any channel $\alpha$ 
    \begin{enumerate}[a)]
    \item \label{item:g} the operators
      $\Gamma^\pm_{\alpha}(\lambda)\in \vL(\vB, \vG_a)$
      are strongly continuous at any stationary complete energy
      $\lambda\in\vE^\alpha$.
    \item \label{item:f} the operators
      $\Gamma^\pm_{\alpha}(\lambda)^*\in \vL( \vG_a,L^2_{-s})$, $s>1/2$,
      are strongly continuous in 
      $\lambda\in\vE^\alpha$.
\end{enumerate}
\item\label{item:2str} The $\vL(\vG)$-valued function
  $S(\lambda)=\parb{S_{\beta\alpha}(\lambda)}_{\beta\alpha}$  is
  strongly continuous at any stationary  complete $\lambda\in
  \vE$. Moreover the restriction of $S(\lambda)$ to
  $\Sigma^\oplus_{\lambda^\beta <\lambda}\,\vG_b$ (i.e. considered as
  an operator on the energetically
  open sector of $\vG$) is unitary at  any such energy $\lambda$.
\end{enumerate}
\end{thm}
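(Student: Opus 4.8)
The plan is to obtain the three assertions from the material already recalled, the recurring mechanism being that \emph{weak convergence together with convergence of norms forces strong convergence} in a Hilbert space. I would first dispose of part \ref{item:1str} \ref{item:f}, which needs no stationary completeness: comparing the wave-matrix decomposition \eqref{eq:wavD1} with Theorem \ref{thm:wave_matrices} identifies $\Gamma^\pm_\alpha(\lambda)^*$ with the wave matrix $W^\pm_\alpha(\lambda)$ of Theorem \ref{thm:chann-wave-matrD} (the two agree for a.e.\ $\lambda$, hence everywhere on $\vE^\alpha$ since both depend continuously on $\lambda$), and since $\vB^*\hookrightarrow L^2_{-s}$ continuously for $s>1/2$, the asserted strong continuity of $\lambda\mapsto\Gamma^\pm_\alpha(\lambda)^*$ into $L^2_{-s}$ is exactly the strong continuity of $\lambda\mapsto W^\pm_\alpha(\lambda)$ stated in Theorem \ref{thm:chann-wave-matrD}.

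For part \ref{item:1str} \ref{item:g} I would fix a stationary complete $\lambda_0\in\vE^\alpha$ and a neighbourhood $U\ni\lambda_0$ with $\overline U\subseteq\vE$, so that the channel index set $\{\beta\mid\lambda^\beta<\lambda\}$ is one fixed countable set for all $\lambda\in U$ (no threshold lies in $U$). Since each $\Gamma^\pm_\beta(\cdot)$ is weakly continuous $\vL(\vB,\vG_b)$-valued (Proposition \ref{prop:radi-limits-chann22}), the operator norms $\norm{\Gamma^\pm_\beta(\lambda)}$ are uniformly bounded on $U$, and for fixed $\psi\in L^2_\infty$ the function $\lambda\mapsto\inp{\psi,\delta(H-\lambda)\psi}$ is continuous on $U$ by the locally uniform limiting absorption principle \eqref{eq:LAPbnda}. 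Given $\lambda_n\to\lambda_0$ in $U$ and $\psi\in L^2_\infty$, the weak continuity of the individual $\Gamma^\pm_\beta$ together with the uniform bound $\sum_\beta\norm{\Gamma^\pm_\beta(\lambda_n)\psi}^2\le\inp{\psi,\delta(H-\lambda_n)\psi}$ from \eqref{eq:Besn} gives weak convergence $(\Gamma^\pm_\beta(\lambda_n)\psi)_\beta\to(\Gamma^\pm_\beta(\lambda_0)\psi)_\beta$ in the open sector $\Sigma^\oplus_{\lambda^\beta<\lambda_0}\vG_b$; then weak lower semicontinuity of the norm, \eqref{eq:Besn} at $\lambda_n$, continuity of $\inp{\psi,\delta(H-\cdot)\psi}$, and the on-shell Parseval identity \eqref{eq:ScatEnergy222331} at $\lambda_0$ produce the squeeze $\sum_\beta\norm{\Gamma^\pm_\beta(\lambda_0)\psi}^2\le\liminf_n\sum_\beta\norm{\Gamma^\pm_\beta(\lambda_n)\psi}^2\le\limsup_n\sum_\beta\norm{\Gamma^\pm_\beta(\lambda_n)\psi}^2\le\inp{\psi,\delta(H-\lambda_0)\psi}=\sum_\beta\norm{\Gamma^\pm_\beta(\lambda_0)\psi}^2$, so the norms converge and the convergence is strong; in particular $\Gamma^\pm_\alpha(\lambda_n)\psi\to\Gamma^\pm_\alpha(\lambda_0)\psi$ in $\vG_a$. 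A final $\varepsilon/3$ argument using the uniform bound on $\norm{\Gamma^\pm_\alpha(\lambda)}$ over $U$ and the density of $L^2_\infty$ in $\vB$ extends this to all $\psi\in\vB$.

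For part \ref{item:2str} I would first prove that $S(\lambda_0)$ restricted to the energetically open sector is unitary, and then read off strong continuity. For a channel $\alpha$ with $\lambda^\alpha<\lambda_0$ and $g\in\vG_a$, Theorem \ref{Cor:besov-space-setting} represents $u=\Gamma^-_\alpha(\lambda_0)^*g$ in the form \eqref{eq:as} with outgoing data $g_\beta=S_{\beta\alpha}(\lambda_0)g$; since $(H-\lambda_0)u=0$ with $u\in\vB^*$, a conservation-of-flux argument — quantified by the quasi-mode orthogonality \eqref{eq:pars} and by the mutual orthogonality in the quotient norm of the incoming summand $J_\alpha v^-_{\alpha,\lambda_0}[g]$ and the outgoing sum — equates incoming and outgoing fluxes and, after the normalizations built into the $F_\alpha$'s, reads $\norm{g}_{\vG_a}^2=\sum_{\lambda^\beta<\lambda_0}\norm{S_{\beta\alpha}(\lambda_0)g}^2=\norm{S(\lambda_0)g}_\vG^2$, i.e.\ $S(\lambda_0)$ is isometric on the open sector; running the same argument in time-reversed form (the $\Gamma^+$-analogue of Theorem \ref{Cor:besov-space-setting}, whose scattering matrix is $S(\lambda_0)^*$) makes $S(\lambda_0)^*$ isometric there as well, so $S(\lambda_0)$ restricted to the open sector is unitary. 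Strong continuity at $\lambda_0$ is then soft: the open sector is locally constant, $S(\cdot)$ vanishes off it near $\lambda_0$, and on it the weak continuity and contractivity of $S(\cdot)$ recalled above combine with the isometry of $S(\lambda_0)$ to give $\norm{g}_{\vG_a}=\norm{S(\lambda_0)g}_\vG\le\liminf_{\lambda\to\lambda_0}\norm{S(\lambda)g}_\vG\le\limsup_{\lambda\to\lambda_0}\norm{S(\lambda)g}_\vG\le\norm{g}_{\vG_a}$, whence $\norm{S(\lambda)g}_\vG\to\norm{S(\lambda_0)g}_\vG$ and, with weak convergence, $S(\lambda)g\to S(\lambda_0)g$ strongly.

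I expect the genuinely hard part to be the flux-conservation identity underlying the unitarity in part \ref{item:2str}: one must justify, within the Besov-space calculus, that for $u\in\vB^*$ solving $(H-\lambda_0)u=0$ the quotient norm decomposes orthogonally into incoming and outgoing contributions and that these are equal, while tracking the normalizing constants so that the resulting balance is precisely $\norm{S(\lambda_0)g}_\vG=\norm{g}_{\vG_a}$, and to apply this identity again in time-reversed form for the co-isometry needed in the possibly infinite-dimensional channel space; by contrast parts \ref{item:1str} \ref{item:f}, \ref{item:1str} \ref{item:g}, and the continuity half of part \ref{item:2str} are soft consequences of the limiting absorption principle, the one-sided bound \eqref{eq:Besn}, the quasi-mode orthogonality \eqref{eq:pars}, and the on-shell Parseval formula \eqref{eq:ScatEnergy222331}.
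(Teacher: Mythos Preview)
The paper does not contain a proof of this theorem at all: it is stated with the attribution \cite{Sk3} and no argument is given, so there is nothing in the present paper to compare against. Your proposal is therefore not competing with a proof in this paper but is a reconstruction from the surrounding material.

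That said, your reconstruction is sound and follows the expected lines. A few remarks. For \ref{item:1str}\ref{item:f} your identification $\Gamma^\pm_\alpha(\lambda)^*=W^\pm_\alpha(\lambda)$ is correct, but the ``agree a.e.\ hence everywhere by continuity'' step should be made precise by noting that both sides, restricted to $L^2_s$ for some $s>1/2$, are weakly continuous $\vL(L^2_s,\vG_a)$-valued (one by Proposition~\ref{prop:radi-limits-chann22}, the other as the adjoint of the strongly continuous $W^\pm_\alpha(\lambda)$). For \ref{item:1str}\ref{item:g} your squeeze argument is exactly the right mechanism; the passage from componentwise weak convergence of $\Gamma^\pm_\beta(\lambda_n)\psi$ to weak convergence in the full direct sum $\Sigma^\oplus_\beta\vG_b$ uses the uniform norm bound \eqref{eq:Besn} and the square-summability of the test vector, which you might state explicitly. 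For \ref{item:2str} your outline is right in spirit, but the quotient-norm identity \eqref{eq:pars} alone is not a flux identity: what is actually used (in \cite{Sk3}) is a boundary-pairing computation on $\vB^*$-solutions of $(H-\lambda)u=0$, pairing incoming and outgoing quasi-modes against each other via $\Im\langle\cdot,(H-\lambda)\cdot\rangle$ or an equivalent radial-limit device, which produces the equality $\norm{g}^2=\sum_\beta\norm{S_{\beta\alpha}(\lambda)g}^2$ directly. You correctly flag this as the substantive step; the rest of your argument (co-isometry via the time-reversed version, and strong continuity from weak continuity plus contractivity plus isometry at $\lambda_0$) is routine and correct.
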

\begin{remarks} \label{remark:R}
  \begin{enumerate}[i)]
  \item \label{item:1Sk}
Although we have referred to various
  results of \cite{Sk3}, strictly speaking the condition on the
  short-range potentials is somewhat stronger than those of Condition
  \ref{cond:smooth2wea3n12}. However the conditions of \cite{Sk3} are
  used to deal with the long-range case and the stronger condition
  imposed on the short-range pair-potentials appearing there are clearly not needed
  if the long-range part of the potentials is absent.

\item \label{item:2Sk}
One  can  show the well-definedness of  the channel wave operators 
 \eqref{eq:wave_op} using \cite{Sk3}. Indeed to show the existence of the limits,
 \begin{equation*}
   \lim_{t\to \pm\infty}\e^{\i tH}J_\alpha\e^{-\i
  (  S_a^\pm (p_a,t)+\lambda^\alpha t)}\varphi=\lim_{t\to
  \pm\infty}\e^{\i tH} \e^{-\i t H_a}J_\alpha\varphi,
 \end{equation*} it suffices to consider $\varphi$ localized as 
 $\varphi=f_1(k_\alpha)\varphi$ with $f_1$ being  any  narrowly supported standard support
 function obeying  $f_1=1$ in a
 neighbourhood of any  given $\lambda_0\not\in \vT_{\p}(H)$.
  Furthermore we can assume that
 the Fourier transform $\hat \varphi$ is supported away from
 collision planes, and with this assumption the proof reduces to the existence of the limits
 \begin{equation}\label{eq:aux}
   \slim_{t\to \pm\infty}\,{f_2}(H)\e^{\i tH}
  M_a \e^{-\i tH_a}{f_2}(H_a),
 \end{equation} where $M_a$ are  given by 
\eqref{eq:M_a0} for  functions  $m_a$ constructed as in Section
\ref{subsec: Geometric considerations for a0=a{max}} and $f_2\succ
f_1$,  also given supported  near  $\lambda_0$. At this point
the freedom of adjusting the small parameter $\epsilon>0$ in Lemmas
\ref{lemma:ma1} and \ref{lemma:m1} is important, cf. the beginning of \cite[Section
9]{Sk3}. Using now Section
\ref{subsec: Geometric considerations for a0=a{max}} the limits are
obtained by  Kato smoothness bounds, obtained by invoking the
`propagation observable' 
\eqref{eq:PropBasic} along with 
\eqref{eq:Hes0}--\eqref{eq:1712022apll}  and  the limiting
absorption principle \eqref{eq:LAPbnda}, applicably for $H$ as well as for   $H_a$.  This procedure is explained  in detail   for a more complicated setting in  \cite[Section 7.2]{Sk3}. 
 \end{enumerate}
\end{remarks}

\section{A phase-space partition of unity}\label{sec::A partition of unity}

Let  $\lambda\notin \vT_\p(H)$  and   $\psi\in L^2_\infty$  be given. 
  Recall that we aim at  deriving  the asymptotics of $R(\lambda+\i0)\psi$ in agreement with \eqref{eq:asres29}.

  In this section we explain the first step in a procedure that in
  Section \ref{sec:Proof of stationary completeness} will
  be  iterated
    to
  derive \eqref{eq:asres29}. As the reader will  see a certain `effective 
  partition of unity' $I\simeq \Sigma_{a\in \vA'}\,S_a$, will allow us to reduce the problem of
  asymptotics to that of $R_a(\lambda+\i0)\psi$, $a\in \vA'$. We let
  $a_0=a_{\max}$ and consider the corresponding operator $M_{a}$ from
  Subsection \ref{subsec: Geometric considerations for a0=a{max}}.
 
Recalling  \eqref{eq:MicroB0} 
  the problem is  reduced to  the asymptotics of $ \chi_+( B/\epsilon_0)R(\lambda+\i0)\psi$. The problem is further reduced thanks to the following elementary estimate, cf.  \cite[Lemma
  8.1]{Sk3} and its proof.
\begin{lemma}\label{lem:compBogB}
 Let the  positive parameter
  $\epsilon$ in the construction of $M_{a_0}$ be sufficiently small,
  and let $f\in C_\c^\infty(\R)$ be given. Then 
  \begin{equation}
    \label{eq:small}
    \chi_-(2M_{a_0}/\epsilon_0)\chi_+(B/\epsilon_0)f(H)=\vO(\inp{x}^{-1/2}).
  \end{equation}
\end{lemma}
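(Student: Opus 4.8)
The plan is to recognize \eqref{eq:small} as a standard "incompatible microlocalizations are smoothing" statement and to prove it by commuting the two symbols $\chi_-(2M_{a_0}/\epsilon_0)$ and $\chi_+(B/\epsilon_0)$ through each other via the Helffer--Sj\"ostrand formula, after inserting a factor $f(H)$ that makes momenta bounded. First I would use \eqref{eq:compa} in Lemma \ref{lemma:m1} \ref{item:15b}: since $\nabla(m_{a_0}(x)-\abs{x})=\vO(\epsilon^{3/2})$, the vector field $w_{a_0}=\mathop{\mathrm{grad}}m_{a_0}$ differs from $\omega=\mathop{\mathrm{grad}}r$ (up to the $\vO(\inp{x}^{-1})$ correction from \eqref{eq:compa00} relating $r$ and $\abs{x}$, which is harmless here) by a bounded field of size $\vO(\epsilon^{3/2})$. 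Consequently, modulo an operator of order $-1$,
\begin{equation*}
  M_{a_0}-B=2\Re\parb{(w_{a_0}-\omega)\cdot p}+\vO(\inp{x}^{-1})=\vO(\epsilon^{3/2})\cdot\vO(\inp{x}^0)\text{ after inserting }f(H),
\end{equation*}
i.e. $f(H)(M_{a_0}-B)f(H)=\vO(\epsilon^{3/2})$ in $\vL(\vH)$, with a constant uniform in the (small) parameter $\epsilon$.

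Next I would exploit the sign/support mismatch. On $\supp\chi_-(2\,\cdot\,/\epsilon_0)$ one has $\abs{M_{a_0}/\epsilon_0}\le 5/6$, hence $M_{a_0}/\epsilon_0\le 5/6$, whereas on $\supp\chi_+(\,\cdot\,/\epsilon_0)$ one has $B/\epsilon_0\ge 4/3$. Writing $g_-(t)=\chi_-(2t/\epsilon_0)$ and $g_+(t)=\chi_+(t/\epsilon_0)$, these are smooth functions of a self-adjoint operator whose supports are separated by a gap of width $\ge(4/3-5/6)\epsilon_0=\tfrac12\epsilon_0>0$. The standard argument is: expand $g_-(M_{a_0})g_+(B)f(H)$ using the Helffer--Sj\"ostrand representation \eqref{82a0} for $g_+(B)$, write
\begin{equation*}
  g_-(M_{a_0})(B-z)^{-1}=(B-z)^{-1}g_-(M_{a_0})+(B-z)^{-1}\comm{g_-(M_{a_0}),B}(B-z)^{-1},
\end{equation*}
and observe that $\comm{g_-(M_{a_0}),B}=\comm{g_-(M_{a_0}),B-M_{a_0}}$ since $g_-(M_{a_0})$ commutes with $M_{a_0}$; by the previous paragraph (with $f(H)$-factors reinstated using that $f\in C_\c^\infty$ and the relevant operators are of order $0$ after such localization) this commutator is $\vO(\epsilon^{3/2}\inp{x}^{-1})$ at worst of order $-1$ — in fact it suffices that it improves by $\inp{x}^{-1/2}$, which one gets from the standard commutator-with-$|x|$-type calculus on the vector-field part, cf. \cite[Section 6]{Sk3}. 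The "diagonal" term $(B-z)^{-1}g_-(M_{a_0})f(H)$ is then handled by a final HS-expansion in $g_-(M_{a_0})$: commuting a resolvent $(M_{a_0}-\zeta)^{-1}$ past $g_+(B)$ (equivalently $g_+$ of $B$ past $g_-$ of $M_{a_0}$) produces, after one more commutator, a term containing the product $g_-(M_{a_0})g_+(B)$ of two functions with \emph{disjoint} supports — which would vanish if $M_{a_0}=B$ exactly, and which here equals $g_-(M_{a_0})g_+(M_{a_0}+E)$ with $E=f(H)(B-M_{a_0})f(H)=\vO(\epsilon^{3/2})$; choosing $\epsilon$ small enough that the $\vO(\epsilon^{3/2})$ perturbation cannot bridge the gap $\tfrac12\epsilon_0$, this product is $0$. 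Collecting, every surviving term carries either an explicit $\vO(\inp{x}^{-1})$ (hence $\vO(\inp{x}^{-1/2})$) from a commutator or is exactly zero, giving \eqref{eq:small}.

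The main obstacle is purely bookkeeping: making the above commutator expansion rigorous at the level of \eqref{eq:1712022}-type order statements, i.e. tracking that each commutator of $g_-(M_{a_0})$ or $g_+(B)$ with $B$ or $M_{a_0}$ gains a power of $\inp{x}^{-1}$ (the vector fields $\omega$, $w_{a_0}$ are homogeneous of degree $0$, so their derivatives are $\vO(\inp{x}^{-1})$), and that inserting and removing the $f(H)$-factors is legitimate because $m_{a_0}$ is supported where $\abs{x}\ge 2/3$ so that $\comm{M_{a_0},H}$, $\comm{B,H}$ are controlled and $f(H)$ can be freely commuted to within $\vO(\inp{x}^{-N})$ errors — this is exactly the Helffer--Sj\"ostrand machinery the paper says is "used only tacitly", cf. \cite[Section 6]{Sk3}. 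Once that calculus is taken for granted, the smallness of $\epsilon$ (forcing $\supp g_-$ and $\supp g_+\!\circ(\cdot+E)$ to stay disjoint) closes the argument with no further input.
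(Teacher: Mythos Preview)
You have correctly isolated the two inputs that drive the lemma --- the support gap between $\chi_-(2\,\cdot\,/\epsilon_0)$ and $\chi_+(\,\cdot\,/\epsilon_0)$, and the smallness \eqref{eq:compa} of $\nabla(m_{a_0}-|x|)$ --- but the mechanism you propose for combining them does not close. A Helffer--Sj\"ostrand commutator expansion of $g_-(M_{a_0})g_+(B)$ only shows that the \emph{commutator} $[g_-(M_{a_0}),g_+(B)]$ is of order $-1$; it says nothing about the symmetrized product itself. Your ``diagonal'' term $g_+(B)g_-(M_{a_0})f(H)$ is still only $\vO(\inp{x}^0)$, and iterating the expansion is circular. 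The attempted rescue --- writing $B=M_{a_0}+E$ and claiming $g_-(M_{a_0})g_+(M_{a_0}+E)=0$ because the supports cannot be bridged --- fails on two counts: first, $E=B-M_{a_0}$ is an unbounded first-order operator (only $f(H)Ef(H)$ is bounded and $O(\epsilon^{3/2})$), so $g_+(M_{a_0}+E)$ is not a small perturbation of $g_+(M_{a_0})$ in any functional-calculus sense; second, even for bounded non-commuting $E$, the disjoint-support identity $g_-(A)g_+(A)=0$ holds only for the \emph{same} $A$, and a norm-perturbation argument would at best yield $\norm{g_-(M_{a_0})g_+(M_{a_0}+E)}=O(\norm{E})$, which is an $\vL(\vH)$-smallness statement, not the weighted $\vO(\inp{x}^{-1/2})$ claimed in \eqref{eq:small}.

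The paper's proof bypasses commutator expansions in favour of a direct positivity (expectation-value) argument, which is the natural way to exploit a one-sided sign mismatch. One sets
\begin{equation*}
  S=f_2(H)\chi_-(2M_{a_0}/\epsilon_0)\chi_+(B/\epsilon_0)\inp{x}^{1/2}f_1(H)
\end{equation*}
and bounds $\norm{S\brp}$ for $\brp\in L^2_\infty$ via the chain
\begin{equation*}
  \tfrac{\epsilon_0}{6}\norm{S\brp}^2\;\leq\;\inp{\epsilon_0 I-M_{a_0}}_{S\brp}
  \;\leq\;\inp{(\epsilon_0+C\epsilon)I-B}_{S\brp}
  \;\leq\;\parb{C\epsilon-\tfrac{\epsilon_0}{3}}\norm{S\brp}^2,
\end{equation*}
each step up to an additive $C_j\norm{\brp}^2$ coming from commuting $f_2(H)$, $\chi_-(\cdots)$, $\chi_+(\cdots)$ into position (those commutators are genuinely $\vO(\inp{x}^{-1})$ and are neutralized by the inserted $\inp{x}^{1/2}$). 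The first inequality uses that $M_{a_0}\leq\tfrac{5}{6}\epsilon_0$ on $\supp\chi_-(2\,\cdot\,/\epsilon_0)$, the second is \eqref{eq:compa}, and the third uses $B\geq\tfrac{4}{3}\epsilon_0$ on $\supp\chi_+(\,\cdot\,/\epsilon_0)$. For $\epsilon$ small the net coefficient of $\norm{S\brp}^2$ on the right is negative, forcing $\norm{S\brp}^2\leq C\norm{\brp}^2$; the same estimate conjugated by $\inp{x}^{s}$ then gives $S=\vO(\inp{x}^0)$, i.e.\ \eqref{eq:small}. The point is that the lemma is a \emph{sign} statement, not a ``commutators gain decay'' statement, and your bookkeeping plan misses this.
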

\begin{proof} We can assume $f=f_1$ is a standard support
  function. Let $f_2$ be another support function  with 
$f_2\succ f_1$.

 Introducing 
  \begin{equation*}
    S=f_2(H)\chi_-(2M_{a_0}/\epsilon_0)
  \chi_+(B/\epsilon_0)\inp{x}^{1/2}f_1(H),
  \end{equation*} it suffices  to be show that
  $S=\vO(\inp{x}^{0})$.

  We estimate for any $\brp\in
  L^2_\infty\subseteq \vH$  by   repeated commutation using tacitly \eqref{82a0} (recall also  the
generic notation
$\inp{T}_{\varphi}=\inp{\varphi,T\varphi}$):
  \begin{align} \label{eq:commsmall0}
  \begin{split}
&\tfrac{\epsilon_0}6 \norm {{S\brp}}^2\\
&\leq \inp{\epsilon_0 I-M_{a_0}
  }_{S\brp}+C_1\norm{\brp}^2 \\
&\leq \inp{(\epsilon_0+C\sqrt{\epsilon}) I-B
  }_{{S\brp}}+C_2\norm{\brp}^2 \quad \quad \quad (\text{by
  }\eqref{eq:compa00} \mand \eqref{eq:compa})\\&
\leq \inp{(\epsilon_0+C\sqrt{\epsilon}
  -\tfrac 43\epsilon_0) I}_{{S\brp}}+C_3\norm{\brp}^2 
\\&\leq C_3\norm{\brp}^2\quad\quad \quad\quad\quad \quad\quad\quad \quad (\text{for }3C\sqrt{\epsilon}\leq \epsilon_0). 
 \end{split} 
\end{align}

By repeating  the estimation
\eqref{eq:commsmall0} with $S\brp$ replaced by 
$\inp{x}^{s}S\inp{x}^{-s}\brp$, $s\in \R\setminus\set{0}$,  we
conclude  the
 estimate \eqref{eq:defOrder} with $t=0$ as wanted. Note that the
 above 
  constant $C$ works in this case also.
\end{proof}

Thanks  to \eqref{eq:MicroB0} and Lemma \ref{lem:compBogB} we are lead to studying the
asymptotics of
\begin{equation*}
  \chi^2_+(2M_{a_0}/\epsilon_0)\chi_+(
B/\epsilon_0)R(\lambda+\i0)\psi.
\end{equation*} In turn, taking narrowly supported
  functions
$f_2\succ f_1$ such that $f_1=1$ in a
small neighbourhood of $\lambda$, it suffices to consider 
\begin{equation*}
  f_2(H)\chi^2_+(2M_{a_0}/\epsilon_0)f_2(H)\chi_+(B/\epsilon_0)R(\lambda+\i0)\psi\text { assuming in addition } \psi= f_1(H)\psi.
\end{equation*} 

 For given $\phi_1,\phi_2\in\vB^*$ we write 
$\phi_1 \simeq \phi_2$,  if $\phi_1 - \phi_2 \in\vB^*_0$. With this
convention we  are lead to rewrite,  using in the second  step  Lemma
\ref{lemma:m1} \ref{item:9b},  in the third   step  conveniently
\eqref{eq:partM},  commutation and \eqref{eq:LAPbnda}, and in the last step  the limiting
absorption principle  and the `$Q$-bounds' \eqref{eq:2boundobtain338} and  \eqref{eq:2boundobtain33}
for $H$ as well as for    $H_a$ 
(see  the comments after the computation)
\begin{align*}
&R(\lambda+\i0)\psi\\
  &\simeq f_2(H)\chi^2_+(2M_{a_0}/\epsilon_0)f_2(H)\chi_+(
B/\epsilon_0)R(\lambda+\i0)\psi\\&=f_2(H)h(M_{a_0})\,\sum_{a\in \vA'}\,M_af_2(H)\chi_+(
B/\epsilon_0)R(\lambda+\i0)\psi\quad   (\text{with } h(t)=\chi^2_+(2t/\epsilon_0)/t)
\\&\simeq\Sigma_{a\in \vA'}\,S_aR(\lambda+\i0)\psi\quad   (\text {with }S_a=  f_2(H_a)h(M_{a_0})M_af_2(H)\chi_+(B/\epsilon_0)f_2(H))
  \\&=\Sigma_{a\in \vA'}\,R_a(\lambda+\i0)\psi_a;\\
&\quad \quad \quad \quad \quad \quad \psi_a=S_a\psi-\i T_aR(\lambda+\i0)\psi, \quad T_a=\i\parb{H_aS_a-S_aH}.
\end{align*}
 The last step is formally correct by a  resolvent
equation, and indeed mathematical correct, but it needs justification eventually to be given in
Appendix \ref{sec:Strong bounds}. In the following we confine
ourselves to give a preliminary account of  the
needed  arguments:

 It is a straightforward application of \eqref{82a0} to show that  $S_a=\vO(\inp{x}^{0})$, 
thus in particular the operator  preserves $ L^2_\infty$. Hence the
first term $ S_a\psi\in L^2_\infty$. The second term of $\psi_a$ is
 more   subtle. We rewrite its contribution  in terms of  any support
  function
$f_3\succ f_2$ as 
\begin{align*}
  -\i R_a(\lambda+\i0)&(\bD_aS_a)R(\lambda+\i0)\psi;\\& \bD_aT=\i\parb{H_af_3(H_a)T-THf_3(H)}, \quad \bD T=\i[Hf_3(H),T].
\end{align*} We compute
\begin{align}\label{eq:expan3}
  \begin{split}
 \bD_a S_a&=T_1+T_2+T_3;\\
&T_1=f_2(H_a)\parb {\bD_a h(M_{a_0})}M_af_2(H)\chi_+(B/\epsilon_0)f_2(H)\\
&T_2=f_2(H_a)h(M_{a_0})\parb {\bD M_a}f_2(H)\chi_+(B/\epsilon_0)f_2(H)\\
&T_3=f_2(H_a)h(M_{a_0})M_af_2(H)\parb {\bD \chi_+(B/\epsilon_0)}f_2(H).   
  \end{split}
\end{align} The contributions from $T_1$ and $T_2$ are treated
similarly (note that thanks to \eqref{eq:partM} we can freely replace  $\bD_a$ by $\bD$). For $T_2$  we expand the commutator into a sum of
terms.  Most
of them may be treated by  
 yet another application of the limiting
 absorption principle \eqref{eq:LAPbnda} and 
\eqref{eq:BB^*a}. In fact proceeding  this way it only remains
 to examine
\begin{equation*}
  -\i4 R_a(\lambda+\i0)
  h(M_{a_0})f_2(H_a)p\cdot\parb{\chi_+(|x|)^2\nabla^2m_a(x)} p f_2(H)\chi_+(B/\epsilon_0)f_2(H)R(\lambda+\i0)\psi,
\end{equation*}  Using 
\eqref{eq:Hes0} (and its  notation) and
\eqref{eq:2boundobtain33}  this  expression  is given  (up  to
 a well-defined term in $\vB^*$ thanks to \eqref{eq:LAPbnda}) by
\begin{equation}\label{eq:repgood1}
  -\i 4R_a(\lambda+\i0)h(M_{a_0})f_2(H_a)\sum_{j\leq J}\, \,G^*_{b_j}\xi_j( \hat x)\chi_+(\abs{x})
  \psi_j;\quad \psi_j\in \vH^{d}.
\end{equation} 
By similar arguments for $H_a$ 
rather than for $H$ it follows  that for  $j=1,\dots, J$
\begin{equation}\label{eq:repgood2}
   R_a(\lambda+\i0)h(M_{a_0})f_2(H_a)G^*_{b_j}\xi_j( \hat x)\chi_+(\abs{x})
   \in \vL(\vH^{d},  \vB^*).
\end{equation} Consequently $-\i
R_a(\lambda+\i0)T_2R(\lambda+\i0)\psi$ is a well-defined element of  $\vB^*$.
  
The argument for $T_1$ is similar using Lemma
\ref{lemma:m1} \ref{item:9b} (and  of course again 
\eqref{82a0}, \eqref{eq:LAPbnda} and 
\eqref{eq:BB^*a}).   The argument for $T_3$ is also similar,
we use  the `$Q$-bound' \eqref{eq:2boundobtain338} rather than the  `$Q$-bound' \eqref{eq:2boundobtain33}.

   In conclusion we have deduced   an exact
    representation    of the form 
    \begin{equation}\label{eq:quotientFor}
      R(\lambda+\i0)\psi=\Sigma_{a\in \vA'}\,R_a(\lambda+\i0)\psi_a
        \text{ in }\vB^*/\vB_0^*,
    \end{equation} however we did not show that $\psi_a\in
    L^2_\infty$. In fact there is no reason to believe that  the
    constructed $\psi_a\in
    \vB$, although our procedure  reveals that   $\psi_a\in
    L^2_{1/2}$, cf. \eqref{eq:repgood1}.

  In Appendix \ref{sec:Strong bounds} we
give a  rigorous derivation and interpretation of \eqref{eq:quotientFor} and prove in
addition the following result, to be used in Section \ref{sec:Squeezing argument}.

\begin{lemma}\label{lemma:reduc} Let $\lambda\notin \vT_\p(H)$ and 
  $\psi\in L^2_\infty$  be given. For the construction
  \eqref{eq:quotientFor} there exists a
  sequence of 
  vectors 
  $\psi_{a,n}\in L^2_\infty$, $a\in \vA'$,  such that
  $\psi_{a,n} \to \psi_a\in  L^2_{1/2}$ (in particular
  $(\psi_{a,n})_n $ is convergent in $\vH$)  and
  \begin{equation}\label{eq:deltaerror0}
    \norm[\big]{R_a(\lambda+\i0)\parb{\psi_a-\psi_{a,n}}}_{\vB^*}\to
    0\text{ for }n\to \infty.
  \end{equation} 
\end{lemma} 

We will  iterate the reduction scheme  of \eqref{eq:quotientFor} and Lemma \ref{lemma:reduc}  
by a proper use of 
the  partition 
\begin{equation}\label{eq:deling}
    \chi^2_+(\abs{x})=\chi^2_{\kappa-}(x)+ \chi^2_{\kappa+}(x);\quad\chi_{\kappa\pm}(x)=\chi_+(\abs{x})\chi_\pm(|x^{a_0}|/\kappa|x|),
  \end{equation} cf. Subsection \ref{subsec:
    Geometric considerations a0 ej}. This  is eventually done in Section
  \ref{sec:Proof of stationary completeness}, while Sections
  \ref{sec:Non-threshold analysis} and \ref{sec:Squeezing argument} are
  devoted to some preliminaries needed for appying \eqref{eq:deling}.

\section{Non-threshold analysis}\label{sec:Non-threshold analysis}

The following bounds may be viewed as a stationary version of the so-called minimal velocity bounds. In fact the latter bounds were used in \cite{Is3,Is5} to show the stationary ones for fastly decaying potentials in the $3$-body problem. The proof given below is completely stationary and requires only a minimum amount of regularity on the potentials (moreover it applies  with  long-range potentials included).
\begin{lemma}\label{lemma:non-thresh-analys}
  Let $ a\in \vA\setminus\{a_{\min},a_{\max}\}$ and   $\lambda\notin \vT_\p(H)$   be
  given. Then there exists 
  $c_1>0$ and for any $E\notin \vT_\p(H^a)$ 
an open  neighbourhood $U$ of $E$ such that with $\kappa= c_1 d^a  $,
where 
\begin{equation*}
   d^a= d(E,H^a)=4\dist\parb{E ,\set{\mu\in\vT (H^a)\mid \mu<E}},
\end{equation*} it follows:
\begin{subequations}
\begin{align}\label{eq:bB2}
   &\text{For all real }f\in C^\infty_\c(U)\mand\psi \in
   \vB, \text{ and with }C_2>0
   \text{ being  independent of } \psi,\nonumber\\
&\quad\quad \norm{Q R_a(\lambda+\i
  0)\psi}_{\vH}\leq C_2\norm{\psi}_{ \vB};\\&\quad\quad\quad\quad\quad Q=Q_{a,\kappa,f}=r^{-1/2}\chi_+(\abs{x})\chi_-(|x^{a}|/2\kappa|x|)f(H^a).\nonumber
\end{align}
Under the conditions of \eqref{eq:bB2} and with $f_1$ being  any narrowly supported
standard support function with 
$f_1=1$ in a
small neighbourhood of $\lambda$, 
\begin{equation}\label{eq:Qanontresh}
  \sup _{\Im z\neq 0}\norm{Q_{a,\kappa,f}{f_1}
    (H_a)R_a(z)}_{\vL(\vB,\vH)}\leq C_1<
 \infty.
\end{equation}  
\end{subequations}
\end{lemma}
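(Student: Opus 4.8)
The plan is to apply Lemma \ref{lemma:comSTA} to the operator $H_a$ with a propagation observable of the Yafaev/Mourre type that is localized simultaneously in energy for $H^a$ and away from the collision plane $\mathbf X_a$. Fix $a\in\vA\setminus\{a_{\min},a_{\max}\}$ and $\lambda\notin\vT_\p(H)$. Write $H_a=H^a\otimes I+I\otimes p_a^2$ and note that $H^a$ commutes with $H_a$, so any function $f(H^a)$ may be inserted freely. The key observation, due essentially to the $N$-body geometry, is that on the cone $\{|x^a|\le 2\kappa|x|\}$ the variable $x$ is essentially $x_a$, so the intercluster potential $I_a$ and the full commutator structure reduce, modulo $\vO(\inp{x}^{-1-2\mu})$-errors and $\vO^a_\kappa(\inp{x}^{-1})$-errors as in \eqref{eq:1712022apll22}, to the free radial commutator for $p_a^2$ localized by $f(H^a)$. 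Concretely, I would take
\begin{equation*}
  \Psi=\tfrac12 f_1(H_a)f(H^a)\,r^{1/2}\tilde B r^{1/2}\,f(H^a)f_1(H_a),
\end{equation*}
where $\tilde B$ is the radial generator in the $x_a$-variables (or equivalently $B$ cut off to the cone $\{|x^a|<2\kappa|x|\}$), $f_1$ is the narrow support function of the statement, and $f\in C^\infty_\c(U)$. The Mourre estimate \eqref{eq:mourreFull} for $H^a$ together with the fact that on the cone $r\simeq|x_a|$ yields, for $U$ small enough and $\kappa=c_1 d^a$ with $c_1$ a fixed small constant absorbing the geometric error $\vO^a_\kappa(\inp{x}^{-1})$, a lower bound of the form \eqref{eq:Cyy}:
\begin{equation*}
  \i[H_a,\Psi]\ge f_1(H_a)\parbb{Q^*Q-T^*T}f_1(H_a),
\end{equation*}
with $Q=Q_{a,\kappa,f}=r^{-1/2}\chi_+(\abs{x})\chi_-(|x^a|/2\kappa|x|)f(H^a)$ (the positive part coming from the non-negative Mourre commutator for $H^a$, with $r^{-1}$ weight) and with an error term $T$ that is $\vO(\inp{x}^{-1/2-\mu})f(H^a)$ plus terms supported off the cone, hence controllable.

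Next I would verify the hypotheses of Lemma \ref{lemma:comSTA}: that $\Psi$ is bounded on $\vH$ and on $\vB$ (this follows from the Helffer--Sj\"ostrand calculus \eqref{82a0}, since $r^{1/2}Br^{1/2}$ sandwiched between energy cutoffs is $\vO(\inp{x}^0)$ on $\vB$ by the same argument as in \cite[Section 6]{Sk3}), and that the error operator $T$ satisfies the needed resolvent bound $\sup_{\Im z\ne0}\norm{|Tf_1(H_a)|R_a(z)}_{\vL(\vB,\vH)}<\infty$. The latter is where the previously established "$Q$-bounds" enter: the terms in $T$ of order $-1/2-\mu$ are handled by \eqref{eq:2boundLapRes} (with $s=1/2+\mu\in(1/2,1)$), the microlocal error terms by \eqref{eq:2boundobtain338}, and the Hessian/vector-field remainders by \eqref{eq:2boundobtain33}, all applied with $H$ replaced by $H_a$. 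Together with $\sup_{\Im z\ne0}\norm{R_a(z)f_2(H_a)}_{\vL(\vB,\vB^*)}<\infty$ from \eqref{eq:BB^*a}, the conclusion \eqref{eq:2boundobtain} of Lemma \ref{lemma:comSTA} gives exactly \eqref{eq:Qanontresh}, and \eqref{eq:bB2} follows by taking $z\to\lambda+\i0$ using \eqref{eq:LAPbnda}.

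The main obstacle is the choice of $\kappa$ and $U$: one must ensure that the geometric error $\vO^a_\kappa(\inp{x}^{-1})$ appearing in \eqref{eq:1712022apll22} — which on the cone $\{|x^a|<2\kappa|x|\}$ has size $\vO(\kappa\inp{x}^{-1})$ — is strictly dominated by the positive Mourre gap $c\,d^a$ produced by \eqref{eq:mourreFull} for $H^a$ on the energy window $U$ around $E$. This forces $\kappa$ to be proportional to $d^a$, i.e. $\kappa=c_1 d^a$ with $c_1$ small but \emph{independent of $E$}; the $E$-uniformity of $c_1$ requires that the implicit constant in the geometric estimate \eqref{eq:3.2}-type bound and in the Helffer--Sj\"ostrand remainder not deteriorate as $E$ varies, which is why the bound is stated locally uniformly and $U$ is allowed to shrink with $E$. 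Verifying this balance carefully — in particular that the off-cone contributions in $T$ genuinely carry a spare factor of $\chi_+(|x^a|/\kappa|x|)$ or a spare power of $\inp{x}^{-\mu}$ so that they fit the already-available $Q$-bounds rather than requiring new ones — is the technical heart of the argument.
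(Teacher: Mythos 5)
Your high-level plan (invoke Lemma \ref{lemma:comSTA} for $H_a$, draw positivity from a Mourre estimate for $H^a$, and balance $\kappa$ against $d^a$) is the right one, but the propagation observable you propose cannot do the job, for two independent reasons. First, it is not even admissible in Lemma \ref{lemma:comSTA}: the lemma requires $\Psi$ to be \emph{bounded} on $\vH$ and $\vB$, whereas $r^{1/2}\tilde Br^{1/2}$ sandwiched between energy cut-offs is $\vO(\inp{x}^{1})$, not $\vO(\inp{x}^{0})$; the $f(H^a)f_1(H_a)$-localization tames the $p$-dependence but not the linear growth in $r$, so your assertion that $\Psi$ is bounded is simply false. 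Second, the positivity you invoke does not feed into your $\Psi$. If $\tilde B$ is the radial generator in the $x_a$-variables, then $[H^a,\tilde B]=0$, and the Mourre estimate for $H^a$ contributes nothing to $[H_a,\Psi]$; the positive part then comes entirely from $[p_a^2,\tilde B]$, with size roughly $(\lambda-E)/r$, which is unrelated to $d^a/r$ and does not produce the $\kappa=c_1d^a$ scaling of the statement. If instead $\tilde B$ is $B$ cut off to the cone $\{|x^a|<2\kappa|x|\}$, the internal component $\hat x^a\cdot p^a$ of $B$ carries a factor $|x^a|/|x|<2\kappa$ there, so the Mourre commutator for $H^a$ is again only felt at order $\kappa$ rather than at order $1$, and one cannot bootstrap a gap of size $d^a$ out of it. Finally, the claim that the geometric error on the cone is of size $\vO(\kappa\inp{x}^{-1})$ is unsupported: the $\vO^a_\kappa(\inp{x}^{-1})$ notation from \eqref{eq:1712022kap} encodes a support condition, not a small operator-norm.

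The paper's proof uses a different observable and a genuine two-step structure that you cannot collapse. The conjugate operator is the \emph{internal} one, $A^a=\sqrt{r^a}B^a\sqrt{r^a}$ with $B^a=2\Re(p^a\cdot\nabla^ar^a)$, rescaled by the global radius and wrapped in a \emph{bounded nonlinear} function: $\Psi=f_1(H_a)f(H^a)\zeta_\epsilon\!\parb{r^{-1/2}A^ar^{-1/2}}f(H^a)f_1(H_a)$, with $\zeta_\epsilon$ bounded, $\zeta'_\epsilon=1$ near $0$ and $\sqrt{\zeta'_\epsilon}\in C^\infty_\c((-2\epsilon,2\epsilon))$. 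The nonlinearity makes $\Psi$ bounded; since $A^a$ commutes exactly with $p_a^2$, the commutator $\i[H_a,\Psi]$ picks up $r^{-1/2}\i[H^a,A^a]r^{-1/2}\ge (d^a/2)r^{-1}$ from the internal Mourre estimate, while the remaining $[H_a,r^{-1/2}]$-type terms are $\vO(\epsilon/r)$ precisely because $|r^{-1/2}A^ar^{-1/2}|\le2\epsilon$ on $\supp\zeta'_\epsilon$. Lemma \ref{lemma:comSTA} then yields the bound for $Q=r^{-1/2}T_\epsilon f(H^a)$ with $T_\epsilon=\sqrt{\zeta'_\epsilon}(r^{-1/2}A^ar^{-1/2})$ and $\epsilon\sim d^a$. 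The cone cut-off $\chi_-(|x^a|/2\kappa|x|)$ does \emph{not} appear in the observable at all; it enters in a second, purely geometric step (mimicking Lemma \ref{lem:compBogB}) showing $\chi_+(|x|)\chi_-(|x^a|/2\kappa|x|)(I-T_\epsilon)f_2(H_a)=\vO(\inp{x}^{-1/2})$ for $\kappa=c\epsilon$, using that $r^a/r$ is small in the cone so that $r^{-1/2}A^ar^{-1/2}$ is automatically small there. Inserting $I=T_\epsilon+(I-T_\epsilon)$ then combines the two bounds. This separation — commutator positivity controlling a $T_\epsilon$-region, geometry placing the cone inside it — is the missing idea in your proposal.
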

\begin{proof} Clearly \eqref{eq:bB2} is a consequence of
  \eqref{eq:Qanontresh} (possibly with $C_2\geq C_1$), cf. Remark \refeq{remark:RQ} \ref{item:1Q}. Fix  any support function $f_1$ as in \eqref{eq:Qanontresh}  and pick  any support functions  $f_3\succ f_2\succ f_1$.    
	
	{\bf I} (a positive commutator  estimate).
  By the Mourre estimate for the operator
  $\sqrt{r^a} B^a\sqrt{r^a}$  (the
one referred to in \eqref{eq:Mourre}--\eqref{eq:mourreFull} now with $r$ and  $H$    replaced by $r^a$ 
 and  $H^a$, respectively)
\begin{equation*}
  1_{U}(H^a)\i\Big [H^a, A^a\Big ]1_{U}(H^a)\geq \tfrac { d^a}2 1_{U}(H^a);\quad A^a= \sqrt{r^a} B^a\sqrt{r^a},\,B^a=2\Re \parb{p^a \cdot \nabla^a r^a}. 
\end{equation*} This is for a small open  neighbourhood $U$ of $E$. We recall from \eqref{eq:Mourre} that 
\begin{equation*}
  B=\i\big [H_a,r\big ]=2\Re \parb{p \cdot \nabla r}.
\end{equation*} The operators $B^a$ and $B$ are  bounded relatively to
$H_a$, as to be used explicitly below. Fix any real $f\in C^\infty_\c(U)$.  Now we introduce for any small $\epsilon>0$ the  'propagation observable'
\begin{equation*}
  \Psi=f_1(H_a)f(H^a)\zeta_\epsilon \parb{r^{-1/2} A^ar^{-1/2}}f(H^a)f_1(H_a),
\end{equation*} where  $\zeta_\epsilon \in
C^\infty(\R)$ is real and increasing   with  $\zeta'_\epsilon=1$  on
$(-\epsilon,\epsilon)$ and $\sqrt{\zeta'_\epsilon}
\in C_\c^\infty((-2\epsilon,2\epsilon))$.
Since the argument is small on the support of the derivative $\zeta'_\epsilon $
(more precisely bounded by $2\epsilon$), 
we obtain the lower bound
\begin{align*}\i[H_a,\Psi]\geq &\tfrac { d^a-C\epsilon} {2r}
  f_1(H_a)f(H^a)\zeta'_\epsilon \parb{r^{-1/2}A^ar^{-1/2}}f(H^a)f_1(H_a)\\&+f_1(H_a)\vO(r^{-1-2\mu})f_1(H_a).
\end {align*} In fact the bound holds
with $C=4\norm{Bf_2(H_a)}$. The error $\vO(r^{-1-2\mu})$ arises from
commutation when  using   the representation formula \eqref{82a0}. 

We denote  $T_\epsilon=\sqrt{\zeta'_\epsilon} \parb{r^{-1/2}A^ar^{-1/2}}$
and  conclude by Lemma \ref{lemma:comSTA} that for  some positive constant $C_1'$ 
\begin{equation}
  \label{eq:bB3}
   \sup _{\Im z\neq 0}\norm{r^{-1/2}T_\epsilon f(H^a)f_1(H_a)R_a(z)}_{\vL(\vB,\vH)} \leq C_1';\quad \epsilon= d^a/2(C+1).
\end{equation} (Note that indeed  $ d^a>C\epsilon$, as needed.)

{\bf II} (an estimate for the classically forbidden region).
 We  recall that the function $r^a$ has a scaling parameter, possibly large, to assure the Mourre estimate used in Step I. This scaling might  be depending on $E$, which could cause a problem recalling that we need to produce a constant $c_1>0$ being independent of $E$. This problem is easily cured by using \eqref{eq:compa00} as follows. First we note that on the support of
 $\chi_+(\abs{x})\chi_-(|x^{a}|/2\kappa|x|)$
 \begin{equation*}
 	r^{-1/2}\sqrt{r^a} \leq 2\sqrt\kappa +C'r^{-1/2},\quad C'=C'(E).
 \end{equation*}  Secondly, another elementary direct computation shows that $\norm{B^af_3(H_a)}\le\breve C$ for a positive  constant $\breve C$ being independent of the considered $E\notin \vT_\p(H^a)$. 

Thanks to the above remarks we can now mimic the proof of Lemma \ref{lem:compBogB} and produce a constant $c>0$ being independent of $E$
 such that  
 \begin{equation}
 	\label{eq:smallnonthres}
 	\chi_+(\abs{x})\chi_-(|x^{a}|/2\kappa|x|)(I-T_\epsilon)f_2(H_a)=\vO(\inp{x}^{-1/2});\quad \kappa={c}\epsilon.
 \end{equation}  Recall for comparison that $\epsilon$ has  an explicit  $E$-dependence, implying the same for $\kappa$.
Clearly \eqref{eq:smallnonthres} yields that for some positive constant $C_1''$
\begin{equation}\label{2ndBound}
	\sup _{\Im z\neq 0}\norm{r^{-1/2}\chi_+(\abs{x})\chi_-(|x^{a}|/2\kappa|x|)(I-T_\epsilon)f(H^a)f_1(H_a)R_a(z)}_{\vL(\vB,\vH)} \leq C_1''. 
\end{equation} 

{\bf III} (combining bounds).
Finally we insert  $I=T_\epsilon+(I-T_\epsilon)$ to the left of the factor
$f(H^a)$ in $Q_{a,\kappa,f}$  and  conclude  from (\ref{eq:bB3})--(\ref{2ndBound}) that
for $\kappa={c}\epsilon$   the  desired estimate     \eqref{eq:Qanontresh} holds  with $c_1= c/2(C+1)$ and $ C_1= C_1'+ C_1''$.
\end{proof}
\begin{corollary}\label{corollary:non-thresh-analysB}
  Let $ a\in \vA\setminus\{a_{\min},a_{\max}\}$ and   $\lambda\notin \vT_\p(H)$   be
  given. Then there exists 
  $c_2>0$ such that for any   compact  subset $K$  of $\R\setminus
  \vT_\p(H^a)$  and with $\kappa= c_2d^a_K  $,
where
\begin{equation*}
  d^a_K=d(K,H^a)=\inf_{E\in K}d(E,H^a)=\inf_{E\in K}4\dist\parb{E ,\set{\mu\in\vT (H^a)\mid \mu<E}},
\end{equation*} it follows:
\begin{subequations}
\begin{align}\label{eq:bB2B}
  &\text{For all real }f\in C^\infty_\c(K)\mand\psi \in
   \vB, \text{ and with }C_2>0
   \text{ being  independent of } \psi,\nonumber\\
&\quad\quad \norm{Q R_a(\lambda+\i
  0)\psi}_{\vH}\leq C_2\norm{\psi}_{ \vB};\\&\quad\quad\quad\quad\quad Q=Q_{a,\kappa,f}=r^{-1/2}\chi_+(\abs{x})\chi_-(|x^{a}|/2\kappa|x|)f(H^a).\nonumber
\end{align}
Under the conditions of \eqref{eq:bB2B} and with $f_1$ being  any narrowly supported
standard support function with 
$f_1=1$ in a
small neighbourhood of $\lambda$, 
\begin{equation}\label{eq:QanontreshB}
  \sup _{\Im z\neq 0}\norm{Q_{a,\kappa,f}{f_1}
    (H_a)R_a(z)}_{\vL(\vB,\vH)}\leq C_1<
 \infty.
\end{equation}  
\end{subequations}
\end{corollary}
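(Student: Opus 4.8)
The plan is to deduce Corollary \ref{corollary:non-thresh-analysB} from Lemma \ref{lemma:non-thresh-analys} by a routine compactness argument; the only point that needs care is that the cone-width parameter $\kappa$ produced by the lemma is tied to the base point $E$, so one has to pass from the local values $\kappa_E=c_1 d(E,H^a)$ to the single global value $\kappa=c_2 d^a_K$.

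First I would invoke Lemma \ref{lemma:non-thresh-analys} (with the given $a$ and $\lambda$), obtaining the constant $c_1>0$, which is independent of the base point. For every $E\in K$ the lemma provides an open neighbourhood $U_E\ni E$ such that, with $\kappa_E:=c_1 d(E,H^a)$, the bound \eqref{eq:Qanontresh} holds for every real $f\in C_\c^\infty(U_E)$ and the fixed narrowly supported standard support function $f_1$ equal to $1$ near $\lambda$. Since $K$ is compact one extracts a finite subcover $K\subseteq\bigcup_{i=1}^N U_{E_i}$ and fixes a subordinate smooth partition of unity $g_1,\dots,g_N\in C_\c^\infty(\R)$, $g_i\ge 0$, $\supp g_i\subseteq U_{E_i}$, $\sum_i g_i=1$ on a neighbourhood of $K$ (both the subcover and the partition depending only on $K$). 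Put $c_2:=c_1/2$ and $\kappa:=c_2 d^a_K$ (one may replace $d^a_K$ by $\min(d^a_K,1)$ in the degenerate case that $d^a_K$ is not finite). Then for each $i$ one has $\kappa/\kappa_{E_i}=d^a_K/2d(E_i,H^a)\le \tfrac12$, and since $\chi_-$ vanishes on $[5/3,\infty)$ and equals $1$ on $(-\infty,4/3]$, wherever $\chi_-(|x^a|/2\kappa|x|)\ne0$ one has $|x^a|/2\kappa_{E_i}|x|=(\kappa/\kappa_{E_i})(|x^a|/2\kappa|x|)<\tfrac12\cdot\tfrac{5}{3}<\tfrac{4}{3}$, whence, as an identity of multiplication operators,
\begin{equation*}
	\chi_+(\abs{x})\chi_-(|x^a|/2\kappa|x|)=\chi_+(\abs{x})\chi_-(|x^a|/2\kappa|x|)\,\chi_-(|x^a|/2\kappa_{E_i}|x|).
\end{equation*}

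Now let $f\in C_\c^\infty(K)$ be real and write $f=\sum_i fg_i$ with $fg_i\in C_\c^\infty(U_{E_i})$ real, so that $f(H^a)=\sum_i(fg_i)(H^a)$. Since $r^{-1/2}$, $\chi_+(\abs{x})$ and both cone cut-offs are functions of $x$ and hence commute, the displayed identity gives for each $i$
\begin{equation*}
	r^{-1/2}\chi_+(\abs{x})\chi_-(|x^a|/2\kappa|x|)(fg_i)(H^a)=\chi_-(|x^a|/2\kappa|x|)\,Q_{a,\kappa_{E_i},fg_i},\qquad \norm{\chi_-(|x^a|/2\kappa|x|)}\le1 ,
\end{equation*}
and summing over $i$ yields $Q_{a,\kappa,f}=\sum_i\chi_-(|x^a|/2\kappa|x|)\,Q_{a,\kappa_{E_i},fg_i}$. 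Hence for all $z\notin\R$
\begin{equation*}
	\norm{Q_{a,\kappa,f}\,f_1(H_a)R_a(z)}_{\vL(\vB,\vH)}\le\sum_{i=1}^N\norm{Q_{a,\kappa_{E_i},fg_i}\,f_1(H_a)R_a(z)}_{\vL(\vB,\vH)},
\end{equation*}
and each term on the right is bounded, uniformly in $z$, by the constant furnished by \eqref{eq:Qanontresh} of Lemma \ref{lemma:non-thresh-analys} at $E_i$ applied with $fg_i$. Taking $C_2$ to be the (finite, $\psi$-independent) sum of these $N$ constants proves \eqref{eq:QanontreshB} with $\kappa=c_2 d^a_K$; the bound \eqref{eq:bB2B} then follows from \eqref{eq:QanontreshB} exactly as in the lemma, by passing to the weak limit $z=\lambda+\i\epsilon\to\lambda+\i0$ and invoking \eqref{eq:LAPbnda}, cf. Remark \ref{remark:RQ}\,\ref{item:1Q}.

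The argument is essentially mechanical, and there is no real obstacle: the $\kappa$-comparison is trivial precisely because all relevant cut-offs are $x$-multiplications and the cone for the smaller $\kappa$ lies inside the region where the $\kappa_{E_i}$-cut-off equals $1$. The only thing to keep track of is that the local constants from Lemma \ref{lemma:non-thresh-analys} may depend on $fg_i$ (through sup-norms and derivatives) and hence $C_2$ on $f$ and $K$; this is compatible with the statement, which only asserts $\psi$-independence. If uniformity of $C_2$ over a bounded family of $f$'s were wanted, one would have to follow the $f$-dependence through Lemma \ref{lemma:comSTA} in the proof of Lemma \ref{lemma:non-thresh-analys}, but this refinement is not needed for the application in Section \ref{sec:Proof of stationary completeness}.
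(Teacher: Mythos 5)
Your argument is correct and follows the paper's own route: a compactness-and-partition-of-unity reduction to Lemma \ref{lemma:non-thresh-analys} with $c_2=c_1/2$. The one step the paper leaves tacit and you rightly make explicit is the absorption identity $\chi_+(\abs{x})\chi_-(|x^a|/2\kappa|x|)=\chi_+(\abs{x})\chi_-(|x^a|/2\kappa|x|)\chi_-(|x^a|/2\kappa_{E_i}|x|)$ forced by $\kappa\le\kappa_{E_i}/2$, which is exactly what lets the single global $\kappa$ inherit the local bounds.
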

\begin{proof} This is immediate from Lemma
  \ref{lemma:non-thresh-analys} (with $c_2=c_1/2$,  for example) and a compactness and partition of
  unity argument, writing $f=\Sigma_1^J f_j$ for  suitable  functions
  $f_j$ being narrowly
  supported near some $E_j\in \supp f$, $j=1,\dots,J$, respectively.
  \end{proof}
\begin{corollary}\label{corollary:non-thresh-analysBb}
  Let $ a\in \vA\setminus\{a_{\min},a_{\max}\}$,  $\lambda\notin
  \vT_\p(H)$, $\epsilon>0$    and $\psi \in L^2_1$ be
  given. Then there exists $\kappa>0$ such that 
  \begin{equation}\label{eq:bB2Bb} \norm{ \chi_+(\abs{x})\chi_-(|x^{a}|/2\kappa|x|)
1_{\R\setminus
  \vT_\p(H^a)}(H^a)R_a(\lambda+\i 0)\psi }_{\vB^*/\vB_0^*}\leq \epsilon.
\end{equation}
\end{corollary}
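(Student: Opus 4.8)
The plan is to reduce the $\vB^*/\vB_0^*$-estimate \eqref{eq:bB2Bb} to the uniform-in-$z$ bound \eqref{eq:QanontreshB} of Corollary \ref{corollary:non-thresh-analysB}, using a Borel-calculus splitting of the spectral projection $1_{\R\setminus\vT_\p(H^a)}(H^a)$ into a compactly supported smooth part plus a remainder. The point is that $\vT_\p(H^a)$ is closed and the non-threshold eigenvalues can only accumulate at thresholds from below, so $\R\setminus\vT_\p(H^a)$ is a countable union of open intervals. First I would fix a support function $f_1$ as in \eqref{eq:QanontreshB} with $f_1=1$ near $\lambda$, and recall that $\psi\in L^2_1\subseteq\vB$ and $R_a(\lambda+\i0)\psi$ makes sense in $\vB^*$ by \eqref{eq:BB^*a} applied to $H_a$.

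The main step is the spectral decomposition. Given $\epsilon>0$, since $\psi\in\vB$ one has $\norm{R_a(\lambda\pm\i 0)\psi}_{\vB^*}<\infty$, and by a standard density/compactness argument applied to the $H^a$-spectral measure of the relevant component, there is a compact $K\subseteq\R\setminus\vT_\p(H^a)$ and a real $f\in C^\infty_\c(\R\setminus\vT_\p(H^a))$ with $0\le f\le 1$, $f=1$ on a neighbourhood of $K$, such that the complementary piece $g:=1_{\R\setminus\vT_\p(H^a)}-f$ acting on the energy-localized state is small; concretely $\norm{g(H^a)f_1(H_a)R_a(\lambda+\i0)\psi}_{\vB^*}\le\epsilon/2$, using that on the support of $f_1$ the operator $H_a=H^a\otimes I+I\otimes p_a^2$ forces $H^a$ to lie in a fixed bounded set, so that $g(H^a)f_1(H_a)$ is the spectral projection of $H^a$ onto a set that can be taken with arbitrarily small measure (relative to the projected resolvent vector). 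For the remaining term $f(H^a)f_1(H_a)R_a(\lambda+\i0)\psi$ I would write $\chi_+(\abs{x})\chi_-(|x^a|/2\kappa|x|)$ as $r^{1/2}\cdot r^{-1/2}\chi_+(\abs{x})\chi_-(|x^a|/2\kappa|x|)$, commute $f(H^a)$ so that the operator $Q_{a,\kappa,f}=r^{-1/2}\chi_+(\abs{x})\chi_-(|x^a|/2\kappa|x|)f(H^a)$ of Corollary \ref{corollary:non-thresh-analysB} appears, choose $\kappa=c_2 d^a_K$ with that corollary's constant, and invoke \eqref{eq:QanontreshB} to get $\norm{Q_{a,\kappa,f}f_1(H_a)R_a(\lambda+\i0)\psi}_\vH\le C_2\norm{\psi}_\vB<\infty$; the leftover factor $r^{1/2}$ is absorbed into the quotient norm since $r^{1/2}$ maps $\vH$ continuously into $\vB^*$ and in fact (being $\vO(\inp{x}^{1/2})$ applied to an $\vH$-vector) lands in $L^2_{-1/2}\subseteq\vB^*_0$, so this entire term is zero in $\vB^*/\vB_0^*$. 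Hence only the small remainder survives, giving \eqref{eq:bB2Bb}.

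The main obstacle I anticipate is the bookkeeping in the splitting step: one must justify, uniformly enough, that $g(H^a)f_1(H_a)R_a(\lambda+\i0)\psi$ can be made small in $\vB^*$ (or at least in $\vB^*/\vB^*_0$) purely by shrinking the spectral set of $H^a$ that $g$ supports on. This requires combining the $H_a$-energy localization $f_1$ (which confines $H^a$ to a compact set $\Lambda$), the boundedness of $R_a(\lambda+\i0)$ from $\vB$ to $\vB^*$, and the fact that on $\Lambda\setminus\vT_\p(H^a)$ the spectral measure $1_{\Lambda\cap\{f<1\}}(H^a)$ of the fixed vector $f_1(H_a)R_a(\lambda+\i0)\psi\in\vB^*$ — more precisely of a dense approximant in $\vH$ — can be driven to zero as the excluded set shrinks to $\vT_\p(H^a)$. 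A clean way around the $\vB^*$-continuity subtlety is to first approximate $\psi$ in $L^2_1$-norm by $L^2_\infty$ vectors (using $\norm{R_a(\lambda\pm\i0)}_{\vL(\vB,\vB^*)}<\infty$), reducing to $\psi\in L^2_\infty$ where the resolvent vector lies in $L^2_{-s}$ for every $s<-1/2$ and the spectral-measure argument for $H^a$ is routine; the error from this approximation is controlled by \eqref{eq:BB^*a}. Everything else is a direct citation of \eqref{eq:QanontreshB} together with the elementary embeddings $L^2_{-1/2}\subseteq\vB^*_0$ and $r^{\pm1/2}=\vO(\inp{x}^{\pm1/2})$.
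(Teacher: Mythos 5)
Your high-level strategy coincides with the paper's: split $1_{\R\setminus\vT_\p(H^a)}(H^a)$ into a smooth compactly supported piece $f(H^a)$, handled by Corollary~\ref{corollary:non-thresh-analysB} (which produces an $L^2_{-1/2}\subset\vB_0^*$ vector, hence zero in the quotient), plus a remainder to be made small in $\vB^*$; and your observation that on $\supp f_1$ the operator $H^a$ is effectively confined to a bounded interval matches the paper's reduction to the bounded open set $U$. The part of your argument invoking Corollary~\ref{corollary:non-thresh-analysB} is fine.

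The gap is in the claim that the remainder can be made small. You assert this by a ``standard density/compactness argument'' and later, after reducing to $\psi\in L^2_\infty$, by a ``routine'' spectral-measure argument for $H^a$; neither is actually routine here. The vector $\phi=R_a(\lambda+\i 0)\psi$ lies only in $\vB^*$ (or in $L^2_{-s}$ with $s>1/2$ when $\psi\in L^2_\infty$), never in $\vH$. The projections $(1_U-f_m)(H^a)$ converge strongly to zero on $\vH$, and after commuting with $\inp{x_a}$-weights (which is legitimate, since functions of $H^a$ commute with multiplication by functions of $x_a$) also on $\inp{x_a}^{-s}\vH$; but $\phi$ belongs to neither of these spaces, and since functions of $H^a$ do \emph{not} commute with $\inp{x}$-weights, these projections are not even obviously a uniformly bounded family on $L^2_{-s}$ or $\vB^*$, let alone strongly convergent to zero there. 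Passing to $\psi\in L^2_\infty$ therefore does not dissolve the difficulty. The paper closes precisely this gap with the quantitative estimate \eqref{eq:2pp7}: by partially diagonalizing $H^a$ and applying the one-body limiting absorption principle fiberwise to $(p_a^2+\eta-\lambda-\i 0)^{-1}\inp{x_a}^{-1}$, using the inclusion $\set{\abs{x}\le\rho}\subset\set{\abs{x_a}\le\rho}$ to compare $\vB^*(\bX)$-norms with fiberwise Besov norms, one bounds $\norm{(1_U-f_m)(H^a)R_a(\lambda+\i 0)\psi}_{\vB^*}$ by $C_m\norm{(1_U-f_m)(H^a)\inp{x_a}\psi}_\vH$ with $(C_m)$ uniformly bounded; since $\inp{x_a}\psi\in\vH$ (this is exactly where the hypothesis $\psi\in L^2_1$ enters), dominated convergence and the Borel calculus for $H^a$ finish the job. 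Your proposal contains no such fiber estimate nor any substitute for it, so the smallness of the remainder is not established.
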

\begin{proof} We can assume that $\lambda\ge \inf \sigma(H^a)$, and we can replace $\R\setminus\vT_\p(H^a)$  by a bounded open subset $U\subseteq
  \R\setminus\vT_\p(H^a)$, say explicitly   by  $U=
  \parb{\R\setminus\vT_\p(H^a)}\cap (\inf \sigma(H^a)-1,\lambda +1)$.

 Let $K_m\subseteq U$, $m\in \N$,  be the set  of  $E\in  U$ for which 
$\dist\parb{E, U^\c}\geq 1/m$. Then the sets $K_m$ are compact and constitute an 
   increasing  sequence. Take any   sequence of standard support functions $f_m\in C^\infty_\c(U)$  obeying 
$f_m=1$ in $K_m$.
 Then 
  \begin{align}\label{eq:2pp7} 
    \begin{split}
   \norm {(1_{U}-f_m)&(H^a)R_a(\lambda+\i 0)\psi }_{\vB^*}\leq C_m\norm
   {(1_{U}-f_m)(H^a)\inp{x_a}\psi}_{\vH};\\
& C_m=\sup_{\eta\in U\setminus K_m}\, \norm{(p^2_a
  +\eta-\lambda -\i
 0)^{-1} \inp{x_a}^{-1}}_{\vL(L^2(\bX_a),\,\vB(\bX_a)^*)}.   
    \end{split}
  \end{align} Note that  (\ref{eq:2pp7}) follows  by  using  the
  trivial inclusion 
  $\set{\abs{x}\leq \rho}\subset \set{\abs{x_a}\leq \rho}$ and the
  spectral theorem on multiplication operator form for $H^a$
  \cite[Theorem VIII.4]{RS} (amounting to a partial
  diagonalization).  The
  sequence $(C_m)_1^\infty$ is  bounded (since it is decreasing). Rewriting $\vH=L^2\parb{\bX_a,L^2(\bX^a)}$
  and invoking the Lebesgue dominated convergence theorem and the Borel
  calculus for $H^a$, it follows  that
\begin{equation*}
   \norm{(1_{U}-f_m)(H^a)R_a(\lambda+\i 0)\psi }_{\vB^*}\to 0\text{ for }m\to \infty.
\end{equation*} 

 In particular we can fix $m$ such that 
\begin{equation*}
   \norm{(1_{U}-f_m)(H^a)R_a(\lambda+\i 0)\psi }_{\vB^*}\leq \epsilon.
 \end{equation*} 
Applying  now  Corollary \ref{corollary:non-thresh-analysB}  with $\kappa=c_2d^a_{\supp f_m} $ and  $f=f_m$ we conclude that 
 \begin{align*}
   \chi_+(\abs{x})\chi_-(|x^{a}|/2\kappa|x|)f_m(H^a)R_a(\lambda+\i
   0)\psi \in\vB_0^*
 \end{align*}  as well as
\begin{align*}
	\norm{\chi_+(\abs{x})\chi_-(|x^{a}|/2\kappa|x|)(1_{U}-f_m)(H^a)R_a(\lambda+\i 0)\psi }_{\vB^*}\leq \epsilon,
\end{align*}
yielding in combination \eqref{eq:bB2Bb} for this $\kappa$.
\end{proof}

\section{A squeezing argument}\label{sec:Squeezing argument}
In this section our outset is   Lemma \ref{lemma:reduc}. Hence we fix
$a\in \vA'$, $\lambda\notin \vT_\p(H)$  and $\psi\in
L^2_\infty$. We will initiate a reduction procedure for deriving  the asymptotics of the constructed  vectors
$R_a(\lambda+\i0)\psi_a \in\vB^*$ from \eqref{eq:quotientFor}. Along
with a modified  version of the  reduction procedure
of Section \ref{sec::A partition of unity} involving  Corollary \ref{corollary:non-thresh-analysB}, this  procedure  can 
be iterated as to be elaborated  on  in Section \ref{sec:Proof of stationary completeness}. 
\begin{lemma}\label{lemma:squeezingLemmaS} Under the above conditions
  $a\in \vA'$ and  $\lambda\notin \vT_\p(H)$:
  \begin{enumerate} [1)]
  \item \label{item:sqe1}   For any $\breve \psi_a \in L^2_1$ there
    exists  a sequence of  standard support functions 
$f^a_m$ all being supported in $\R\setminus \vT_\p(H^a)$, and  there
    exists  $\breve\phi_a\in\vB^*$ whose asymptotics as an element of
    $\vB^*/\vB_0^*$ agrees with the right-hand side of
    \eqref{eq:asres29},   such that 
  \begin{equation}\label{eq:squeFun} 
     \parb{I-f^a_m(H^a)} R_a(\lambda+\i0)\breve \psi_a = \breve\phi_a +
     o(m^0)\text{ in }\vB^*/\vB_0^*.
    \end{equation}

\item \label{item:sqe2} For the  vector
  $R_a(\lambda+\i0)\psi_a \in\vB^*$ from \eqref{eq:quotientFor} (given
  in terms of  the   fixed $\psi\in
L^2_\infty$) there
    exist   sequences $(f^a_j)_j$ of  standard support functions 
being supported in $\R\setminus \vT_\p(H^a)$ and 
$(\breve \psi_{a,j})_j\subseteq  L^2_\infty$, and  there
    exists  $\phi_a\in\vB^*$ whose asymptotics as an element of
    $\vB^*/\vB_0^*$ agrees with the right-hand side of
    \eqref{eq:asres29},  such that 
\begin{equation}\label{eq:squeFun9} 
  R_a(\lambda+\i0)\psi_a = f^a_j(H^a) R_a(\lambda+\i0)\breve\psi_{a,j}  +\phi_{a}+ o(j^0)\text{ in }\vB^*/\vB_0^*.
    \end{equation}  
  \end{enumerate}
\end{lemma}
\begin{remark*} For our applications it is not important that the
  vector $\phi_{a}$ in \ref{item:sqe2} can be chosen independent of
  $j$. However for the sake of completeness of presentation we find it
  better to prove the sharper version, which amounts to taking 
\begin{equation}\label{eq:exactF}
   \phi_{a}=\sum_{\alpha=(a,\lambda^\alpha,
     u^\alpha),\,\lambda^\alpha<\lambda}
   2\pi\i  \,J_\alpha
     v^{+}_{\alpha,\lambda} [g_\alpha],\text{ where }
    g_\alpha=\Gamma_\alpha^{a+}(\lambda)\psi_a, \, (g_\alpha)_\alpha\in \vG.
 \end{equation}  We consider henceforth only channels $\alpha$ specified
 as in this  summation (in particular with the
 first 
 component fixed as $a$). The operator 
 $\Gamma_\alpha^{a+}(\lambda)$ is (here and below) the
 outgoing 
 restricted channel wave operator for $H_a$ at the energy $\lambda$, possibly
 defined as in Proposition \ref{prop:radi-limits-chann22} (with $H$
replaced by $H_a$). 
  \end{remark*}

\begin{proof}[Proof  of Lemma \ref{lemma:squeezingLemmaS}]  For
  \ref{item:sqe1} we will use the proof of  Corollary
  \ref{corollary:non-thresh-analysBb} and for \ref{item:sqe2} the approximations 
$R_a(\lambda+\i0)\psi_{a,n}$ from  \eqref{eq:deltaerror0}  ($\psi_{a,n} \in
L^2_\infty$ is constructed explicitly in \eqref{eq:n2}). We can assume that
  $\lambda\ge \inf \sigma(H^a)$.  Let $f_1$ be a narrowly supported standard support function with 
$f_1=1$ in a
small neighbourhood of $\lambda$
 and let $f_2\succ f_1$, also given narrowly
supported at $\lambda$ (these functions appear explicitly  in Section
\ref{sec::A partition of unity}, Appendix \ref{sec:Strong bounds} and
in \eqref{eq:QanontreshB}).

{\bf \ref{item:sqe1}} We introduce  the quantities
\begin{equation*}
  \breve\phi_{a}=\sum_{\alpha=(a,\lambda^\alpha,
     u^\alpha),\,\lambda^\alpha<\lambda}
   2\pi\i  \,J_\alpha
     v^{+}_{\alpha,\lambda} [\breve g_\alpha],\quad
    \breve g_\alpha=\Gamma_\alpha^{a+}(\lambda)\breve \psi_a\in \vG_a.
\end{equation*} Recalling   the Bessel inequality
 \eqref{eq:Besn}  (with $H$
replaced by $H_a$)  we can record  that
 \begin{subequations}
\begin{equation}\label{eq:ScatEnergy22330}
  \sum_{\lambda^\alpha<
  \lambda}\,\norm{\breve g_\alpha}^2 \leq 
  \inp{\delta(H_a-\lambda)}_{\breve \psi_a}<\infty,
\end{equation} and thanks to \eqref{eq:pars} therefore in turn,
\begin{equation}\label{eq:eqBnd0}
  C_1\sum_{\lambda^\alpha<
    \lambda}\,\norm{\breve g_\alpha}^2 \leq\norm[\big] {\breve\phi_{a}}^2_{\vB^*/\vB_0^*}\leq C_2\sum_{\lambda^\alpha<
    \lambda}\,\norm{\breve g_\alpha}^2<\infty.
\end{equation} In particular   $\breve\phi_{a}$ is a   well-defined 
element of $\vB^*/\vB_0^*$.
 \end{subequations}

Next we consider the increasing sequence of compact sets $(K_m)_m$ 
\begin{equation*}
  K_m\subseteq U:=\parb{\R\setminus\vT_\p(H^a)}\cap (\inf \sigma(H^a)-1,\lambda +1)
\end{equation*}
 from the
proof of Corollary \ref{corollary:non-thresh-analysBb}. Let 
$(f^a_m)_m$  be any   sequence of standard support functions $f^a_m\in C^\infty_\c(U)$  obeying 
$f^a_m=1$ in $K_m$.

  To show   \ref{item:sqe1} it suffices  to prove  that  the element of $\vB^*/\vB_0^*$
 \begin{equation*}
   \breve\phi_{a,m}:=\breve\phi_{a} -\parb{1_{\vT_\p(H^a)}+1_U-f^a_m}(H^a)R_a(\lambda+\i
  0){f_2} (H_a)\breve\psi_a =o(m^0)\text{ in }\vB^*/\vB_0^*.
 \end{equation*} 
 By  the proof of  Corollary
  \ref{corollary:non-thresh-analysBb}
  \begin{subequations}
  \begin{align}\label{eq:2pp} 
    \begin{split}
   \norm {(1_U-f^a_m)&(H^a)R_a(\lambda+\i
     0) f_2(H_a)\breve\psi_{a}}_{\vB^*}\\ &\leq C_m\norm
                         {(1_U-f^a_m)(H^a)\inp{x_a}f_2(H_a)\breve\psi_a }_{\vH};\\
& \quad \quad C_m=\sup_{\eta\in U\setminus K_m}\, \norm{(p_a^2-\lambda+\eta-\i
 0)^{-1} \inp{x_a}^{-1}}_{\vL(L^2(\bX_a),\,\vB(\bX_a)^*)}.   
    \end{split}
  \end{align} As recorded there the
  sequence $(C_m)_1^\infty$ is  bounded, and  by 
   the dominated convergence theorem and the Borel
  calculus for $H^a$  it follows  that
\begin{equation}\label{eq:firDec}
   \norm{(1_U-f^a_m)(H^a)R_a(\lambda+\i
     0) f_2(H_a)\breve\psi_a }_{\vB^*}\to 0\text{ for }m\to \infty.
\end{equation}

  Next,  we  show that 
\begin{equation}\label{eq:1pp}
    \breve\phi_{a}= 1_{\vT_\p(H^a)}(H^a)R_a(\lambda+\i
 0) f_2(H_a)\breve\psi_a \text{ in }\vB^*/\vB_0^*,
  \end{equation} 
\end{subequations}  which 
clearly in the combination 
  of \eqref{eq:firDec}  yields the wanted property 
  $\breve\phi_{a,m}  = o(m^0)$. 

First consider the simplest case where  the summation defining $\breve\phi_{a}$ only involves finitely many channels
  $\alpha$.  We  use that
  $\Gamma_\alpha^{a+}(\lambda)=
  \gamma_a^+(\lambda_\alpha)J^*_\alpha$, as expressed by the  
  trace  operator $ \gamma_a^+(\cdot )=\gamma_a^-(\cdot
  )=\gamma_{a,0}(\cdot )$  for the free Hamiltonian $p_a^2$  at the   positive energy 
  $\lambda_\alpha=\lambda-\lambda^\alpha$ (cf.  Proposition
  \ref{prop:radi-limits-chann22} and Theorem \ref{thm:wave_matrices}),  and  stationary completeness
   of positive energies 
  for   $p^2_a$ (cf.  \cite {Sk3}). The latter assertion implies that
  \begin{equation*}
    (p^2_a-\lambda_\alpha -\i 0)^{-1} \breve\psi_\alpha -
    2\pi\i \, v^{+}_{\alpha,\lambda}
    [\gamma_a^+(\lambda_\alpha)\breve\psi_\alpha ]\in \vB^*_0(\bX_a);\quad
    \breve \psi_{\alpha}=J^*_\alpha\breve\psi_a \in \vB(\bX_a).
  \end{equation*} In combination with  the identity $\Gamma_\alpha^{a+}(\lambda)=
  \gamma_a^+(\lambda_\alpha)J^*_\alpha$ we then conclude \eqref{eq:1pp} by expanding 
  both  sides  into  finite  sums of identifiable terms.

In the remaining  case where there are  infinitely many channels
  $\alpha$ in the summation  we pick,  given in terms by an arbitrary numbering of the  eigenstates of $H^a$ (with  eigenvalues $\lambda^\alpha<\lambda$),  an increasing
 sequence  of finite-rank projections $P^a_j\to 1_{\vT_\p(H^a)}(H^a)$ (strongly). By using a
 modification of  \eqref{eq:2pp} (and arguing similarly) we then
 obtain as in \eqref{eq:firDec}  that
 \begin{equation*}
   \norm{\parb{1_{\vT_\p(H^a)}(H^a)-P^a_j}R_a(\lambda+\i
 0) f_2(H_a)\breve\psi_a }_{\vB^*}\to 0\text{ for }j\to \infty.
\end{equation*} Thanks  to 
this property, a trivial modification  of (\ref{eq:eqBnd0})  and   the
arguments for  the finite summation  case also  the  infinite
summation  formula  follows.
We have proved \eqref{eq:1pp} and therefore  \ref{item:sqe1}.

{\bf \ref{item:sqe2}}  Let us first establish the weaker version of   
\eqref{eq:squeFun9} given by allowing   the
vector  $\phi_a$  to have a dependence on  $j$, say \eqref{eq:squeFun9} with
$\phi_a$ replaced by $\breve\phi_{a,j}$. Eventually
we  show that  this term  may be taken independently of  $j$ in agreement
with \eqref{eq:exactF}.

Consider the sequence of 
  vectors 
  $\psi_{a,n}\in L^2_\infty$ appearing  in Lemma
  \ref{lemma:reduc}. (It is constructed explicitly in  \eqref{eq:n2}.)
  We take  a  subseqence $(\psi_{a,n_j})_j$  such that
  \begin{subequations}
  \begin{equation}\label{eq:deltaerror01}
    \norm[\big]{R_a(\lambda+\i0)\parb{\psi_a-\psi_{a,n_j}}}_{\vB^*}\leq
    1/j;\quad j\in \N.
  \end{equation} For each $j$ the vector $\psi_{a,n_j}\in
  L^2_1$. Thanks to  \ref{item:sqe1}
 we can consequently pick $f^a_j $ supported in $\R\setminus
  \vT_\p(H^a)$ and $\breve\phi_{a,j}\in\vB^*$  whose asymptotics as an element of
    $\vB^*/\vB_0^*$ agrees with the right-hand side of
    \eqref{eq:asres29}, such that 
  \begin{equation}\label{eq:squeFun1} 
     \norm[\big] {\parb{I-f^a_j(H^a)} R_a(\lambda+\i0) 
       \psi_{a,n_j} - \breve\phi_{a,j}}_{\vB^*/\vB_0^*}\leq  1/j;\quad j\in \N.
    \end{equation}  
  \end{subequations}

The combination of \eqref{eq:deltaerror01} and \eqref{eq:squeFun1}
yields that 
\begin{equation}\label{eq:squeFun12} 
     \norm[\big]{R_a(\lambda+\i0) 
       \psi_{a} -f^a_j(H^a)R_a(\lambda+\i0) 
      \psi_{a,n_j} - \breve\phi_{a,j}}_{\vB^*/\vB_0^*}\leq  2/j;\quad j\in \N.
    \end{equation}
 This is \eqref{eq:squeFun9}  with $\breve\psi_{a,j}=\psi_{a,n_j} $,
 except however that (most likely) $\breve\phi_{a,j}$ depends of $j$.

 Hence it remains to  show that indeed we may replace this  vector  $\breve\phi_{a,j}$ 
 by the vector $\phi_a$  from \eqref{eq:exactF}. Introducing
\begin{equation*}
    \breve g_{\alpha,j}=\Gamma_\alpha^{a+}(\lambda)\parb{\breve\psi_{a,j}
      -\psi_a}, 
 \end{equation*} we need to estimate
 \begin{equation}\label{eq:sumError}
   \breve\phi_{a,j}-\phi_{a}=\sum_{\alpha=(a,\lambda^\alpha,
     u^\alpha),\,\lambda^\alpha<\lambda}
   2\pi\i  \,J_\alpha
     v^{+}_{\alpha,\lambda} [\breve g_{\alpha,j}].
 \end{equation} At this point we record that indeed $\breve  g_{\alpha,j}$ is a well-defined element of $\vG_a$, although this is
 not seen directly from  Proposition
  \ref{prop:radi-limits-chann22} (since this result 
 requires $\psi_a\in \vB$, and we only know $\psi_a\in L^2_{1/2}$). 
However thanks to \cite[Remark 9.15 1)]{Sk3} and the  construction of
$\psi_a$ in Appendix \ref{sec:Strong bounds},  the quantity $\breve g_{\alpha,j}$
is still a well-defined element of $\vG_a$ as given by the recipe 
\eqref{eq:restrH} of the proposition.

More generally the  sums \eqref{eq:exactF} and \label{sec:squeezing-argument}
 \eqref{eq:sumError}  are  well-defined elements of $\vB^*/\vB_0^*$ due  to the following  
  version
  of \label{sec:squeezing-argument-1} \label{sec:squeezing-argument-3}
  \eqref{eq:ScatEnergy22330} and \eqref{eq:eqBnd0} (see \cite
  [Subsection 9.2]{Sk3} for details)
  \begin{subequations}
  \begin{equation}\label{eq:ScatEnergy22336}
   C^{-1}\norm[\Big] {\sum_{\lambda^\alpha<\lambda} J_\alpha
     v^{+}_{\alpha,\lambda} [\breve  g_{\alpha,j}]}^2_{\vB^*/\vB_0^*}\leq \sum_{\lambda^\alpha<
  \lambda}\,\norm{\breve  g_{\alpha,j}}^2 \leq 
  \inp{\delta(H_a-\lambda)}_{\breve\psi_{a,j}-\psi_a}<\infty.
\end{equation} In fact  by the construction of
$\breve\psi_{a,j}=\psi_{a,n_j} $ given in  \eqref{eq:n2} it follows that 
\begin{equation}\label{eq:brevNapprox}
   \inp{\delta(H_a-\lambda)}_{\breve\psi_{a,j}-\psi_a}\to
   0\text{ for }j\to \infty.
 \end{equation}   
  \end{subequations}
  The latter  assertion follows from   combining  
 \eqref{eq:BB^*a} with a variation of the bounds 
\cite [(7.3a) and (7.3b)]{Sk3}. Hence  using the same  `propagation
observables' as for \eqref{eq:2boundobtain3377} (used also in
Section \ref{sec::A partition of unity} of course)  we obtain by mimicking the proof
of  \cite [Lemma 7.1]{Sk3} that 
\begin{equation*}\label{eq:kato2}
   Q_l{f_2} (H_a)\delta(H_a-\lambda){f_2} (H_a){ Q_k}^* \in
   \vL(\vH),
\end{equation*} from which we  readily deduce  \eqref{eq:brevNapprox}
using 
\eqref{eq:BB^*a} and \eqref{eq:2boundobtain3377}.

By combining  \eqref{eq:ScatEnergy22336} and \eqref{eq:brevNapprox} we conclude
that $\norm[\big] {\breve\phi_{a,j}-\phi_{a}}_{\vB^*/\vB_0^*}=o(j^0)$,
and therefore  with \eqref{eq:squeFun12} in turn  the desired bound \eqref{eq:squeFun9}.
\end{proof}

\section{Proof of stationary completeness}\label{sec:Proof of stationary completeness}

 The main result of the paper reads. 
\begin{thm}\label{them:stat-compl-enerMain} Suppose  Condition \ref{cond:smooth2wea3n12}.
  Then all $\lambda
  \in \vE=(\min \vT(H),\infty)\setminus {\vT_{\p}(H)}$ are stationary
  complete for $H$.
\end{thm}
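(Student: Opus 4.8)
The plan is to verify the characterizing
condition \eqref{eq:asres29} for an arbitrary fixed
$\lambda\in\vE$ and arbitrary $\psi\in L^2_\infty$, producing for each
$\epsilon>0$ a vector $(g^\epsilon_\beta)_{\lambda^\beta<\lambda}\in\vG$
with
$\norm{R(\lambda+\i0)\psi-2\pi\i\sum_{\lambda^\beta<\lambda}J_\beta
v^+_{\beta,\lambda}[g^\epsilon_\beta]}_{\mathrm{quo}}\le\epsilon$.
The starting point is Lemma \ref{lemma:reduc}, which reduces the problem
to understanding the asymptotics of the finitely many vectors
$R_a(\lambda+\i0)\psi_a$, $a\in\vA'$, where $\psi_a\in L^2_{1/2}$
is approximated in the relevant (weak/Besov) sense by
$\psi_{a,n}\in L^2_\infty$. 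So it suffices to show that for each
$a\in\vA'$ the coset of $R_a(\lambda+\i0)\psi_a$ in $\vB^*/\vB_0^*$
has an expansion of the form on the right-hand side of
\eqref{eq:asres29}. This is an induction on the cluster structure: the
base case $a$ with $\#a$ large (so $H^a$ has few channels) is handled
directly, and the inductive step peels off one more level of clustering.

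\textbf{The inductive step.} Fix $a=a_0\in\vA\setminus\{a_{\min},a_{\max}\}$
and suppose the desired asymptotics is already known for all
$R_b(\lambda+\i0)\phi$ with $\#b>\#a_0$ (equivalently $b\lneq a_0$) and
$\phi\in L^2_{1/2}$ suitably approximable. First I would apply the
squeezing Lemma \ref{lemma:squeezingLemmaS}: it produces standard
support functions $f_m$ supported in $\R\setminus\vT_\p(H^{a_0})$ with
$(I-f_m(H^{a_0}))R_{a_0}(\lambda+\i0)\psi_{a_0}=u_m+o(m^0)$, where
$u_m$ already has the correct channel-wave-operator asymptotics. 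Hence
the remaining task is the energy-localized piece
$f_m(H^{a_0})R_{a_0}(\lambda+\i0)\psi_{a_0}$. Fix such an $f=f_m$; by
Corollary \ref{corollary:non-thresh-analysB} this choice fixes a small
$\kappa>0$ for which the $Q$-bound \eqref{eq:QanontreshB} holds, and by
Corollary \ref{corollary:non-thresh-analysBb} the `$x^{a_0}$-small'
part $\chi_+(|x|)\chi_-(|x^{a_0}|/2\kappa|x|)1_{\R\setminus\vT_\p(H^{a_0})}(H^{a_0})R_{a_0}(\lambda+\i0)\psi_{a_0}$
is $o(1)$ in $\vB^*/\vB_0^*$. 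What survives is
$\chi_+(|x|)\chi_+(|x^{a_0}|/\kappa|x|)f(H^{a_0})R_{a_0}(\lambda+\i0)\psi_{a_0}$,
which lives near $\bX_{a_0}$ away from $\bX_{a_0}$ and away from
thresholds of $H^{a_0}$. On this region I would run the analogue of the
partition-of-unity argument of Section \ref{sec::A partition of unity},
now using the $a_0\ne a_{\max}$ geometry of Subsection
\ref{subsec: Geometric considerations a0 ej}: using the microlocal bound
(the analogue of \eqref{eq:MicroB0} for $H_{a_0}$), the convex functions
$m^{a_0}_{a_0,\kappa}$ and the decomposition
$m^{a_0}_{a_0,\kappa}=\sum_{b\in\vA'_{a_0}}m^{a_0}_{b,\kappa}$ from
Lemma \ref{lemma:m1}\ref{item:9b}, together with the channel
localization operators $M_{b,\kappa}$ and the Hessian-positivity bounds
\eqref{eq:Hes02}--\eqref{eq:2boundobtain33NOT}, one obtains an exact
$\vB^*/\vB_0^*$-identity
$R_{a_0}(\lambda+\i0)(\text{localized }\psi_{a_0})
\simeq\sum_{b\lneq a_0}R_b(\lambda+\i0)\phi_b$
for suitable $\phi_b\in L^2_{1/2}$ approximable by $L^2_\infty$ vectors,
exactly as in Lemma \ref{lemma:reduc}. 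Since each $b\lneq a_0$ satisfies
$\#b>\#a_0$, the induction hypothesis applies and finishes the step.
The crucial new ingredient making this work for fixed (not a.e.) $\lambda$
is that the Borel calculus trivializes the threshold set of $H^{a_0}$
(Corollary \ref{corollary:non-thresh-analysBb}), so no thresholds of the
subsystem obstruct the construction.

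\textbf{Assembling and the main obstacle.} Starting the induction at the
maximal clusterings (where $H^a$ has, say, no eigenvalue below
$\lambda$, so the sum in \eqref{eq:asres29} for that subsystem is empty
and $R_a(\lambda+\i0)\psi_a\in\vB_0^*$ by the microlocal estimates
alone) and running it down to $a_{\max}$, one assembles from
\eqref{eq:quotientFor} the full expansion for $R(\lambda+\i0)\psi$; by
the uniqueness statement recalled after \eqref{eq:pars}, the
coefficients are forced to be $g_\beta=\Gamma_\beta^+(\lambda)\psi$, so
\eqref{eq:ScatEnergy222331} follows together with the reverse (Bessel)
inequality \eqref{eq:Besn}, giving stationary completeness of $\lambda$.
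The main obstacle is bookkeeping the error terms uniformly through the
iteration: at each stage one throws away an $o(m^0)$ (or
$o(n^0)$) remainder in $\vB^*/\vB_0^*$ produced by dominated
convergence/Borel calculus for the sub-Hamiltonian, and one must choose
the successive approximation parameters (the $n$ for $\psi_{a,n}$, the
$m$ for $f_m$, the $\kappa$, and the quadratic-partition index $J$) in
the correct nested order — $\kappa$ after $f_m$, $m$ after $n$ — so that
the accumulated error stays below $\epsilon$; this is precisely the
delicate point flagged at the end of the proof of Lemma
\ref{lemma:squeezingLemmaS}. The second genuinely non-routine point is
justifying, rigorously and not just formally, the second-resolvent-type
identity \eqref{eq:quotientFor} and Lemma \ref{lemma:reduc} given that
$\psi_a$ is only in $L^2_{1/2}$, which is why that verification is
deferred to Appendix \ref{sec:Strong bounds}; the $Q$-bounds listed in
Remark \ref{remark:RQ}\ref{item:8Q} are exactly what is needed to make
those commutator manipulations legitimate.
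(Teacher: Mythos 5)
Your proposal matches the paper's proof in both structure and the specific ingredients invoked: Lemma \ref{lemma:reduc} and \eqref{eq:quotientFor} for the initial decomposition, Lemma \ref{lemma:squeezingLemmaS} to peel off the threshold part of $H^{a_0}$, Corollaries \ref{corollary:non-thresh-analysB}--\ref{corollary:non-thresh-analysBb} to fix $\kappa$ and discard the $x^{a_0}$-small region, the $a_0\neq a_{\max}$ Yafaev constructions with the $Q$-bounds \eqref{eq:2boundobtain33NOT} and \eqref{eq:QanontreshB}, and iteration down the cluster lattice with the approximation bookkeeping deferred to Appendix \ref{sec:Strong bounds} — the paper just phrases the induction as ``when iterated clearly the procedure terminates.'' One small slip is in your description of the base case: when the iteration reaches $a=a_{\min}=0$, the sub-Hamiltonian $H^0=0$ on $\C$ has the eigenvalue $0$, so the channel sum is not empty and $R_0(\lambda+\i0)\psi_0$ is generally not in $\vB_0^*$; rather it contributes the free-channel term of \eqref{eq:asres29} directly via one-body asymptotics (which is why the paper reduces to $a\neq 0$ at the start of the iteration).
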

\begin{proof} Let $\lambda\notin \vT_\p(H)$ and 
  $\psi\in L^2_\infty$  be given. We need to show that 
     $R(\lambda+\i0)\psi\in\vB^*/\vB_0^*$
     has  asymptotics   agreeing  with the right-hand side of
     \eqref{eq:asres29} in
 Section \ref{sec:-body effective potential and a $1$-body radial
   limit}, for which purpose  the  paragraph  right after \eqref{eq:pars} comes in handy.
 By combining \eqref{eq:quotientFor} and Lemma
   \ref{lemma:squeezingLemmaS} \ref{item:sqe2}  we are  left with for each
   $a\in \vA'$ to 
  analyze the asymptotics of the vector
  $f(H^a)R_a(\lambda+\i0)\breve \psi_a $  for a fixed 
  standard support function 
$f$  supported in $\R\setminus \vT_\p(H^a)$ and for a fixed  $\breve\psi_{a}\in L^2_\infty$. We can assume 
that  $a\neq {a_{\min}}$ and  that $\breve\psi_a
=f_1(H_a)\breve\psi_a\in L^2_\infty$. Here  $f_1$ is  a support function given as in
\eqref{eq:QanontreshB}, and we let $f_2\succ f_1$, also given narrowly
supported. With a slight abuse of the  notation of
\eqref{eq:quotientFor}  we abbrevate  $\breve\psi_a=\psi_a$.

With reference to Subsection \ref{subsec: Geometric considerations a0
  ej} we denote from now on $a=a_{0}$ (reserving the symbol `$a$' for
a different  purpose) and repeat the partition of unity procedure of Section \ref{sec::A partition of unity} with $H$ replaced by $H_{a_0}$.

We write with
$u_{a_0}:=R_{a_0}(\lambda+\i0)f_1(H_{a_0})\psi_{a_0}$ (recall that now $\psi_{a_0}
=f_1(H_{a_0})\psi_{a_0}\in L^2_\infty$) 
and for a small $\kappa >0$
\begin{equation}\label{eq:dec3}
  f(H^{a_0})u_{a_0}-\Sigma_{a\in
    \vA_{a_0}'}\,S^ {a_0}_{a,\kappa} \,f(H^{a_0})u_{a_0} +Tf(H^{a_0})u_{a_0}\in \vB_0^*,
\end{equation} where $T=\vO^ {a_0}_\kappa(\inp{x}^0)$ (using the
convention of \eqref{eq:1712022kap}) and 
\begin{equation*}
  S^{a_0}_{a,\kappa}=
  f_2(H_{a})h(M^{a_0}_{a_0,\kappa})M^{a_0}_{a,\kappa}f_2(H_{a_0})\chi_+(B/\epsilon_0)f_2(H_{a_0}).
\end{equation*}  Here $h(t)=\chi^2_+(2t/\epsilon_0)/t$ as in Section
\ref{sec::A partition of unity}  and indeed \eqref{eq:dec3} may be
obtained by mimicking this section  using in addition the partition 
\eqref{eq:deling} and various commutation to capture   `$\kappa$-localization' (the latter  is absent in  Section
\ref{sec::A partition of unity}). We skip the straightforward details
of proof.

 Thanks to  \eqref{eq:LAPbnda}, \eqref{eq:bB2B} and the assumed smallness of  $\kappa >0$  the third term of \eqref{eq:dec3} is in
 $L^2_{-1/2}$, in particular in  $\vB_0^*$ as the whole
 sum. Consequently it suffices to derive  the desired good asymptotics
of  the terms in the middle summation. 

So let us for fixed $a\in  \vA_{a_0}'$ 
look at $S^ {a_0}_{a,\kappa} \,f(H^{a_0})u_{a_0}$. Fixing  any support
function $\tilde{f}\succ f$, also supported in $\R\setminus
\vT_\p(H^{a_0})$, it suffices to derive asymptotics of
$\tilde{f}(H^{a_0}_a)S^ {a_0}_{a,\kappa} \,f(H^{a_0})u_{a_0}$ (here $H^{a_0}_a=-
\Delta_{x^{a_0}} +V^a$).
Recall that we consider only sufficiently small $\kappa >0$ (the smallness may depend on the choice of $\tilde{f}$).

In comparison with Section \ref{sec::A partition of unity}  and Appendix \ref{sec:Strong bounds}  we   need to compute
the operator 
\begin{equation*}
  T_{a,\kappa}:=\i\tilde{f}(H^{a_0}_a)\parb{H_aS^ {a_0}_{a,\kappa} -S^ {a_0}_{a,\kappa} H_{a_0}}f(H^{a_0}).
\end{equation*}
With   a proper substitution of \eqref{eq:deling}  we can basically  proceed as
before and  expand 
 $T_{a,\kappa}$ into  a finite sum of terms    on
 the form  
 \begin{equation*}
 	f_2 (H_a){ Q_k}^* B_k
 	Q_k f_2(H_{a_0}),
 \end{equation*}  where (possibly replaced by  `components')
 \begin{equation*}
  Q_k f_2 (H_{a_0})=
  \vO(\inp{x}^{-1/2}),\,\,Q_k f_2 (H_{a})=
 \vO(\inp{x}^{-1/2}) \mand  B_k\text{ is bounded}.
 \end{equation*} In all cases the appropriate versions of
 \eqref{eq:2boundobtain3377} (with $H$ 
 replaced by $H_{a_0}$)  are at disposal.  Note that we use   Remarks
 \ref{remark:RQ} and  in particular that
 the scheme of Lemma
 \ref{lemma:comSTA} applies and combines well with \eqref{eq:BB^*a},  
 \eqref{eq:Hes02}--\eqref{eq:2boundobtain33NOT} and \eqref{eq:QanontreshB}, as explained in
 Subsection \ref{subsec: Geometric considerations a0 ej}. (Strictly
 speaking here 
 for example \eqref{eq:2boundobtain33NOT} is used with $f_1$ replaced
 by  $f_2$ and not only for  the pair  $(H_{a_0}, f(H^{a_0}))$ but
 also for
 the pair  
$( H_{a}, \tilde{f}(H^{a_0}_a))$. The latter  is legitimate since $\vT_\p(H^{a_0}_a)\subseteq\vT_\p(H^{a_0})$.)  Hence for a suitable  vector
 $\psi^{a_0}_a\in L^2_{1/2}$
 \begin{equation*}
   \tilde{f}(H^{a_0}_a)S^
 {a_0}_{a,\kappa} \,f(H^{a_0})u_{a_0}-R_a(\lambda+\i0)\psi^{a_0}_a \in \vB^*_0,
 \end{equation*} reducing our analysis to the asymptotics of  any such
 term
 $R_a(\lambda+\i0)\psi^{a_0}_a$. 

 Finally we    repeat the
  above procedure now for $R_a(\lambda+\i0)\psi^{a_0}_a$, $a\in  \vA_{a_0}'$,
  rather than for the terms of \eqref{eq:quotientFor} treated above
  (note the appearing   symmetry). 
When  iterated clearly  the  procedure terminates  and eventually we   thus
conclude asymptotics   agreeing  with the right-hand side of
\eqref{eq:asres29}, first up to an arbitrary small correction  and then
(thanks to the
  discussion after \eqref{eq:pars}) without any correction term.
\end{proof}
\begin{remark*} 
The notion of a stationary complete energy $\lambda$   requires
$\lambda\notin \vT_\p(H)$. In particular if  $\lambda\notin \vT(H)$  is an eigenvalue this
$R(\lambda+\i 0)\psi$  ill-defined. However in this case the corresponding
eigenprojection, say denoted $P_\lambda$, 
maps to $L^2_\infty$ and the expression $(H-P_\lambda -\lambda -\i
0)^{-1}$ (along with its imaginary part) does have an interpretation,
cf. \cite {AHS}, and the scattering theories  for $H$ and
$H-P_\lambda$ coincide. Hence we could deal with $\lambda\in
\sigma_{\pp}(H)\setminus \vT(H)$ upon modifying the Parseval formula
  \eqref{eq:ScatEnergy222331} by using $\delta(H-P_\lambda -\lambda)$
  rather than $\delta(H -\lambda)$ in the formula, cf.  \cite[Remark
  9.3]{Sk3}. With this modified definition of stationary completeness
  at such $\lambda $ our procedure of proof applies, yielding similar
  results (we leave out the
  details). On the other hand $\lambda\notin
\vT(H)$ is essential.

\end{remark*}

\appendix
\section{Strong resolvent bounds,  proof of Lemma \ref{lemma:reduc}}\label{sec:Strong bounds}

We shall in this appendix justify our somewhat formal derivation of
\eqref{eq:quotientFor} and prove Lemma \ref{lemma:reduc}. For that purpose we need certain strong
resolvent bounds.

The Mourre estimate \eqref{eq:mourreFull} implies 
\eqref{eq:LAPbnda}, \eqref{eq:BB^*a} as well as \eqref{eq:MicroB0}. Let us here recall
partially stronger versions of these assertions from \cite{AIIS} (used also in
\cite{Sk3}).  

In agreement with the stated Mourre estimate we first
fix  $\lambda\not\in \vT_{\p}(H)$ and any positive $c< d(\lambda,H)$ (the latter given by
\eqref{eq:optimal}).   Then   for a suitable rescaled version of
$r(x)$ (also denoted by $r$) and a small  
open neighbourhood $U$ of $\lambda$, it follows  that
\begin{equation}\label{eq:mourreFull9}
  \forall \text{ real } f\in C_\c^\infty(U):\quad{f}(H)
  \i[H,r^{1/2}Br^{1/2}]{f}(H)\geq
 c\,{f}(H)^2.
\end{equation} 

We let  then $\vN$
denote any compact   neighbourhood of $\lambda$ with 
$\vN\subseteq U$ and introduce
$\vN_\pm=\set{z\in \C\mid \Re z\in \vN\mand 0<\pm \Im z\leq 1}$.

\begin{thm} [\cite{AIIS}]\label{microLoc}
  \begin{enumerate}[1)]
  \item \label{item:1bbbA} There exists $C>0$ such that for all $z\in \vN_\pm$
\begin{equation}\label{eq:BBstjerne}
\|R(z)\|_{\vL(\vB,\vB^*)}
\le C.
 \end{equation} 

\item \label{item:1bbbA2} For any  $s>1/2$ and $\alpha\in (0, \min\{\mu, s-1/2\})$ there 
exists $C>0$ such that 
for  $k\in\{0,1\}$, $z\in \vN_\pm$ and $z'\in \vN_\pm$, respectively,
\begin{align}
\|p^kR(z)-p^kR(z')\|_{\mathcal L(L^2_s,L^2_{-s})}
\le C|z-z'|^\alpha.
\label{eq:171125}
\end{align}
In particular, for any $\lambda\in \vN$ and $s>1/2$ the following boundary values exist:
\begin{align*}
p^kR(\lambda\pm \i 0):=\lim_{\epsilon\to 0_+}p^kR(\lambda\pm\i \epsilon)
\text{\ \ in }\mathcal L(L^2_s,L^2_{-s}),
\end{align*} 
respectively {{(here the limits are taken in the operator-norm topology)}}. The same boundary values are realized (in an extended
form) as 
\begin{align*}
p^kR(\lambda\pm \i 0)= \swslim_{\epsilon\to 0_+} p^kR(\lambda\pm\i \epsilon)
\text{\ \ in }\vL(\mathcal B,\mathcal B^*),
\end{align*}
respectively  {{(here the right-hand side operators act on any $\psi\in \mathcal B$ as
    the $\epsilon$-limits of $p^kR(\lambda\pm\i \epsilon)\psi$ in the weak-star
    topology of $\mathcal B^*$)}}.

\item \label{item:1bbbB} For any $\beta\in (0,\mu)$ and 
$F\in C^\infty(\mathbb R)$ with 
$$\mathop{\mathrm{supp}}F\subset \parbb{-\infty,\sqrt{c}}\mand 
F'\in C^\infty_{\mathrm c}(\mathbb R),$$
there exists $C>0$, such that 
for all $z\in \vN_\pm$ and $\psi\in L^2_{1/2+\beta}$ 
 \begin{equation}\label{eq:MIRO}
\|F(\pm B)R(z)\psi\|_{L^2_{-1/2+\beta}}
\le C\|\psi\|_{L^2_{1/2+\beta}},
 \end{equation} 
respectively.
  \end{enumerate}

\end{thm}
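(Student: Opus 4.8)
The plan is to derive all three assertions from the Mourre estimate \eqref{eq:mourreFull9}, the commutator--resolvent scheme of Lemma \ref{lemma:comSTA}, and the elementary inclusions among the Besov and weighted $L^2$ spaces, following \cite{AIIS} and \cite{Sk3}; the short-range structure enters only through the decay rate $\langle x\rangle^{-1-2\mu}$ of potential commutators granted by Condition \ref{cond:smooth2wea3n12} and the properties \eqref{eq:compa00} of the escape function $r$. Assertion \eqref{eq:BBstjerne} is merely \eqref{eq:BB^*a} read over the compact set $\vN\subseteq U$: the \emph{locally} uniform bound there furnishes a single constant valid on all of $\vN_\pm$. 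For parts 2) and 3) I would first record two routine reductions. Since $f_1\in C_\c^\infty(U)$ may be taken equal to $1$ on a neighbourhood of $\vN$, the operator $(1-f_1(H))R(z)=h_z(H)$ with $h_z(x)=(1-f_1(x))/(x-z)$ is, uniformly in $z\in\vN_\pm$, a bounded function of $H$ with all derivatives bounded (the pole is avoided on $\supp(1-f_1)$), hence $\vO(\langle x\rangle^0)$ and smooth in $z$ by \eqref{82a0}; thus it suffices to treat $R(z)f_1(H)$, i.e.\ to work with energy-localized states. And in \eqref{eq:171125} the case $k=1$ reduces to $k=0$ because $p(H+1)^{-1}=\vO(\langle x\rangle^0)$ and $(H+1)R(z)=I+(z+1)R(z)$, so $pR(z)$ inherits the mapping properties and $z$-regularity of $R(z)$ between the weighted spaces up to lower-order commutator terms.

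For part 3) I would run Lemma \ref{lemma:comSTA} with a propagation observable adapted to the incoming sector. Choose a bounded increasing $\theta\in C^\infty(\R)$ with $\theta'\ge c_F\,|F|^2$ (possible because $F'$ is compactly supported, so $|F|^2$ is supported in some $(-\infty,\sqrt c-\delta)$), and $f\succ f_1$ with $f\in C_\c^\infty(U)$, and set $\Psi_\pm=\pm f(H)\,\langle x\rangle^\beta\,\theta(\pm B)\,\langle x\rangle^\beta\,f(H)$, which is bounded when restricted to $\vB$ for $\beta\in(0,\mu)$. Expanding $\i[H,\Psi_\pm]$ via \eqref{82a0}, the leading term is $\pm f(H)\langle x\rangle^\beta\,\i[H,\theta(\pm B)]\,\langle x\rangle^\beta f(H)$, which by the Mourre estimate \eqref{eq:mourreFull9} on $\supp\theta'$ dominates $c\,f(H)\langle x\rangle^\beta\theta'(\pm B)\langle x\rangle^\beta f(H)$ up to short-range errors $\vO(\langle x\rangle^{-1-2\mu+2\beta})$; the remaining terms, coming from commuting $H$ through the two weights $\langle x\rangle^\beta$, are of the form $Q_0^*(\text{bounded})Q_0$ with $Q_0\sim\langle x\rangle^{-1/2+\beta}$ and can be absorbed (one uses here that $\beta<\mu<\tfrac12$ and that $\pm B$ is bounded above where $\theta'$ lives). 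This yields $\i[H,\Psi_\pm]\ge f_1(H)(Q^*Q-T^*T)f_1(H)$ with $Q\gtrsim\langle x\rangle^{-1/2+\beta}\sqrt{\theta'}(\pm B)\gtrsim\langle x\rangle^{-1/2+\beta}|F(\pm B)|$ and $T\colon L^2_{1/2+\beta}\to\vH$ bounded; Lemma \ref{lemma:comSTA} then gives $\|\langle x\rangle^{-1/2+\beta}F(\pm B)R(z)f_1(H)\psi\|\le C\|\psi\|_{L^2_{1/2+\beta}}$ uniformly on $\vN_\pm$, which is \eqref{eq:MIRO} after the first reduction. The same argument applies to $H_a$, $a\in\vA'$.

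For part 2) I would combine the uniform $\vL(L^2_s,L^2_{-s})$ bound (from \eqref{eq:BBstjerne} and $L^2_s\subset\vB$, $\vB^*\subset L^2_{-s}$) with a `controlled singularity' estimate for the derivative: for $0\le\alpha<\min\{\mu,s-\tfrac12\}$ one shows $\|\langle x\rangle^{-s}R(z)^2\langle x\rangle^{-s}\|_{\vL(\vH)}\le C\,|\Im z|^{\alpha-1}$ on $\vN_\pm$, a standard consequence of the Mourre estimate \eqref{eq:mourreFull9} obtained by interpolating between the uniform weighted bound and $\|R(z)\|_{\vL(\vH)}\le|\Im z|^{-1}$ (again the $N$-body short-range hypotheses make the commutator errors decay fast enough to reach the exponent $\alpha$). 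Since $\partial_z R(z)=R(z)^2$, integrating along a path in $\vN_\pm$ gives $\|R(z)-R(z')\|_{\vL(L^2_s,L^2_{-s})}\le C\,|z-z'|^\alpha$ for $z,z'$ on the same side, hence the norm-boundary values $R(\lambda\pm\i0)$; that these coincide with the strong weak$^*$ limits in $\vL(\vB,\vB^*)$ follows from part 1) and density, and the $p^k$ versions follow from the first reduction. The main difficulty throughout is organizational rather than conceptual: one must transport every weight $\langle x\rangle^\beta$ (resp.\ $\langle x\rangle^\alpha$) through the commutator expansions and verify, using Condition \ref{cond:smooth2wea3n12} and \eqref{eq:compa00}, that each potential-commutator and each error generated by \eqref{82a0} picks up the factor $\langle x\rangle^{-1-2\mu}$ needed to stay subcritical --- dominated by the positive Mourre term and by the limiting absorption principle \eqref{eq:LAPbnda} --- and it is exactly this mechanism that ties the admissible H\"older exponent and the admissible weight $\beta$ to the short-range parameter $\mu$.
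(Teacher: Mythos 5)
The paper does not actually prove Theorem~\ref{microLoc}: it is explicitly presented as a recollection of results from~\cite{AIIS} (``Let us here recall partially stronger versions of these assertions from [AIIS]\dots''), and the appendix proceeds directly to use it. So there is no internal proof to compare your proposal against; I will instead assess the proposal on its own merits.

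Part~1) of your proposal is fine and essentially trivial --- \eqref{eq:BB^*a} asserts \emph{locally uniform} bounds in $\lambda$, and restricting to the compact $\vN\subseteq U$ furnishes a single constant. Part~3) is in the right spirit: a propagation observable of the form $\pm f(H)\inp{x}^\beta\theta(\pm B)\inp{x}^\beta f(H)$ is a natural weighted variant of $\eqref{eq:PropBasic8}$, and the positivity argument via the Mourre estimate on $\set{\pm B<\sqrt c}$ is the correct mechanism. One quibble: as written, Lemma~\ref{lemma:comSTA} requires $\Psi$ bounded on $\vB$, while $\inp{x}^\beta\theta(\pm B)\inp{x}^\beta$ is only $\vO(\inp{x}^{2\beta})$ and the quadratic form $\inp{\Psi_\pm}_{R(z)\psi}$ is not a priori controlled for $\psi\in L^2_{1/2+\beta}$ --- this is exactly the estimate one is trying to prove, so a regularization (e.g.\ truncated weights $\inp{x/n}^\beta$) or a differential-inequality variant of Lemma~\ref{lemma:comSTA} is needed before you can close the loop. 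This is more than bookkeeping and you should flag it.

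Part~2) contains a genuine gap. You assert that $\norm{\inp{x}^{-s}R(z)^2\inp{x}^{-s}}_{\vL(\vH)}\le C\abs{\Im z}^{\alpha-1}$ follows by ``interpolating between the uniform weighted bound and $\norm{R(z)}_{\vL(\vH)}\le\abs{\Im z}^{-1}$''. Complex interpolation applied to the single operator $R(z)$ yields $\norm{\inp{x}^{-\theta s}R(z)\inp{x}^{-\theta s}}\le C\abs{\Im z}^{\theta-1}$, but this does not pass to the \emph{product} $R(z)^2$: no splitting $\inp{x}^{-s}R(z)\inp{x}^{\sigma}\cdot\inp{x}^{-\sigma}R(z)\inp{x}^{-s}$ makes both factors bounded with a gain in $\abs{\Im z}$, and Cauchy's integral estimate for the derivative only returns the trivial $\abs{\Im z}^{-1}$. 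The H\"older continuity of the boundary values is a substantially deeper fact; it requires either the $C^{1,\alpha}$-regularity machinery of Amrein--Boutet de Monvel--Georgescu (with the weight $\inp{A}$ and then a comparison argument to replace it with $\inp{x}$), or an iterated differential-inequality scheme as in \cite{AIIS}. Your proposal, as formulated, does not supply the mechanism that produces the exponent $\alpha<\min\{\mu,s-1/2\}$, so this step cannot be accepted as given.
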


Recall from Section \ref{sec::A partition of unity}  that 
\begin{equation*}
  S_a=  f_2(H_a)h(M_{a_0})M_af_2(H)\chi_+(B/\epsilon_0)f_2(H).
\end{equation*}
We need to justify the  presumably convincing, although formal, computation  
\begin{align}\label{eq:res1}
  \begin{split}
  &S_aR(\lambda+\i0)\psi=\phi_a:=R_a(\lambda+\i0)\psi_a;\\
&\quad \quad \quad \quad \quad \quad \psi_a=S_a\psi-\i T_aR(\lambda+\i0)\psi, \quad T_a=\i\parb{H_aS_a-S_aH}.  
  \end{split}
\end{align}
 
We will prove 
that the function $ \phi_a$ in \eqref{eq:res1} has a
well-defined meaning as an element in $\vB^*$ using an   a priori interpretation 
different from \eqref{eq:BB^*a}. From the outset $ \phi_a$ is taken as
the  weak
limit,   say in
 $L^2_{-1}$,
\begin{align*}
   \phi_a= \lim_{\epsilon\to 0_+}  R_a(\lambda+\i
 \epsilon )  &S_a\psi   -\i \lim_{\epsilon\to 0_+}R_a(\lambda+\i
 \epsilon )T_aR(\lambda+\i \epsilon)\psi \\=R_a(\lambda+\i
 0 )  S_a\psi    -\i \lim_{\epsilon\to 0_+}&R_a(\lambda+\i
 \epsilon )\chi^2_-(2B/\epsilon_0)T_aR(\lambda+\i \epsilon)\psi\\&-\i \lim_{\epsilon\to 0_+}R_a(\lambda+\i
 \epsilon )\chi^2_+(2B/\epsilon_0)T_aR(\lambda+\i \epsilon)\psi. 
\end{align*}  
 It follows from \eqref{eq:BB^*a} (or from Theorem \ref{microLoc}) that the first term to the right is
 an element in $\vB^*$ (since $S_a\psi\in L^2_\infty$). By
 commutation we can write (for the second  term)
 \begin{equation*}
   \chi^2_-(2B/\epsilon_0)T_a=\inp{x}^{-1/2-\mu} B\inp{x}^{-1/2-\mu}\text{ with
   } B \text{ bounded}.
 \end{equation*} This allows us to compute
 \begin{align*}
   \lim_{\epsilon\to 0_+}&R_a(\lambda+\i
 \epsilon )\chi^2_-(2B/\epsilon_0)T_aR(\lambda+\i \epsilon)\psi\\&=\lim_{\epsilon\to 0_+}R_a(\lambda+\i
 \epsilon)\chi^2_-(2B/\epsilon_0)T_aR(\lambda+\i
                                                                      0)\psi\quad(\text{by
                                                                      }\eqref{eq:LAPbnda}\\
&=R_a(\lambda+\i 0)\chi^2_-(2B/\epsilon_0)T_aR(\lambda+\i
                                                                      0)\psi\in
  \vB^*\quad(\text{by }\eqref{eq:BB^*a}.
 \end{align*} We conclude that the first and second terms
 are   on  a  form consistent with \eqref{eq:BB^*a},
\begin{equation*}
      R_a(\lambda+\i
 0)\brp\in \vB^*
      \text{ for some }\brp\in (L^2_{1/2+\mu}\subseteq)\, \vB.
    \end{equation*}
 
The third term is different. 
The  expression
$T_aR(\lambda+\i 0)\psi$ defines  an element
of 
$L_{s}^2$, $s=1/2$, as to be demonstrated below, however  we do not  prove  better decay.
 Thanks to 
\eqref{eq:LAPbnda} and  Theorem \ref{microLoc} (for $H_a$ rather than
for $H$)  there exists the operator-norm-limit
\begin{equation}\label{eq:strongED_+SDY}
\vLlim_{\epsilon\to 0_+}\,\,
  \inp{x}^{-1}R_a(\lambda+\i
 \epsilon)\chi^2_+(2B/\epsilon_0)\inp{x}^{-s},\quad s=1/2.
\end{equation} 
 Next we invoke \eqref{eq:expan3} and further expand 
 $T_a$ into  a sum of terms    on
 the form  $f_2 (H_a){ Q_k}^* B_k
 Q_k f_2 (H)$  for  suitable $Q_k f_2 (H)\in
 \vO(\inp{x}^{-s})$ (or possibly only for `components') and  bounded $B_k$, say for $k=1,\dots, K$, cf. \eqref{eq:rel1and2}. This is
 doable  along the line outlined in Section \ref{sec::A partition of unity}. The most important property reads, cf. \eqref{eq:LAPbnda},
\eqref{eq:2boundobtain338} and 
\eqref{eq:2boundobtain33}, 
\begin{align}\label{eq:2boundobtain3377}
  \begin{split}
  &\sup _{\Im z\neq 0}\norm{Q_k {f_2}
   (H){R(z)}}_{\vL(\vB,\widetilde{\vH})}< \infty,\\
&\sup _{\Im z\neq 0}\norm{Q_k {f_2}
   (H_a){R_a(z)}}_{\vL(\vB,\widetilde{\vH})}< \infty;\quad (\widetilde{\vH}=\vH\text{ or }\widetilde{\vH}=\vH^{d}).  
  \end{split}
\end{align} 
 These bounds 
   allow us
to take  the following weak limit, cf. Remark \ref{remark:RQ} \ref{item:1Q},
\begin{equation}
  \label{eq:weak} Q_k f_2 (H)R(\lambda+\i 0)\psi:=\wv2lim_{\epsilon\to 0_+}\,\, Q_k f_2 (H)R(\lambda+\i \epsilon)\psi.
\end{equation} 
 Using in turn that $\inp{x}^{s}f_2 (H_a){ Q_k}^* B_k$
is bounded, \eqref{eq:strongED_+SDY}--\eqref{eq:weak}  we can compute
the above third term
 as follows. Taking limits in the weak sense in $L^2_{-1}$
\begin{align*}
  -\i &\lim_{\epsilon\to 0_+}R_a(\lambda+\i
 \epsilon )\chi^2_+(2B/\epsilon_0)T_aR(\lambda+\i
        \epsilon)\psi\\
&=-\i \lim_{\epsilon\to 0_+}R_a(\lambda+\i
  0)\chi^2_+(2B/\epsilon_0)T_aR(\lambda+\i \epsilon)\psi\quad\text{
  by \eqref{eq:strongED_+SDY}} \mand \eqref{eq:2boundobtain3377}\\
&=-\i R_a(\lambda+\i 0 )\chi^2_+(2B/\epsilon_0)T_aR(\lambda+\i
  0)\psi\quad\text{ by \eqref{eq:weak} }\\
&=-\i \lim_{\epsilon\to 0_+}R_a(\lambda+\i
 \epsilon )\chi^2_+(2B/\epsilon_0)T_aR(\lambda+\i
        0)\psi\quad\text{ by \eqref{eq:strongED_+SDY} }.
\end{align*} 

We may summerize our interpretation  of \eqref{eq:res1} as
\begin{equation*}
  \psi_a\in (L^2_{s}\subseteq) \,\vH \mand  \phi_a=R_a(\lambda+\i
  0)\psi_a=\lim_{\epsilon\to 0_+}\,R_a(\lambda+\i
  \epsilon)\psi_a\text{ weakly in  }L^2_{-1}.
\end{equation*} Moreover it follows from \eqref{eq:2boundobtain3377}, \eqref{eq:weak} and a commutation that the vector $\phi_a$ defined this way is actually an element of $\vB^*$, although  this is of course obvious from the representation
\begin{equation*}
\phi_a= \lim_{\epsilon\to 0_+}  S_aR(\lambda+\i
\epsilon ) \psi. 
\end{equation*}  

Anyway  the above considerations combined  with those  in the beginning of Section \ref{sec::A partition of unity} as well as with  Theorem \ref{microLoc} finish our  justification of   \eqref{eq:quotientFor}.

\begin{proof}[Proof of Lemma \ref{lemma:reduc}]
We need  to construct a sequence of vectors $\psi_{a,n}\in L^2_\infty$ such that 
      $\psi_{a,n} \to \psi_a\in L^2_{1/2}$ as well as 
    \begin{equation}\label{eq:com2}
      R_a(\lambda+\i0)\psi_{a,n}\to R_a(\lambda+\i0)\psi_a \text{ in
      }\vB^*\text { for }n \to \infty.
    \end{equation}
First note  that 
the above discussion  allows us to represent,  in terms of `$Q$-operators' 
$ Q_k$ (and  bounded $B_k$), 
\begin{subequations}
\begin{equation}
  \label{eq:form}
  \psi_a=
\sum_{k} \,f_2 (H_a){ Q_k}^* B_k
 Q_k f_2 (H) R(\lambda+\i
        0)\psi+\brp;
\quad \brp\in\vB.
\end{equation}   We are let to define,
cf.
 the proof of \cite [Lemma  9.12]{Sk3},
\begin{equation}\label{eq:n2}
   \psi_{a,n}=
  \sum_{k} \,f_2 (H_a){ Q_k}^*\chi_n B_k
  Q_k f_2 (H) R(\lambda+\i
  0)\psi+\chi_n\brp;\quad \chi_n=\chi_-(\inp{x}/n).
\end{equation} 
\end{subequations}
 Obviously $\chi_n\brp\to \brp$ in $\vB \,(\subseteq L^2_{1/2})$, and due to  the fact  that
 $\chi_n B_k \to  B_k $ strongly on $\widetilde{\vH}$   it follows from  (\ref{eq:weak}) that $L^2_{\infty}\ni\psi_{a,n} \to \psi_a\in L^2_{1/2}$. Moreover it follows from \eqref{eq:2boundobtain3377},   
 \eqref{eq:BB^*a}  and the mentioned strong convergence
  that 
\begin{equation*}
  \sup_{\epsilon>0}\,\norm [\big]{R_a(\lambda+\i
  \epsilon) \parb{\psi_{a}-
\psi_{a,n}}}_{\vB^*}\to 0\text{ for }n\to \infty.
\end{equation*}
 This  uniform convergence and \eqref{eq:BB^*a}  yield
 \eqref{eq:com2}.
\end{proof}

\end{document}